\newtheorem{theorem}{Theorem}
\newtheorem{lemma}{Lemma}
\newtheorem{claim}{Claim}
\newtheorem{corollary}{Corollary}
\newtheorem{definition}{Definition}
\newtheorem{observation}{Observation}
\newtheorem{proposition}{Proposition}
\theoremstyle{definition}
\theoremstyle{definition}
\newcommand{\NP}{\ensuremath{\mathsf{NP}}\xspace}
\newcommand{\WOH}{\ensuremath{\mathsf{W[1]}}-hard\xspace}
\newcommand{\WO}{\ensuremath{\mathsf{W[1]}}\xspace}
\newcommand{\FPT}{\ensuremath{\mathsf{FPT}}\xspace}
\let\oldlambda\lambda
\renewcommand{\lambda}{\ensuremath{\oldlambda}\xspace}
\let\oldalpha\alpha
\renewcommand{\alpha}{\ensuremath{\oldalpha}\xspace}
\let\oldDelta\Delta
\renewcommand{\Delta}{\ensuremath{\oldDelta}\xspace}
\renewcommand{\AA}{\ensuremath{\mathcal A}\xspace}
\newcommand{\BB}{\ensuremath{\mathcal B}\xspace}
\newcommand{\CC}{\ensuremath{\mathcal C}\xspace}
\newcommand{\DD}{\ensuremath{\mathcal D}\xspace}
\newcommand{\GG}{\ensuremath{\mathcal G}\xspace}
\let\mydelta\delta
\renewcommand{\delta}{\ensuremath{\mydelta}\xspace}
\let\mytau\tau
\renewcommand{\tau}{\ensuremath{\mytau}\xspace}
\let\mytheta\theta
\renewcommand{\theta}{\ensuremath{\mytheta}\xspace}
\let\mygamma\gamma
\renewcommand{\gamma}{\ensuremath{\mygamma}\xspace}
\let\myGamma\Gamma
\renewcommand{\Gamma}{\ensuremath{\myGamma}\xspace}
\newcommand{\cO}{\ensuremath{\mathcal{O}}\xspace}
\newcommand{\fb}{\textup{\textsc{Fair Bisection}}\xspace}
\newcommand{\sr}{\ensuremath{r^\circ}\xspace}
\newcommand{\cT}{\ensuremath{\mathcal{T}}\xspace}
\newcommand{\cA}{\ensuremath{\mathcal{A}}\xspace}
\newcommand{\mb}{\textsc{Minimum Bisection}\xspace}
\newcommand{\Oh}{\mathcal{O}}
\newcommand{\polyn}{n^{\Oh(1)}}
\newcommand{\true}{\textup{{\sf True}}\xspace}
\newcommand{\false}{\textup{{\sf False}}\xspace}
\newcommand{\child}{\textup{{\sf child}}\xspace}
\newcommand{\hgt}{\mathsf{ht}}
\newcommand{\lr}[1]{\left(#1\right)}
\newcommand{\LR}[1]{\left\{#1\right\}}
\newcommand{\faml}{\textup{\textsf{fam}}\xspace}
\newcommand{\csp}{\textup{\textsc{Binary Constrainted Satisfaction Problem}}\xspace}
\newcommand{\mdss}{\textup{\textsc{Multi-Dimensional Subset Sum}}\xspace}
\newcommand{\mdp}{\textup{\textsc{Multi-Dimensional Partition}}\xspace}
\newcommand{\cC}{\mathcal{C}}
\newcommand{\cV}{\mathcal{V}}
\newcommand{\cU}{\mathcal{U}}
\newcommand{\bbZp}{\mathbb{Z}_{\ge 0}}
\newcommand{\fbtp}{\texttt{FindBalancedTP}}
\newcommand{\parent}{\mathscr{P}}
\title{Parameterized Complexity of Fair Bisection\\(FPT-Approximation meets Unbreakability)\thanks{T.\ Inamdar is supported by ERC research and innovation programme (grant agreeement no. 819416). D. Lokshtanov and V. Surianarayanan are supported by NSF award CCF-2008838. S.\ Saurabh is supported by ERC research and innovation programme (grant agreeement no. 819416) and Swarnajayanti Fellowship grant DST/SJF/MSA01/2017-18.}

\author{
	Tanmay Inamdar\thanks{
		Department of Informatics, University of Bergen, Norway.}
	\and
	Daniel Lokshtanov\thanks{University of California Santa Barbara, USA.}
	\and
	Saket Saurabh\addtocounter{footnote}{-2}\footnotemark{}\addtocounter{footnote}{2} \thanks{Institute of Mathematical Sciences, Chennai, India.}
	\and
	Vaishali Surianarayanan \addtocounter{footnote}{-3}\footnotemark{}}
} 
\date{}
\begin{document}

\maketitle

\begin{abstract}
In the Minimum Bisection problem input is a graph $G$ and the goal is to partition the vertex set into two parts $A$ and $B$, such that $||A|-|B|| \leq 1$ and the number $k$ of edges between $A$ and $B$ is minimized. 
The problem is known to be \NP-hard, and assuming the Unique Games Conjecture even \NP-hard to approximate within a constant factor [Khot and Vishnoi, J.ACM'15]. On the other hand, a $\cO(\log n)$-approximation algorithm [R\"{a}cke, STOC '08] and a 
parameterized algorithm [Cygan et al., ACM Transactions on Algorithms '20] running in time $k^{\cO(k)}n^{\cO(1)}$ is known. 

The Minimum Bisection problem can be viewed as a clustering problem where edges represent similarity and the task is to partition the vertices into two equally sized clusters while minimizing the number of pairs of similar objects that end up in different clusters. 
Motivated by a number of egregious examples of unfair bias in AI systems, many  fundamental clustering problems have been revisited and re-formulated to incorporate fairness constraints. 
In this paper we initiate the study of the Minimum Bisection problem with fairness constraints. 
Here the input is a graph $G$, positive integers $c$ and $k$, a function $\beta:V(G) \rightarrow \{1, \ldots, c\}$  that assigns a color $\beta(v)$ to each vertex $v$ in $G$, and $c$ integers $r_1,r_2,\cdots,r_c$. The goal is to partition the vertex set of $G$ into two almost-equal sized parts $A$ and $B$  with at most $k$ edges between them, such that for each color $i\in \{1, \ldots, c\}$, $A$ has exactly $r_i$ vertices of color $i$. 
Each color class corresponds to a group which we require the partition $(A, B)$ to treat fairly, and the constraints that $A$ has exactly $r_i$ vertices of color $i$ can be used to encode that no group is over- or under-represented in either of the two clusters. 

We first show that introducing fairness constraints appears to make the Minimum Bisection problem qualitatively harder. Specifically we show that unless \FPT{}=\WO the problem admits no $f(c)n^{\cO(1)}$ time algorithm even when $k=0$.
On the other hand, our main technical contribution shows that is that this hardness result is simply a consequence of the very strict requirement that each color class $i$ has {\em exactly} $r_i$ vertices in $A$. In particular we give an $f(k,c,\epsilon)n^{\cO(1)}$ time algorithm that finds a balanced partition $(A, B)$ with at most $k$ edges between them, such that for each color $i\in [c]$, there are at most $(1\pm \epsilon)r_i$ vertices of color $i$ in $A$. 

Our approximation algorithm is best viewed as a proof of concept that the technique introduced by [Lampis, ICALP '18] for obtaining \FPT-approximation algorithms for problems of bounded tree-width or clique-width can be efficiently exploited even on graphs of unbounded width.
The key insight is that the technique of Lampis is applicable on tree decompositions with unbreakable bags (as introduced in~[Cygan et al., SIAM Journal on Computing '14]).
An important ingredient of our approximation scheme is a combinatorial result that may be of independent interest, namely that for every $k$, every graph $G$ admits a tree decomposition with adhesions of size at most $\Oh(k)$, unbreakable bags, and logarithmic depth. 

\end{abstract}

\section{Introduction}
Clustering is one of the most fundamental problems in computer science. In a clustering problem, we are typically interested in dividing the given collection of data points into a group of \emph{clusters}, such that the set of data points belonging to each cluster are more ``similar'' to each other, as compared to the points belonging to other clusters. Depending on the specific setting and application, there are a number of ways to model this abstract task of clustering as a concrete mathematical problem. We refer the reader to surveys such as ~\cite{xu2015comprehensive,rokach2009survey,blomer2016theoretical} for a detailed background and literature on the topic. 

In one such model of clustering, the input is represented as a simple, undirected graph, and the existence of an edge between a pair of vertices denotes that the two vertices are related to, or similar to, each other. For example, this is how one models social networks as graphs \cite{otte2002social} -- the set of vertices corresponds to people, and an edge represents that the two people are friends with each other. In this setting, the classical \mb problem can be thought of as a clustering problem \cite{YanH94,ChenZJ05} -- we are interested in finding two size-balanced clusters of vertices, such that the number of edges going across the two clusters is minimized. More formally, in \mb problem, we are given a graph $G = (V, E)$ on $n$ vertices, and a non-negative integer $k$, and the goal is to determine whether there exists a balanced edge cut $(A, B)$ of order $k$. Here, an {\em edge cut} $(A, B)$ is a partition of $V(G)$ into two non-empty subsets $A$ and $B$, an edge cut is {\em balanced} if $||A| - |B|| \leq 1$, and the {\em order} of the cut $(A, B)$ is the number of edges with one endpoint in $A$ and the other in $B$. The \NP-completeness of \mb has long been known \cite{GareyJ79}, and it is extensively studied from the perspective of approximation and parameterized algorithms. \mb admits a logarithmic approximation in polynomial time \cite{Racke08}, and it is hard to approximate within any constant factor, assuming the Unique Games Conjecture \cite{KhotV15}. In the realm of Parameterized Algorithms, one can solve the problem exactly in time $2^{\Oh(k \log k)} \cdot \polyn$, i.e., it is Fixed-Parameter Tractable (\FPT) parameterized by $k$ \cite{CyganLPPS19,CyganKLPPSW21}. 

More recently, the notion of \emph{fairness} has gained prominence in the literature of clustering algorithms -- and algorithm design in general. This is motivated from the fact that, often the real-life data reflects unconscious biases, and unless the algorithm is explicitly required to counteract these biases, the output of the algorithm may have real-life consequences that are \emph{unfair} ((see, e.g., \cite{grother2014face,obermeyer2019dissecting,Dastin.2017}). Researchers have proposed different models of fairness for the traditional center-based clustering problems, such as $k$-\textsc{Median/Means/Center}. These models of fairness can be broadly classified into two types -- \emph{individual fairness}, and \emph{group fairness}. At a high level, individual fairness requires that the solution treats each of the individuals (a point) in a fair way, e.g., every point has a cluster-center ``nearby''  \cite{ChenFLM19}. On the other hand, in the group fairness setting, the set of points is typically divided into multiple colors, where each color represents, say a particular demographic (such as gender, ethnicity etc.). In this setting, the fairness constraints are represented in terms of the colors as a group. There are multiple notions of group fairness (see, e.g., \cite{Bandyapadhyay0P19,ChenFLM19,FriedlerSV21,MakarychevV21,GhadiriSV21,JiaSS20}), but to the specific interest to us is the \emph{color-balanced clustering} model, studied in \cite{Schmidt19,HuangJV19,BandyapadhyayFS21}. Roughly speaking, in this setting we want the ``local proportions'' of all colors in every cluster to be approximately equal to their ``global proportions''.
Inspired from this \emph{color-balanced} notion of fairness, study the following \emph{fair} version of \mb.
\begin{tcolorbox}[colback=white!5!white,colframe=gray!75!black]
	\fb 
	\\\textbf{Input:} An instance $(G,c,k,\sr,\chi)$, where
	\begin{itemize}
		\item $G$ is an unweighted graph
		\item $c$ and $k$ are positive integers 
		\item $\chi:V(G)\rightarrow c$ is a coloring function on $V(G)$ using at most $c$ colors
		\item $\sr=(r_1,\cdots,r_c)$ is a $c$ length tuple of positive integers
	\end{itemize}
	\textbf{Question:} Does there exist an edge cut $(A,B)$ of $G$ of order at most $k$ having exactly $r_i$ vertices of color $i$ in $A$ for each $i\in [c]$. 
\end{tcolorbox}
%
%
In this problem formulation the color classes $i \in \{1, \ldots, c\}$ are protected groups which are required to be treated fairly by the clustering algorithm. The imposed fairness constraint for group $i$ is that, in the edge cut $(A, B)$, the set $A$ contains precisely $r_i$ vertices colored $i$.
We will say that an edge cut that satisfies the fairness constraints imposed by the tuple $\sr$ is $\sr$-{\em fair}. 
%
Thus, when $r_i$ is set to be precisely half of the number $c_i$ of vertices colored $i$ an $\sr$-fair edge cut must evenly split each color class across the two sides $A$ and $B$.

\paragraph*{Our Results.} %
It is quite easy to see that the existing parameterized algorithms~\cite{CyganKLPPSW21,CyganLPPS19} for \mb directly generalize to a $n^{O(c)}k^{O(k)}$ time algorithm for \fb{}\footnote{A formal proof of this claim is a corollary of our Theorem~\ref{thm:mainapprox}}. 
Therefore, the first natural question is whether it is possible to eliminate the dependence on $c$ in the exponent of $n$ in the running time. 
Our first result (Theorem~\ref{thm:fair-bisection-hardness}) is that, assuming $\FPT{}\neq\WO{}$, an $f(c)n^{O(1)}$ time algorithm is not possible even when $k=0$. In fact, this hardness result holds even in the special case where the vertices of each color are required to be evenly split across both partitions (in particular, when $2r_i = c_i$ for every $i$).

Our main technical contribution (Theorem~\ref{thm:mainapprox})  is to show that this hardness result is quite brittle. 
Indeed, the requirement that each color class $i$ have {\em exactly} $r_i$ vertices in $A$ is probably much too strong in the color-balanced fairness setting. 
We are satisfied even if the number of vertices of each color class is sufficiently close to the desired target number. 
We will say that an edge cut $(A, B)$ is $(\epsilon, \sr)$-{\em fair} if $A$ contains no more than $r_i(1+\epsilon)$ of vertices colored $i$ and $B$ contains no more than $(c_i-r_i)(1+\epsilon)$ vertices colored $i$.
We show (in Theorem~\ref{thm:mainapprox}) that there exists an algorithm that takes as input an instance $(G,c,k,\sr,\chi)$, together with an $\epsilon > 0$, runs in time $f(\epsilon, k, c)n^{\cO(1)}$, and if $G$ has a balanced \sr-fair edge cut $(\hat{A}, \hat{B})$ 



\paragraph*{Our Methods.}
The hardness result of Theorem~\ref{thm:fair-bisection-hardness} is a fairly straightforward parameterized reduction from {\sc Multi-Dimensional Subset Sum} parameterized by the dimension\footnote{The hardness of {\sc Multi-Dimensional Subset Sum} parameterized by the dimension is folklore, but we were unable to find a reference, so for completeness we provide a proof.}, whose main purpose is to put the parameterized approximation scheme of Theorem~\ref{thm:mainapprox} in context. We only discuss here the methods in the proof of of Theorem~\ref{thm:mainapprox}.

At a {\em very} high level the algorithm of  Theorem~\ref{thm:mainapprox} is the combination of two well-known techniques in parameterized algorithms: dynamic programming over tree decompositions with unbreakable bags (introduced by Cygan et al.~\cite{CyganLPPS19}), and the geometric rounding technique of Lampis~\cite{Lampis14} for parameterized approximation schemes for problems on graphs of bounded tree-width or clique-width. 
The conceptual novelty in (and perhaps the most interesting technical aspect of) our work is to realize that Lampis' technique can be applied even to dynamic programming algorithms over tree decompositions with unbounded width to yield approximation schemes for parameterized problems on general graphs.
Executing on this vision requires a few non-trivial technical insights, which we will shortly highlight. However, to describe these technical insights in more detail we first give a brief description of the two techniques that we combine.


\paragraph*{Lampis' Geometric Rounding Technique.}
We first discuss how the technique of Lampis~\cite{Lampis14} applies to tree decompositions of bounded width.
A {\em tree decomposition} of a graph $G$
is a pair $(T, \beta)$ where $T$ is a tree and $\beta$ is a function that assigns to each vertex $t \in V(T)$ a vertex set $\beta(t) \subseteq V(G)$ (called a {\em bag}) in $G$. 
To be a tree decomposition the pair $(T, \beta)$ must satisfy the tree-decomposition axioms: 
{\em (i)} for every $v \in V(G)$ the set $\{t \in V(T) ~:~ v \in \beta(t)\}$ induces a non-empty and connected subgraph of $T$, and
{\em (ii)} for every edge $uv \in E(G)$ there exists a $t \in V(T)$ such that $\{u,v\} \subseteq \beta(t)$.
The {\em width} (or tree-width) of a decomposition $(T, \beta)$ is defined as $\max_{t \in V(T)} |\beta(t)| - 1$.

Roughly speaking, Lampis' technique considers dynamic programming (DP) algorithms over a tree decomposition $(T, \beta)$ of $G$ of width $k$. 
In such an algorithm there is a DP-table for every node $t$ of the decomposition tree, and suppose that the entries in these tables are indexed by vectors in $\left\{1, 2, \ldots, n\right\}^d$ (for some integer $d$)
, where $n$ is the number of vertices of $G$. 
To decrease the size of the DP tables and thereby also the running time of the algorithm, one ``sparsifies'' the DP table to only consider entries in $S^d$, where $S = \left\{\lfloor(1+\delta)^i\rfloor : i \ge 0\right\}$. This makes the size of the DP table upper bounded by $(\log_{1+\delta} n)^{\Oh(d)}$, at the cost of introducing a multiplicative error of $(1+\delta)$ in every round of the DP algorithm (since now vectors in $\left\{1, 2, \ldots, n\right\}^d$ are ``approximated'' by their closest vector in $S^d$). If the decomposition tree $T$ has depth $\Oh(\log n)$ the dynamic program only needs $\Oh(\log n)$ rounds, and so the total error of the algorithm is a multiplicative factor of $(1+\delta)^{\Oh(\log n)}$. Setting $\delta = \epsilon/\log^2 n$ gives the desired trade-off between DP table size (and therefore running time) and accuracy. Luckily, every tree decomposition of width $k$ can be turned into a tree decomposition of width at most $3k+2$ and depth $\Oh(\log n)$~\cite{BodlaenderH98} and so this approach works on all graphs of tree-width $k$.

\paragraph{Tree Decompositions with Unbreakable Bags.} 
We now turn to the technique of Cygan et al.~\cite{CyganLPPS19} for \mb{}, namely dynamic programming over tree decompositions with small adhesions and unbreakable bags. We again need to define a few technical terms. An {\em adhesion} of a tree decomposition $(T, \beta)$ of a graph $G$ is a set $\beta(u) \cap \beta(v)$ for an edge $uv \in E(T)$. The {\em adhesion size} of a tree-decomposition is just the maximum size of an adhesion of the decomposition. 
A tree decomposition $(T, \beta)$ is said to have $(q,k)$-{\em unbreakable bags} if for every bag $\beta(t)$ of the decomposition and every edge-cut $(A, B)$ of order at most $k$ in $G$ it holds that $\min(|A \cap \beta(t)|, |B \cap \beta(t)|) \leq q$.

The main engine behind the algorithm of Cygan et al.~\cite{CyganLPPS19} (see also~\cite{CyganKLPPSW21}) is a structural theorem that for every graph $G$ and integer $k$ there exists a tree decomposition $(T, \beta)$ of $G$ with adhesion size at most $k$ and $(k+1, k)$-unbreakable bags. 
%
This is coupled with an observation that even though this tree decomposition might have unbounded tree-width, we can still do dynamic programming over this tree decomposition, keeping a DP table for every {\em adhesion} of the tree decomposition, rather than for every bag. 
However, while tree-width based DP algorithms utilize a simple recurrence to calculate the DP table at a bag from the tables of its children, Cygan et al.~\cite{CyganLPPS19} need to turn to a clever ``randomized contraction'' (see~\cite{ChitnisCHPP16}) based algorithm to compute the DP table for an adhesion from the DP tables of its children. 

\paragraph{Combining Tree Decompositions with Unbreakable Bags and Geometric Rounding.} 
As we mentioned eariler, the technique of Cygan et al.~\cite{CyganLPPS19} for \mb{} generalizes in a relatively straightforward way, to give a $f(k)n^{\cO(c)}$ time algorithm for \fb{}. Here we do dynamic programming over the tree decomposition of $G$ with adhesions of size $k$ and $(k+1, k)$-unbreakable bags. We have a DP table for every adhesion that is indexed by a vector in $[n]^c$ (this vector describes partial solutions, where the $i$'th element of the vector is the number of vertices of color $i$ that have so far been put on the $A$ side in this partial solution). 

We want to apply Lampis' geometric rounding technique and ``sparsify'' the DP table to only consider entries in $S^c$, where $S = \left\{\lfloor(1+\delta)^i\rfloor : i \ge 0\right\}$. 
There are a few technical obstacles to realizing this plan, that we overcome. The most important one of them is that the depth reduction theorem of Bodlaender and Hagerup~\cite{BodlaenderH98} only applies to tree decompositions of bounded width, therefore it is not immediate how to obtain a tree decomposition with small adhesions, unbreakable bags and logarithmic depth. 
A closer inspection of the proof sketch of Bodlaender and Hagerup~\cite{BodlaenderH98} reveals that a tree decomposition with adhesions of size $k$ and $(k+1,k)$-unbreakable bags can be turned into a 
tree decomposition with adhesions of size $\cO(k)$, and logarithmic depth, such that each bag of the new decomposition is the union of a constant number of bags of the old one (the bags in this new decomposition do {\em not} need to themselves be unbreakable). Nevertheless we prove that some careful modifications to this tree decomposition are sufficient to obtain a tree decomposition with adhesions of size $\cO(k)$, logarithmic depth, and $(\cO(k),k)$-unbreakable bags (see Theorem~\ref{theorem:low_depth_ub_td_poly}). 
We believe that Theorem~\ref{theorem:low_depth_ub_td_poly} will be a useful tool for future applications of Lampis' geometric rounding technique to tree decompositions with unbreakable bags.

\paragraph*{Organization of the Paper.} We begin by defining the basic notions on graphs and tree decompositions in \Cref{sec:prelims}. In \Cref{sec:ldtd}, we prove \Cref{theorem:lowdepth-near-unbreakable-treedecomp} that shows how to obtain logarithmic-depth unbreakable tree decompositions. Then, in \Cref{sec:algo}, we use such a tree decomposition to design our exact and approximate algorithms. In \Cref{sec:hardness}, we sketch the proof of our hardness result, which shows that \fb is \WOH parameterized by $c$ even when $k = 0$. Finally, in \Cref{sec:conclusion}, we give concluding remarks and future directions.


\section{Preliminaries} \label{sec:prelims}
For an integer $k$, we denote the set $\{1,2,\ldots,k\}$ by $[k]$.
For a graph $G$, {\em an edge cut} is a pair $A, B \subseteq V(G)$ such that $A\cup B=V(G)$ and $A\cap B=\emptyset$. The order of an edge cut $(A,B)$ is $|E(A,B)|$, that is, the number of edges with one endpoint in $A$ and the other in $B$. For a subset $X \subseteq V(G)$, let $G \setminus X$ denote the graph $G[V(G) \setminus X]$. For an edge cut $(A, B)$, and a subset $X \subseteq V(G)$, the cut \emph{induced} on $X$ by $(A, B)$ is $(A \cap X, B \cap X)$.

\begin{definition}[unbreakability]
A set $X \subseteq V(G)$ is \emph{$(q,s)$-edge-unbreakable} if every edge cut $(A,B)$ of order at most $s$
satisfies $|A \cap X| \leq q$ or $|B \cap X| \leq q$.
\end{definition}

For a rooted tree $T$ and vertex $t\in V(T)$, we denote by $T_t$ the subtree of $T$ rooted at $t$. For a rooted tree $T$ and a non-root vertex $t \in V(T)$, we denote the parent of $t$ by $\parent(t)$. The depth of a tree $T_t$ is the maximum length of a $t$ to leaf path in $T_t$. For a node $t$, we denote $\hgt_T(t)$ to be the the depth of the subtree $T_t$ rooted at $t$ in $T$.

Another important notion that we need is of tree decomposition where bags are ``highly connected'', i.e., unbreakable. Towards this we first define tree decomposition, tree-width and associated notions and notations that we make use of. For a rooted tree $T$ and vertex $v \in V(T)$ we denote by $T_v$ the subtree of $T$ rooted at $v$. We refer to the vertices of $T$ as nodes. 

A {\em tree decomposition} of a graph $G$ is a pair $(T,\beta)$ where $T$ is a rooted tree and $\beta$ is a function from $V(T)$ to $2^{V(G)}$ such that the following three conditions hold. 
\begin{description}
\setlength{\itemsep}{-2pt}
\item[(T1)] $\underset{t\in V(T)}{\bigcup}\beta(t) = V(G)$;
\item[(T2)] For every $uv\in E(G)$, there exists a node $t\in T$ such that $\beta(t)$ contains both $u$ and $v$; and 
\item[(T3)] For every vertex $u\in V(G)$, the set $T_u=\{t\in V(T): u\in \beta(t)\}$, i.e., the set of nodes whose corresponding bags contain $u$, induces a connected subtree of $T$. 
\end{description}
For every $t \in V(T)$ a set $\beta(t) \subseteq V(G)$, is called a \emph{bag}.  We can extend the function $\beta$ to subsets of $V(T)$ in the natural way: for a subset $X \subseteq V(T)$, $\beta(X) \coloneqq \bigcup_{x \in X} \beta(x)$. 

For $s,t \in V(T)$ we say that \emph{$s$ is a descendant of $t$}
or that \emph{$t$ is an ancestor of $s$} if $t$ lies on the unique path from $s$ to the root;
note that a node is both an ancestor and a descendant of itself.
%
By $\textsf{child}(t)$, we denote the set of children of $t$ in 
$T$. For any $X \subseteq V(T)$, define $G_X \coloneqq G[\cup_{t\in X}\beta(V(T_t))]$. 

We define an {\em adhesion} of an edge $e = (t, t_0) \in E(T)$ to be the set $\sigma(e) \coloneqq \beta(t) \cap \beta(t_0)$, and an adhesion of $t \in V (T)$ to be $\sigma(t) \coloneqq \sigma({t, \parent(t)})$, or $\sigma(t) = \emptyset$ if the parent of $t$ does not exist, i.e., when $t$ is the root of $T$.
We define the following notation for convenience:
$$\gamma(t) \coloneqq \bigcup_{s:\text{ descendant of }t} \beta(s)$$
$$\alpha(t) \coloneqq \gamma(t) \backslash \sigma(t), \qquad G_t \coloneqq G[\gamma(t)] - E(G[\sigma(t)]).$$

We say that a rooted tree decomposition $(T, \beta)$ of $G$ is \emph{compact} if for every node $t \in V (T)$ for which $\alpha(t) \neq \emptyset$ we have that $G[\alpha(t)]$ is
connected and $N_G(\alpha(t)) = \sigma(t)$.

\subsection{Splitters} 
We start by defining the notion of splitters. We will need this for our color coding based dynamic programming algorithm.

\begin{definition}[\cite{NaorSS95}]
\label{splitterdefn}
An $(n,k,\ell)$ splitter $\mathcal{F}$ is a family of functions from $[n]\rightarrow[\ell]$ such that for all  $S\subseteq [n]$, $|S|=k$, there exists a function $ f\in \cal F$ that splits $S$ evenly. That is, for all $1\leq j,j'\leq \ell$, $|f^{-1}(j')\cap S|$ and $|f'^{-1}(j)\cap S|$ differ by at most one.
\end{definition}

We will need following algorithm to compute splitters with desired parameters. 
\begin{theorem}[\cite{NaorSS95}]\label{splitter2}
For all $\ n,k\geq 1$ one can construct an $(n,k,k^2)$ splitter family of size $k^{\cO(1)}\log n$ in time $k^{\cO(1)}n\log n$.
\end{theorem}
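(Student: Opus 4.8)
Since this is the splitter construction of \cite{NaorSS95}, I would prove it by following their approach. The first observation is that because the target range has size $\ell=k^2>k$, a function $f\colon[n]\to[k^2]$ splits a $k$-set $S$ evenly if and only if $f$ is injective on $S$: then every bucket receives $0$ or $1$ elements of $S$, while if some bucket received at least two then, since there are $k^2>k$ buckets, another bucket would receive $0$. Hence it suffices to construct a family $\mathcal{F}$ of functions $[n]\to[k^2]$, of size $k^{\cO(1)}\log n$ and listable (one succinct description per member) in time $k^{\cO(1)}n\log n$, such that every $S\in\binom{[n]}{k}$ is mapped injectively by some $f\in\mathcal{F}$.

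The construction is a two-phase composition. \emph{Phase 1 (universe reduction).} For every prime $p$ in a range $R=\{1,\dots,M\}$ to be fixed, include the map $x\mapsto x\bmod p$ into a family $\mathcal{F}_1$ of functions $[n]\to[M]$. For a fixed $S$, this map fails to be injective on $S$ only if $p$ divides one of the $\binom{k}{2}$ nonzero differences $x-y$ with $x,y\in S$; each such difference is a positive integer below $n$, hence has fewer than $\log_2 n$ distinct prime divisors, so at most $\binom{k}{2}\log_2 n$ primes are bad for $S$. By Chebyshev-type bounds on the prime-counting function we may take $M=k^{\cO(1)}\log n$ so that $R$ contains strictly more primes than that, and then some $p\in R$ is good for $S$; this gives $|\mathcal{F}_1|=k^{\cO(1)}\log n$, each member described by its prime. \emph{Phase 2 (small-universe perfect hashing).} Build, by the same kind of argument applied to the now much smaller universe $[M]$, a family $\mathcal{F}_2$ of functions $[M]\to[k^2]$ of size $k^{\cO(1)}$ such that every $k$-set in $[M]$ is mapped injectively by some member (one concrete route: maps $x\mapsto(ax\bmod q)\bmod k^2$ for a prime $q=\Theta(k^2)$ with $a$ ranging over a suitable small set, using that a uniformly random $a$ yields fewer than one collision in expectation on a fixed $k$-set, namely at most roughly $\binom{k}{2}/k^2$, and then derandomizing the choice of $a$). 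Finally set $\mathcal{F}=\{\,g\circ f : f\in\mathcal{F}_1,\ g\in\mathcal{F}_2\,\}$: given $S$, choose $f\in\mathcal{F}_1$ injective on $S$ and then $g\in\mathcal{F}_2$ injective on $f(S)$, so $g\circ f$ is injective on $S$. The size is $|\mathcal{F}_1|\cdot|\mathcal{F}_2|=k^{\cO(1)}\log n$, each member is described by a constant number of integers of bit-length $k^{\cO(1)}\log n$ and evaluated in $\mathrm{polylog}$ time, and enumerating $\mathcal{F}$ (and, if desired, tabulating each member on all of $[n]$) takes $k^{\cO(1)}n\log n$.

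The step I expect to be the real obstacle is Phase 2 — precisely, driving the total family size down to $k^{\cO(1)}\log n$ rather than, say, $k^{\cO(1)}\cdot\mathrm{polylog}(n)$. Naively reducing the universe all the way to $\mathrm{poly}(k)$ by iterating the mod-prime trick multiplies the family size by a $k^{\cO(1)}$ factor in each of the $\cO(\log^* n)$ rounds and by a telescoping product of iterated logarithms, neither of which is $k^{\cO(1)}$ or $\mathrm{polylog}(n)$. Avoiding this requires the more careful reduction of \cite{NaorSS95}, which collapses $[n]$ essentially to a $\mathrm{poly}(k)$-size universe while paying only an additive $\cO(\log n)$ in the number of functions, so that composition with the $\mathrm{poly}(k)$-size small-universe family stays within $k^{\cO(1)}\log n$. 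Everything else — the prime-factor count, the Chebyshev estimate, the linearity-of-expectation argument for the multiplicative hash, the composition, and the running-time accounting — is routine.
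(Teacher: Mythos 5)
The paper itself does not prove this statement: Theorem~\ref{splitter2} is quoted from \cite{NaorSS95} and used as a black box, so there is no internal proof to compare against. Your reduction of the splitter property to perfect hashing (for $\ell=k^2>k$, splitting a $k$-set evenly is equivalent to injectivity on it) is correct, and Phase~1 (the mod-prime maps, the count of at most $\binom{k}{2}\log_2 n$ bad primes, the Chebyshev estimate) is fine.

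The gap is the one you half-suspect in your last paragraph, but it is worse than an ``obstacle'': Phase~2 as stated is impossible. Any family $f_1,\dots,f_N\colon[M]\to[k^2]$ in which every $k$-set (hence every $2$-set) is injectively mapped by some member makes $x\mapsto(f_1(x),\dots,f_N(x))$ injective on all of $[M]$, so $(k^2)^N\ge M$ and $N\ge\log M/(2\log k)$. With your $M=k^{\cO(1)}\log n$ this forces $N=\Omega(\log\log n/\log k)$, which is unbounded in $n$ and therefore not $k^{\cO(1)}$. Consequently no composition of the form ``a family of size $k^{\cO(1)}\log n$ into a universe that still grows with $n$, followed by a family of size $k^{\cO(1)}$'' can exist, and the honest implementations of Phase~2 (derandomizing $x\mapsto(ax\bmod q)\bmod k^2$ by trying all $a$, or iterating the mod-prime trick) yield total size $k^{\cO(1)}\log^2 n$ or $k^{\cO(1)}\log n\log\log n$ rather than $k^{\cO(1)}\log n$. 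The construction of \cite{NaorSS95} is structured the other way around: the first stage already collapses $[n]$ to a $\mathrm{poly}(k)$-size universe using only $k^{\cO(1)}\log n$ functions, by derandomizing the FKS hash over a small explicit sample space of multipliers rather than over the whole field; only because the intermediate universe no longer depends on $n$ can the second stage have size $k^{\cO(1)}$. That derandomization is precisely the content of the theorem, and since it is not supplied, the proposal does not establish the claimed $k^{\cO(1)}\log n$ bound (everything else you write is routine, as you say).
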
  

The next lemma is a simple application of Theorem~\ref{splitter2} from \cite{Lokshtanov0S20}, and is used as a subroutine in our algorithm for \fb. 
\begin{lemma}[\cite{Lokshtanov0S20}]
\label{splittersetlemma}
There exists an algorithm that takes as input a set $S$, two positive integers $s_1$ and $s_2$ that are less than $|S|$, and outputs a family $\mathcal{S}$ of subsets of $S$ having size $\cO((s_1+s_2)^{\cO(s_1)}\log |S|)$ such that for any two disjoint subsets $X_1$ and $X_2$ of $S$ of size at most $s_1$ and $s_2$, $\mathcal{S}$ contains a subset $X$ that satisfies $X_1\subseteq X$ and $X_2\cap X =\emptyset$ in time $\cO((s_1+s_2)^{\cO(s_1)}|S|^{\cO(1)})$.
\end{lemma}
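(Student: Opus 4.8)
The plan is to obtain the family $\mathcal{S}$ directly from the splitter construction of \Cref{splitter2}. Write $n = |S|$, identify $S$ with $[n]$, and set $k = \min(s_1 + s_2, n)$; note that any two disjoint subsets $X_1, X_2 \subseteq S$ with $|X_1| \le s_1$ and $|X_2| \le s_2$ satisfy $|X_1 \cup X_2| \le \min(s_1+s_2, n) = k$. First I would invoke \Cref{splitter2} to construct an $(n, k, k^2)$-splitter $\mathcal{F}$ of size $k^{\cO(1)}\log n$ in time $k^{\cO(1)} n \log n$. The property of $\mathcal{F}$ that I will actually use is the following easy consequence of its definition: for every $Y \subseteq S$ with $|Y| \le k$ there is an $f \in \mathcal{F}$ that is \emph{injective} on $Y$. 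Indeed, extend $Y$ to a set $Y'$ of size exactly $k$ (possible since $k \le n$) and take $f \in \mathcal{F}$ that splits $Y'$ evenly among the $k^2$ buckets; if $k \ge 2$, then splitting $k$ elements evenly among $k^2 > k$ buckets forces every bucket to contain at most one element of $Y'$, so $f$ is injective on $Y' \supseteq Y$ (the case $k \le 1$ is trivial).

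Given $\mathcal{F}$, I would set, for every $f \in \mathcal{F}$ and every $J \subseteq [k^2]$ with $|J| \le s_1$, the set $X_{f,J} := \{x \in S : f(x) \in J\}$, and output $\mathcal{S} := \{ X_{f,J} : f \in \mathcal{F},\ J \subseteq [k^2],\ |J| \le s_1 \}$. To check the guarantee, take disjoint $X_1, X_2$ with $|X_1| \le s_1$ and $|X_2| \le s_2$, pick $f \in \mathcal{F}$ injective on $Y = X_1 \cup X_2$, and let $J_1 := f(X_1)$. Then $|J_1| = |X_1| \le s_1$, so $X_{f,J_1} \in \mathcal{S}$; clearly $X_1 \subseteq X_{f,J_1}$; and since $f$ is injective on $Y$ and $X_1 \cap X_2 = \emptyset$, no element of $X_2$ maps into $J_1$, hence $X_2 \cap X_{f,J_1} = \emptyset$.

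For the quantitative bounds, the number of admissible index sets $J$ is $\sum_{i=0}^{s_1}\binom{k^2}{i} \le (s_1+1)(k^2)^{s_1} = k^{\cO(s_1)}$, which equals $(s_1+s_2)^{\cO(s_1)}$ when $k = s_1+s_2$ and is bounded by the same quantity when $k = n \le s_1+s_2$; hence $|\mathcal{S}| \le |\mathcal{F}| \cdot k^{\cO(s_1)} = \cO((s_1+s_2)^{\cO(s_1)}\log|S|)$. The running time is dominated by constructing $\mathcal{F}$ plus forming each $X_{f,J}$ (in $\cO(n)$ time after building a constant-time membership table for $J$), giving the claimed $\cO((s_1+s_2)^{\cO(s_1)}|S|^{\cO(1)})$.

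I do not expect a substantial obstacle here: the statement is essentially a repackaging of \Cref{splitter2}. The two points that need a little care are (i) passing from ``even splitting among $k^2$ buckets'' to plain injectivity on the at most $k$ relevant elements, and (ii) ensuring the size bound has $s_1$ (not $s_1+s_2$) in the exponent, which is precisely why the construction enumerates only the $\le s_1$ buckets hit by $X_1$ and lets $X_2$ be excluded for free by the injectivity of $f$.
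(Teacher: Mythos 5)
Your proof is correct and follows essentially the same route as the paper's: build a splitter via \Cref{splitter2}, use it to get a function injective on $X_1\cup X_2$, and enumerate only the buckets hit by $X_1$ so that the exponent is $s_1$ rather than $s_1+s_2$. The only cosmetic difference is that the paper constructs a separate $(|S|,s',s'^2)$-splitter for each exact size pair $(s_1',s_2')$, whereas you use a single splitter for $k=\min(s_1+s_2,|S|)$ and handle smaller sets by padding to size exactly $k$; both are valid and yield the same bounds.
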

\section{Obtaining a Low Depth Unbreakable Tree Decomposition} \label{sec:ldtd}
In this section we show that there exists a tree decomposition that has low (i.e., $\cO(\log n)$) depth, small-size (i.e., $\Oh(k)$) adhesions, and $(\Oh(k),k)$-unbreakable bags. To this end, we design a polynomial-time algorithm that, given a tree decomposition with small adhesions and unbreakable bags, produces a tree decomposition with the aforementioned properties. In the next section, we design a dynamic programming algorithm over such a low depth decomposition to obtain an \FPT approximation for \fb. 

In our algorithm, we use the notion of a \emph{tree partition} of a graph, which, informally speaking, capture the ``tree-likeness'' of a graph. Tree partitions were introduced by~\cite{Seese85,Halin91a}. This notion is easier to define and to think about, compared to tree decompositions. Although tree partitions are not as versatile as tree decompositions when it comes to algorithm design, there are a few circumstances---such as ours---where they are useful. A formal definition follows.
\begin{definition}[Tree Partition] \label{def:tree-partition} A tree partition of a graph $G$ is a pair $(\cT,\tau)$ where $\cT$ is a tree and $\tau: V(G)\rightarrow V(\cT)$ is a function from $V(G)$ to $V(\cT)$ such that for each $e=(u,v)\in E(G)$ either $\tau(u)=\tau(v)$ or $(\tau(u),\tau(v))\in E(\cT)$. A rooted tree partition $(\cT,\tau)$ with root $r$ is the tree partition $(\cT,\tau)$ where the tree $\cT$ is a rooted tree with root $r$.
\end{definition}
We remark that we use calligraphic font ($\cT$) to denote trees corresponding to Tree Partitions to easily distinguish them from graphs that are trees. 
Observe that for a tree $T$, the pair $(\cT=T,\tau)$ where $\tau(v)=v$ for each $v\in T$ is a trivial tree partition of $T$. 

For our result we only use tree partitions of trees. Given a tree decomposition $(T,\beta)$ with small adhesions and unbreakable bags, our goal in this section is to use $(T,\beta)$ to obtain a tree decomposition of bounded height without blowing up the adhesion size and unbreakability guarantees too much. For this we first find a tree partition $(\cT,\tau)$ of $T$ that satisfies additional properties, such as logarithmic depth and for each $t \in V(\cT)$, it holds that $|\tau^{-1}(t)|\leq 4$. Using this tree partition, we obtain a tree decomposition $(\mathcal{T},\beta_1)$ whose underlying tree is $\mathcal{T}$, and each bag $\beta_1(t)$,  $t\in V(\cT)$ is a union of at most $4$ bags of $(T,\beta)$; $\beta_1(t)=\bigcup_{x\in \tau^{-1}(t)}\beta(x)$. This tree decomposition already has bounded height, small adhesion size and each bag is a union of at most four unbreakable bags of $(T,\beta)$. From here with some extra work we obtain a tree decomposition with unbreakable bags as well.
For this we use other properties of $(\cT,\tau)$ to modify $(\cT,\beta_1)$ to obtain our desired tree decomposition.
 
As outlined above, for our result we need tree partitions of a tree satisfying some properties. We now define such tree partitions below and show how to find one in polynomial time.

\begin{definition}[Nice Tree Partition]\label{defn:nice_tree_partition}
A tree partition $(\mathcal{T},\tau)$ of a tree $T$ is said to be a nice tree partition if it satisfies the following properties:
\begin{enumerate}
    \item $\cT$ has depth at most $\lceil\log_2|V(T)|\rceil$ 
    \item for each $t\in V(\cT)$, $1<|\tau^{-1}(t)|\leq 4$.
    \item for each $t\in V(\cT)$, $T[V_t]$ is a subtree of $T$, where  $V_t = \bigcup_{x\in V(\cT_t)}\tau^{-1}(x)$.
\end{enumerate}
\end{definition}

We now show how to find a nice tree partition of a tree in polynomial time. For this we use a recursive procedure. The core idea in each recursive step is to map a balanced separator $b$ of the tree to the root of the tree partition. To ensure the connectivity properties of a tree partition, we have a set $M$ of marked vertices in the tree that are always mapped to the root of the tree partition in addition to $b$. Then for each connected component in the forest obtained by removing $M\cup \{b\}$ from the tree, we mark new vertices and recurse. We need some extra work to make sure that every node in the tree partition is mapped to by only a constant number of nodes in the tree. For this we ensure that in each recursive call we mark only a few ($\leq 2$) new vertices. 
\begin{lemma}\label{lemma:tp_tree}
Given a tree $T$ on $n$ vertices with root $r$, one can in polynomial time compute a rooted nice tree partition $(\cT,\tau)$ of $T$ with root $r_{\cT}$ such that $\tau(r)=r_{\cT}$. 
\end{lemma}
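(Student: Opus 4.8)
The plan is to prove Lemma~\ref{lemma:tp_tree} by a recursive ``balanced-separator'' construction, exactly following the informal description given just before the statement. I would set up a recursive procedure $\texttt{BuildTP}(S, M)$ whose input is a subtree $S$ of $T$ (the current piece still to be processed) together with a set $M \subseteq V(S)$ of at most two ``marked'' vertices that are forced to be mapped to the root of the tree partition being constructed for $S$. The procedure picks a vertex $b \in V(S)$ that is a \emph{balanced separator} of $S$ in the sense that every connected component of $S - b$ has at most $|V(S)|/2$ vertices (such a vertex exists in every tree and can be found in polynomial time by the standard centroid argument). It then creates a root node $t^\star$ of the tree partition with $\tau^{-1}(t^\star) := M \cup \{b\}$, removes $M \cup \{b\}$ from $S$, and for each connected component $C$ of $S - (M \cup \{b\})$ it determines a small set $M_C$ of vertices of $C$ to mark and recurses on $(C, M_C)$, attaching the returned subtree as a child of $t^\star$.

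The delicate point is choosing $M_C$ so that all three properties of a nice tree partition hold simultaneously. For property~3 (connectivity of $T[V_t]$) and for the tree-partition edge condition, every component $C$ of $S - (M\cup\{b\})$ has the property that all of its edges to the rest of $S$ go to vertices of $M \cup \{b\}$, so mapping $C$ entirely below $t^\star$ is consistent; but I also need the vertex of $C$ adjacent to $M \cup \{b\}$ to be mapped to the \emph{child} of $t^\star$ (so that the relevant graph edge becomes a tree edge of $\cT$ rather than a ``grandparent'' edge). Since $C$ is a subtree of a tree and is separated from $M \cup \{b\}$ by a cut of bounded size, $C$ has at most $|M|+1 \le 3$ neighbors in $M\cup\{b\}$, hence $C$ has at most $3$ vertices with a neighbor outside $C$; I would mark those boundary vertices --- but that could be up to $3$, threatening the ``$\le 2$ new marks per call'' budget. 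The fix (and I expect this to be the main obstacle, requiring the ``extra work'' the authors allude to) is to observe that at the root call $M = \emptyset$ and thereafter $|M| \le 2$, and more importantly to route the recursion so that each component $C$ inherits at most two marked boundary vertices: one can show that among the $\le 3$ vertices of $M \cup \{b\}$ adjacent to $C$, the structure of the tree forces the ``interface'' of $C$ to collapse appropriately, OR one slightly changes the invariant to ``$|M| \le 3$'' and chooses $b$ more cleverly. Concretely, I would carry the invariant that $|M| \le 2$ by noting that when we recurse into $C$, the new marked set consists only of the (at most) endpoints in $C$ of edges from $C$ to the \emph{parent group} $M \cup \{b\}$ that we actually need for connectivity, and a counting argument on how a single tree vertex $b$ plus $\le 2$ old marks can attach to a given subtree caps this at $2$. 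This is where I'd spend the most care.

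Granting the invariant, the remaining properties are routine. Property~2 ($1 < |\tau^{-1}(t)| \le 4$): each node gets $\tau^{-1}(t) = M \cup \{b\}$ with $|M| \le 2$ old marks plus the fresh $b$, and, to ensure strictly more than one vertex per node (which is why the statement writes $1 < |\tau^{-1}(t)|$), if $M = \emptyset$ at a node one also pulls in a neighbor or simply argues that leaves/small pieces are absorbed into their parent group so that no singleton groups are created; an easy post-processing merge step handles any residual singletons without destroying the other properties. Property~1 (depth $\le \lceil \log_2 |V(T)| \rceil$): since $b$ is a balanced separator, each recursive call operates on a component with at most half the vertices, so the recursion tree --- which is exactly $\cT$ --- has depth at most $\lceil \log_2 n \rceil$; here I'd include the $O(1)$-fudge bookkeeping to make the constant come out to the stated bound (e.g. the $M$-vertices do not blow up the count because they are charged to the parent piece). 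Finally, $\tau(r) = r_\cT$ is arranged by starting the top-level call with $M = \{r\}$ (or by a trivial relabeling if $r$ is not the chosen separator of $V(T)$, exploiting that the root of a tree partition may be any designated node and the connectivity conditions are unaffected). The whole procedure makes $O(n)$ recursive calls each doing polynomial work, so the running time is polynomial, completing the proof.
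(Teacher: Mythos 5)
Your overall plan is the same as the paper's: recurse on a centroid-style balanced separator $b$, put $M\cup\{b\}$ into the root group of the tree partition, and pass the boundary vertices of each component as the marked set of the recursive call. However, you have correctly located the crux --- keeping the number of marks handed to each recursive call at most $2$ --- and then left it unresolved, and the two fixes you sketch do not work. The counting claim ``a single tree vertex $b$ plus $\le 2$ old marks can attach to a given subtree caps this at $2$'' is false: if $m_1$, $m_2$ and $b$ sit on three different branches hanging off a common vertex $v\notin M\cup\{b\}$, then the component of $T'\setminus(M\cup\{b\})$ containing $v$ is adjacent to all three of them, so it would inherit three marks. Weakening the invariant to $|M|\le 3$ does not close the induction either: with four vertices in the root group a component can again be adjacent to more marked vertices than the invariant allows, and the bound keeps growing.

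The paper's resolution is a concrete extra step you are missing. Whenever the vertices of $M\cup\{b\}$ do not all lie on a single path of $T'$, one also adds to the root group the branching vertex $x$, namely the last common vertex of the $m_1$--$m_2$ path and the $m_1$--$b$ path. After this, the set $M'$ placed at the root always contains a vertex $s$ (the middle vertex of the path, or $x$ itself) such that every other vertex of $M'$ lies in a different component of $T'\setminus\{s\}$. Since each component $H$ of $T'\setminus M'$ lives inside a single component of $T'\setminus\{s\}$, and no vertex of a tree can have two neighbors in the same subtree without creating a cycle, $H$ is adjacent to at most $s$ and one further vertex of $M'$, giving $|M_H|\le 2$ and closing the induction (with $|M'|\le 4$, which is exactly why the bound in property~2 of a nice tree partition is $4$ rather than $3$). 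Without this branching-vertex insertion --- or an equivalent device --- your construction does not maintain its own invariant, so the proof as proposed is incomplete. The rest of your argument (depth bound via the balanced separator, the edge condition, connectivity of $T[V_t]$, polynomial running time, and seeding the top-level call with $M=\{r\}$ to force $\tau(r)=r_{\cT}$) matches the paper and is fine.
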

\begin{proof} 

We now design a procedure \fbtp\ that takes as argument a tree $T'$, and a non-empty set $M\subseteq V(T')$ of size at most $2$, and returns a rooted nice tree partition $(\cT',\tau')$ of $T'$ with root $r'$, such that $M\subseteq \tau'^{-1}(r')$.
%
We will invoke this procedure on the input tree $T$ with $M=\{r\}$, where $r$ is the root of $T$ to obtain a rooted nice tree partition $(\cT,\tau)$ of $T$.

In the procedure \fbtp$(T',M)$ we carry out the following steps: 
\begin{itemize}
    \item We find a balanced bisector $b$ of $T$ and initialize $M'=M\cup \{b\}$. 
    \item If all vertices in $M'$ do not lie on a path in $T'$, we add an extra vertex $x$ to $M'$. Let $x$ be the last common vertex on the path from $m_1$ to $m_2$ and the path from $m_1$ to $b$ in $T$. Modify $M'=M'\cup \{x\}$.
    \item For each tree $H$ in the forest $T'\setminus M'$, we recursively call \fbtp($H$,$M_H$) where $M_H=N_{T'}(M')\cap V(H)$ is the set of neighbors of vertices in $M'$ in $H$. Let $(\mathcal{H},\tau_H)$ be the tree partition returned by this procedure call. 
    \item We now construct a tree partition $(\cT',\tau')$ with root $r'$. We assign $\tau'^{-1}(r')=M'$. Then for each tree $H$ in the forest $T'\setminus M'$, we make $\mathcal{H}$ a subtree of $\cT'$ by attaching the root of $\mathcal{H}$ as a child to $r'$. Further for each $t\in V(\mathcal{H})$ we assign $\tau'^{-1}(t)=\tau_H(t)$. 
    \item We return $(\cT',\tau')$.
\end{itemize}

We now prove by induction on the number of vertices in the tree that, for any tree $T'$ with root $r'$, and any non-empty subset $M \subseteq V(T')$ with $0 < |M| \le 2$, the procedure \fbtp$(T',M)$ returns a rooted nice tree partition $(\cT',\tau')$ of $T'$ with root $r'$ such that $M\subseteq \tau'^{-1}(r')$.

\textit{Base Case $|V(T')|=1$ or $V(T')=M'$:} In this case $V(\cT')=\{r'\}$ and $\tau(r')=M'$. Observe that $0 < |M'| \le 4$. This is because the procedure is called with a non-empty set $M$ of size at most two. Then, the procedure initializes $M'=M$, and adds at most two other vertices ($b$ and $x$) in $V(T')$ to $M'$. Thus $(\cT',\tau')$ is a nice tree partition with root $r'$ and $M\subseteq \tau'^{-1}(r')$. 

Now we prove the {\em inductive case} where $V(T')$ has size $i$, $i>1$ and $V(T')\neq M'$. For this we assume the inductive hypothesis that the procedure returns a tree partition with the desired properties for all trees $H$ having less than $i$ vertices and non-empty sets $M'\subseteq V(H)$ of size at most two. 

Let $H$ be a tree in the forest $T'\setminus M'$. We now show that $|V(H)|\leq \lceil|V(T')|/2\rceil$ and $1<|M_H|\leq 2$. 
By construction $M'$ contains the vertex $b$, a balanced bisector of $T'$. Thus $V(H)$ has size at most $\lceil|V(T')|/2\rceil$. 
$|M_H|>1$ since $H$ contains at least one child of $M'$ since it is a tree in the forest $T'\setminus M'$. To show $|M_H|\leq 2$, we first show there is a vertex $s$ in $M'$ such that in the forest $T'\setminus \{s\}$ every vertex $s'\in M'\setminus\{s\}$ is contained in a different tree. If all vertices in $M\cup \{b\}$ do not lie on a path in $T'$, then $s$ is just the vertex $x$ we added to $M'$ in the second step of the procedure. If the vertices of $M\cup \{b\}$ lie on a path $P$ in $T'$ then $M'=M\cup \{b\}$. In this case if $|M'|\leq 2$, then $s$ is any vertex in $M'$. On the other hand if $|M'|=3$, then $s$ is the second vertex from $M'$ in the path $P$. Due to the property of $s$, $H$ may contain a child of $s$ and a child of one other $s'\in M'\setminus \{s\}$. Thus $|M_H|\leq 2$.

Since $H$ is a tree with $|V(H)|\leq \lceil|V(T')|/2\rceil$ and $1<|M_H|\leq 2$, by induction the tree partition $(\mathcal{H},\tau_H)$ returned by the call to the procedure \fbtp($H$,$M_H$) is a nice tree partition with root $r_H$ and $M_H\subseteq \tau_H^{-1}(r_H)$. 

We now show that $(\cT,\tau')$ is a rooted tree partition of $T'$ with root $r'$. 
First we show that each vertex $v\in \cT$ is mapped to exactly one vertex $t\in \cT$ by $\tau'$. If $v\in M'$, then $\tau(v)$ is mapped to $r'$. If $v\in H$, $H\in T'\setminus M'$, then since $(\mathcal{H},\tau_H)$ is a rooted tree partition of $H$, $\tau(v)=\tau_H(v)$ by construction. Next we show that each edge $(x,y)\in T'$ satisfies either $\tau'(x)=\tau'(y)$ or $(\tau'(x),\tau'(y))\in E(\cT)$, by considering three cases. (i) If $x,y\in M'$, then this is trivially true. (ii) If $x,y\notin M'$ then $x,y$ must belong to some tree $H\in T'\setminus M'$ and thus by induction $(\tau'(x)=\tau_H(x),\tau'(y)=\tau_H(y))\in E(\cT)$. (iii) If $x\in M'$ and $y\notin M'$, by construction, $\tau(x)=r'$ and $y\in M_H$ for some $H\in T'\setminus M'$. Since $M_H\subseteq \tau_H^{-1}(r_H)$ and $r_H$ is a child of $r'$ in $\cT'$, $(\tau'(x)=r,\tau'(y)=r_H)\in E(\cT)$.

$M\subseteq \tau'^{-1}(r')$ just by construction. We now prove properties $(1)-(4)$ in Definition~\ref{defn:nice_tree_partition} to show that $(\cT',\tau')$ is a nice tree partition. Recall that for each $H\in T'\setminus M'$, $H$ is a subtree of $T'$ with $r_H$ being a child of $r'$ in $\cT'$. 
Then, since $|V(H)| \le \frac{|V(T)|}{2}$, by inductive hypothesis, $\mathcal{H}$ has depth at most $\lceil\log_2(|V(T')|/2)\rceil = \lceil\log_2(|V(T')|)\rceil-1$. Therefore, $\cT'$ has depth $\lceil\log_2|V(T')|\rceil$, since the addition of the root $r'$ increases the depth by $1$.
For each $t'\in V(\cT')$, $1<|\tau'^{-1}(t')|\leq 4$ since $1<|\tau'^{-1}(r')|\leq 4$ and for each $H\in T'\setminus M'$ and for each $t\in V(\mathcal{H})$, $1<|\tau^{-1}_H(t)|\leq 4$. For each $t'\in V(\cT')$, $T'[V_{t'}]$ is a subtree of $T'$, where $V_{t'}=\bigcup_{x\in V(\cT'_{t'})}\tau'^{-1}(x)$ because for each $H\in T'\setminus M'$ and $t\in V(\mathcal{H})$, $H[V_t]$ is a subtree of $H$, where $V_t=\bigcup_{x\in V(\mathcal{H}_t)}\tau_H^{-1}(x)$ and $H$ is subtree of $T'$. 
%
%
This completes the proof.
\end{proof}
Let $(T,\beta)$ be a rooted tree decomposition of a graph $G$ with root $r$, $(q,k)$-unbreakable bags and adhesions of size at most $k$. Further let $(\cT,\tau)$ be a rooted nice tree partition of $T$ with root $r_{\cT}$ as provided by Lemma~\ref{lemma:tp_tree}. 
We now show that we can obtain a natural rooted tree decomposition $(\cT,\beta_1)$ of $G$ where the tree in the decomposition is $\cT$. Here $\beta_1:V(\cT)\rightarrow 2^{V(G)}$ and $\beta_1(t)=\bigcup_{x\in \tau^{-1}(t)} \beta(x)$. 

From now on we fix $G,(T,\beta),(\cT,\tau),$ and $\beta_1$ for the rest of the section. We remark that to prove $(\cT,\beta_1)$ is a tree decomposition we will not need the {\em nice} properties of $(\cT,\tau)$ nor the properties of the bags and adhesions in $(T,\beta)$. We will later use them to deduce some helpful structural properties of $(\cT,\beta_1)$. 

\Cref{fig:td_tp} is the accompanying figure for the proof of the following lemma.

\begin{lemma}\label{lemma:treedecomp_from_treepartition}
The pair $(\cT,\beta_1)$ is a tree decomposition of $G$.
\end{lemma}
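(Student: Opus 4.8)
Proof proposal.

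The plan is to verify the three tree-decomposition axioms (T1)--(T3) for $(\cT,\beta_1)$ directly from the fact that $(T,\beta)$ is a tree decomposition of $G$ and $(\cT,\tau)$ is a tree partition of $T$; as the remark preceding the lemma indicates, neither the ``nice'' properties of $(\cT,\tau)$ nor the unbreakability/adhesion-size bounds on $(T,\beta)$ are needed. Axioms (T1) and (T2) are immediate. For (T1): any $v\in V(G)$ lies in $\beta(x)$ for some $x\in V(T)$ by (T1) for $(T,\beta)$, and since $x\in \tau^{-1}(\tau(x))$ we get $v\in\beta_1(\tau(x))$; hence $\bigcup_{t\in V(\cT)}\beta_1(t)=V(G)$. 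For (T2): given $uv\in E(G)$, by (T2) for $(T,\beta)$ there is $x\in V(T)$ with $\{u,v\}\subseteq\beta(x)$, and then $\{u,v\}\subseteq\beta_1(\tau(x))$.

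The real content is (T3). Fix $u\in V(G)$, and write $T_u=\{x\in V(T):u\in\beta(x)\}$ and $\cT_u=\{t\in V(\cT):u\in\beta_1(t)\}$. First I would observe that $\cT_u=\tau(T_u):=\{\tau(x):x\in T_u\}$: indeed $u\in\beta_1(t)$ iff $u\in\beta(x)$ for some $x\in\tau^{-1}(t)$, iff $t=\tau(x)$ for some $x\in T_u$. By (T3) for $(T,\beta)$, $T_u$ induces a connected (in particular non-empty) subtree of $T$. Since $\cT$ is a tree, a subset inducing a connected subgraph automatically induces a subtree, so it suffices to prove the general claim: the $\tau$-image of a connected vertex subset of $T$ induces a connected subgraph of $\cT$.

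To prove the claim, take $t,t'\in\tau(T_u)$, say $t=\tau(x)$ and $t'=\tau(x')$ with $x,x'\in T_u$. Since $T[T_u]$ is connected, there is a path $x=x_0,x_1,\dots,x_m=x'$ in $T$ with every $x_i\in T_u$. For each $i$, the edge $(x_i,x_{i+1})\in E(T)$ gives, by the defining property of the tree partition $(\cT,\tau)$, that either $\tau(x_i)=\tau(x_{i+1})$ or $(\tau(x_i),\tau(x_{i+1}))\in E(\cT)$. Hence, deleting consecutive repetitions, the sequence $\tau(x_0),\tau(x_1),\dots,\tau(x_m)$ is a walk in $\cT$ from $t$ to $t'$ all of whose vertices lie in $\tau(T_u)=\cT_u$. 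Therefore $\cT[\cT_u]$ is connected, which establishes (T3) and completes the proof.

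I do not expect a serious obstacle here; the only points requiring care are (i) recording that $\cT_u$ is precisely the $\tau$-image of the subtree $T_u$, and (ii) noticing that it is connectivity — not the nominally stronger ``subtree'' condition — that must be transported through $\tau$, after which the tree-partition edge property (edges of $T$ are either contracted or mapped to edges of $\cT$) immediately turns paths in $T$ into walks in $\cT$. The accompanying Figure~\ref{fig:td_tp} depicts exactly this: a subtree $T_u$ of $T$ mapping onto a connected set of nodes of $\cT$.
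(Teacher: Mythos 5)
Your proof is correct and follows essentially the same route as the paper's: axioms (T1) and (T2) are handled identically, and (T3) rests on the same key fact that the tree-partition property sends each edge of $T$ either to a loop or to an edge of $\cT$. The only cosmetic difference is that you argue connectivity directly (mapping a path in $T[T_u]$ to a walk in $\cT$), whereas the paper argues by contradiction via a disconnected component; both are fine.
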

\begin{proof}
We show that $(\cT,\beta_1)$ satisfies the three properties required for it to be a tree decomposition of $G$. We first show that for each vertex $v\in V(G)$, there is a bag $\beta_1(t)$, $t\in \cT$ containing $v$, i.e $v\in \beta_1(t)$. Since $(\cT,\tau)$ is a tree partition of $T$, every vertex in $T$ is mapped to a vertex in $\cT$ by $\tau$. So for each $x\in T$, by definition of $\beta_1$, $\beta(x)\subseteq \beta_1(\tau(x))$. Thus since $(T,\beta)$ is a tree decomposition of $G$, each vertex $v\in V(G)$ is contained in some bag $\beta(x)$, $x\in V(T)$ and thus $v\in\beta_1(t)$, where $t = \tau(x)$. 

Next we show that for each edge $(u,v)\in E(G)$, there exists a bag $\beta_1(t)$, $t\in \cT$ such that $u,v\in \beta_1(t)$. Since $(T,\beta)$ is a tree decomposition of $G$, there exists a bag $\beta(x)$, $x\in T$ such that $u,v\in \beta(x)$. Therefore by definition of $\beta_1$, $u,v\in \beta_1(t=\tau(x))$.  

Let $v\in V(G)$ and let $\beta^{-1}(v) = \{x \in V(T): v \in \beta(x)\}$ and $\beta_1^{-1}(v) = \{t \in V(\cT): v \in \beta_1(t)\}$. Since $(T,\beta)$ is a tree decomposition of $G$, the set $\beta^{-1}(v)$ induces a connected subgraph of $T$. We now show that $\beta_1^{-1}(v)$ is a connected subgraph of $\cT$. Suppose not, let $X_1$ be a connected component in $\beta_1^{-1}(v)$ such that there is no edge in $E(\cT)$ from $X_1$ to $\beta_1^{-1}(v)\setminus X_1$. Let $X=\bigcup_{t\in X_1}\{x:x\in \tau^{-1}(t)\}\cap \beta^{-1}(v)$. Since $\beta^{-1}(v)$ is connected, there is an edge $(u,v)$ in $E(T)$ with $u\in X$ and $v\in X\setminus \beta^{-1}(v)$. Observe that $\tau(u)\in X_1$ and $\tau(v)\in \beta_1^{-1}(v)\setminus X_1$ by definition of $X_1$ and $X$. Thus $\tau(u)\neq \tau(v)$ and since $(\cT,\tau)$ is a tree partition of $T$, $(\tau(u),\tau(v))$ is an edge in $\cT$. This contradicts our assumption that there is no edge from $X_1$ to $\beta_1^{-1}(v)\setminus X_1$ in $E(\cT)$ and proves that $\beta_1^{-1}(v)$ is connected. 
All three properties combined show that $(\cT,\beta_1)$ is a tree decomposition of $G$.
\end{proof}

Observe that since $(\cT,\tau)$ is a nice tree partition of $T$, the tree decomposition $(\cT,\beta_1)$ has depth at most $\lceil\log_2|V(G)|\rceil$ and each of its bags is a union of at most four bags of $(T,\beta)$. We now prove a few other useful properties of $(\cT,\beta_1)$ that will help us design our desired tree decomposition.

\begin{figure}
    \centering
    \includegraphics[page=1,scale=0.7]{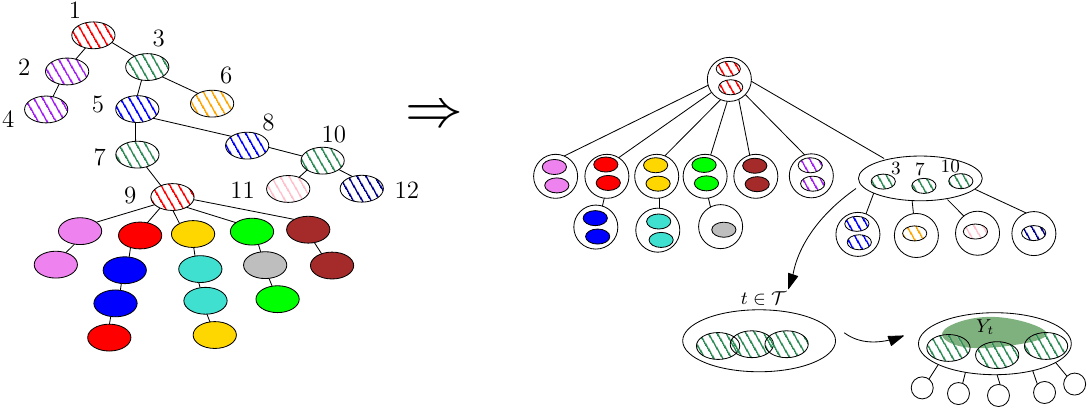}
    \caption{Left: Tree decomposition $(T,\beta)$ with bags colored for easy understanding. Right: ree decomposition $(\cT,\beta_1)$ that is constructed using a tree partition $(\cT,\tau)$. The bags in $(T,\beta)$ are mapped according to $\tau$ to $(\cT,\beta_1)$. The bags in $\tau^{-1}(t)$, $t\in \cT$ can overlap as demonstrated by bags $3,7,10$ but their overlap is small and contained in $Y_t$.}
    \label{fig:td_tp}
\end{figure}

\begin{figure}
    \centering
    \includegraphics[page=2,scale=1]{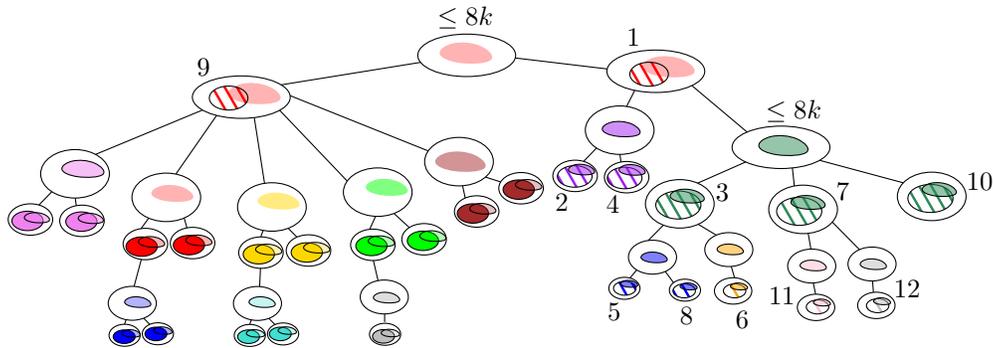}
    \caption{This shows the final tree decomposition $(T^*,\beta^*)$ constructed using $(T,\beta)$ and $(\cT,\beta_1)$ as shown in Fig~\ref{fig:td_tp}. $(T^*,\beta^*)$ is our desired tree decomposition with low depth, unbreakable bags and small adhesions.}
    \label{fig:final_td}
\end{figure}

\begin{lemma}\label{lemma:tree_partition_decomp_prop}
There exists a function $\gamma: V(\cT) \to V(T)$, and a set $Y_t \subseteq \beta_1(t)$ for each node $t \in V(\cT)$, that satisfy, for each node $t \in V(\cT)$, the following properties:
\begin{enumerate}
    \item $|Y_t|\leq 8k$ 
    \item If $t$ is not the root $r_{\cT}$ then $\beta_1(\parent(t))\cap \beta_1(t)\subseteq Y_t$
    \item for each distinct $x,y\in \tau^{-1}(t)$, $\beta(x)\cap \beta(j)\subseteq Y_t$ 
    \item for each child $t_c$ of $t$ in $\cT$, it holds that $t = \tau(\gamma(t_c))$ and $\beta_1(t_c)\cap \beta_1(t)\subseteq Y_t\cup \beta(\gamma(t_c))$
\end{enumerate}
Furthermore, $\gamma$ and the sets $Y_t$ for each $t \in V(\cT)$ can be computed in polynomial time.
\end{lemma}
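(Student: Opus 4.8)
The plan is to construct $\gamma$ and the sets $Y_t$ explicitly from the structure of $(T,\beta)$ and $(\cT,\tau)$, and then verify the four properties by tracing paths in $T$. For $t\in V(\cT)$ write $W_t\coloneqq\tau^{-1}(t)$ (so $\beta_1(t)=\bigcup_{x\in W_t}\beta(x)$ and $|W_t|\le 4$) and $V_t\coloneqq\bigcup_{x\in V(\cT_t)}\tau^{-1}(x)$; by the third axiom of a nice tree partition, $T[V_t]$ is a connected, hence convex, subtree of $T$. I will also use a byproduct of the construction in Lemma~\ref{lemma:tp_tree}: each $W_t$ comes with a \emph{center} $s_t\in W_t$ such that the vertices of $W_t\setminus\{s_t\}$ lie in pairwise distinct components of $T[V_t]-s_t$ --- this is precisely the vertex $s$ exhibited in that proof (when $|W_t|=4$, $s_t$ is the extra median vertex $x$; otherwise $W_t$ lies on a path of $T$ from which $s_t$ is chosen appropriately). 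Two elementary facts drive everything. \textbf{(F1)} Because $(T,\beta)$ is a tree decomposition with adhesions of size $\le k$, axiom (T3) gives $\beta(x)\cap\beta(y)\subseteq\beta(z)$ for every $z$ on the $x$-$y$ path of $T$; in particular $\beta(x)\cap\beta(y)$ is contained in the adhesion of the first edge of that path, so $|\beta(x)\cap\beta(y)|\le k$ for all $x,y\in V(T)$. \textbf{(F2)} For every $s\in V(\cT)$, an edge of $T$ with exactly one endpoint in $V_s$ must join $W_s$ to $W_{\parent(s)}$ (it projects under $\tau$ to the unique edge of $\cT$ leaving $\cT_s$), and since $V_s$ is convex, each vertex outside $V_s$ has at most one neighbour in $V_s$; hence at most $|W_{\parent(s)}|\le 4$ edges of $T$ leave $V_s$, and at least one does when $s\neq r_{\cT}$.

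Next I would define $\gamma$ and $Y_t$. Fix $t$ and a child $t_c$ of $t$, and let $B_{t_c}\subseteq W_t$ be the set of endpoints in $W_t$ of the (at least one, at most four) edges of $T$ leaving $V_{t_c}$. Since $V_{t_c}$ is connected and misses $s_t$, it sits inside a single component of $T[V_t]-s_t$, and that component contains at most one vertex of $W_t\setminus\{s_t\}$; hence $B_{t_c}$ consists of $s_t$ (possibly) together with at most one further vertex $u_{t_c}\in W_t$. Put $\gamma(t_c)$ equal to the unique element of $B_{t_c}$ if $|B_{t_c}|=1$, and $\gamma(t_c)\coloneqq s_t$ otherwise; in both cases $\gamma(t_c)\in W_t$, so $\tau(\gamma(t_c))=t$, giving the first half of property~4. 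Now let $Y_t$ be the union of (i) the $T$-adhesions of the $\le 4$ edges of $T$ leaving $V_t$, and (ii) for each $u\in W_t\setminus\{s_t\}$, the $T$-adhesion of the edge incident to $u$ on the unique $s_t$-$u$ path of $T$. This is a union of at most $4+3=7$ adhesions, each of size $\le k$ by hypothesis (or by (F1)), so $|Y_t|\le 8k$, which is property~1; each of these adhesions contains a factor $\beta(\cdot)$ with argument in $W_t$, so $Y_t\subseteq\beta_1(t)$ as required; and $\gamma$ and all $Y_t$ are computable in polynomial time by elementary searches in $T$.

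To verify the remaining properties I would trace paths in $T$ and apply (F1)--(F2). Property~2: a vertex $v\in\beta_1(\parent(t))\cap\beta_1(t)$ lies in $\beta(x)\cap\beta(y)$ for some $x\in W_{\parent(t)}$, $y\in W_t$; the $x$-$y$ path in $T$ crosses into $V_t$ at an edge $e$ leaving $V_t$, both of whose endpoints lie on the path, so $v\in\sigma(e)\subseteq Y_t$ by clause~(i). Property~3: for distinct $x,y\in W_t$, let $w$ be whichever of $x,y$ differs from $s_t$ (if both differ then $s_t$ lies on the $x$-$y$ path by the center property, so either choice works); then (F1) gives $\beta(x)\cap\beta(y)\subseteq\beta(w)\cap\beta(s_t)\subseteq$ the adhesion of the edge at $w$ on the $s_t$-$w$ path, which is in $Y_t$ by clause~(ii). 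Property~4: a vertex $v\in\beta_1(t_c)\cap\beta_1(t)$ lies in $\beta(a)\cap\beta(b)$ for $a\in W_{t_c}$, $b\in W_t$; the $a$-$b$ path exits $V_{t_c}$ at an edge $e'=(a',b')$ leaving $V_{t_c}$, with $b'\in B_{t_c}$ and $v\in\sigma(e')$. If $b'=\gamma(t_c)$ then $v\in\beta(\gamma(t_c))$; otherwise $|B_{t_c}|=2$, $\gamma(t_c)=s_t$, $b'=u_{t_c}$, and here the crucial point is that the unique $s_t$-$u_{t_c}$ path of $T$ runs through $V_{t_c}$: its interior avoids $W_t$ (as $s_t$ is the center), hence lies in one component of $T[V_t]\setminus W_t$ that is adjacent to both $s_t$ and $u_{t_c}$, and by uniqueness of paths in trees this component can only be $V_{t_c}$; consequently the edge of that path incident to $u_{t_c}$ is forced to equal $e'$, so $\sigma(e')\subseteq Y_t$ by clause~(ii) and $v\in Y_t$.

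The technical heart of the argument is the simultaneous choice of $\gamma$ and $Y_t$ in the second paragraph. A node $t$ of $\cT$ can have unboundedly many children, so the naive bound on $\bigcup_{t_c}\bigl(\beta_1(t_c)\cap\beta_1(t)\bigr)$ is worthless; the point is to exploit the ``centered'' shape of the bags produced by Lemma~\ref{lemma:tp_tree} to see that, after removing the single bag $\beta(\gamma(t_c))$, every child-adhesion collapses into the \emph{fixed} union $Y_t$ of only $\cO(1)$ adhesions. The combinatorial fact that makes this possible --- for each $u\in W_t\setminus\{s_t\}$ there is at most one child subtree lying ``between'' $s_t$ and $u$, and its boundary edge at $u$ is precisely the last edge of the $s_t$-$u$ path of $T$ --- is where the lemma actually has content; squeezing the degenerate configurations ($|W_t|<4$, $s_t$ adjacent to some $u$, singleton base-case children) into the $8k$ budget is the bulk of the remaining case analysis.
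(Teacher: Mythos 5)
Your proof is correct, but it takes a genuinely different route from the paper's. The paper builds $Y_t$ from the rooted orientation of $T$: it takes, for each $x\in\tau^{-1}(t)$, the parent-adhesion $\sigma(x,\parent_T(x))$, plus the adhesions of the (at most four) edges of $T$ joining $\tau^{-1}(t)$ to $\tau^{-1}(\parent(t))$ --- eight adhesions of size at most $k$ in total. Property~3 then follows for \emph{arbitrary} pairs $x,y$ via the least-common-ancestor argument ($\beta(x)\cap\beta(y)$ lands in $\sigma(x,\parent_T(x))$ or $\sigma(y,\parent_T(y))$), and for property~4 the exceptional vertex $\gamma(t_c)$ is identified as the unique $x\in\tau^{-1}(t)$ that is the $T$-parent of the topmost vertex of $V_{t_c}$; for every other $x$ adjacent to $V_{t_c}$, the neighbour is $\parent_T(x)$, so $\beta(x)\cap\beta_1(t_c)\subseteq\sigma(x,\parent_T(x))\subseteq Y_t$. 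You instead anchor everything on a \emph{center} $s_t$ of the bag and take adhesions of the last edges of the $s_t$-$u$ paths; your verification of properties 2--4, including the argument that the $s_t$-$u_{t_c}$ path is forced through $V_{t_c}$ so that its last edge coincides with the boundary edge $e'$, is sound. The one caveat is that the center property is \emph{not} a consequence of Definition~\ref{defn:nice_tree_partition} (a bag whose four preimages form a path in $T[V_t]$ has no center), so your proof only works for the specific tree partition produced by Lemma~\ref{lemma:tp_tree}; you correctly locate the property inside that lemma's proof (the vertex $s$ used to bound $|M_H|\le 2$), but you would need to promote it to an explicit guarantee in the statement of Lemma~\ref{lemma:tp_tree}. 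The paper's choice of parent-adhesions avoids this dependency entirely and needs only the stated properties of a nice tree partition plus the rootedness of $T$, which is why it is the slightly more robust formulation; what your version buys is a cleaner, orientation-free picture of why only $\cO(1)$ adhesions suffice despite $t$ having unboundedly many children.
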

\begin{proof}
For $e=(x,x')\in E(T)$, let $\sigma(e)=\beta(x)\cap\beta(x')$ be the adhesion of edge $e$ in $(T,\beta)$. 

For $r_{\cT}$, let $T_{r_{\cT}}=\bigcup_{x\in \tau^{-1}(r_\cT)} \sigma((x,\parent(x)))$ and $\gamma(r_{\cT})=r$.
For $t\in V(\cT)$, $t\neq r_{\cT}$, let $E_t=\{e:e=(x,y)\in E(T),x\in\tau^{-1}(\parent(t)), y\in \tau^{-1}(t)\}$. 
Then let $Y_t=\bigcup_{x\in \tau^{-1}(r_\cT)} \sigma((x,\parent(x))) \cup \bigcup_{e\in E_T}\sigma(e)$. 

For each $x$ in $\tau^{-1}(\parent(t))$, there exists at most one $y\in \tau^{-1}(t)$ such that $(x,y)\in E(T)$. This is because $T[V_t]$ is connected, where $t=\bigcup_{x\in V(\cT_t)}\tau^{-1}(x)$. If $x$ has an edge to two vertices in $\tau(t)$, then there would be a cycle in $T$. Thus, $|E_t|\leq 4$. Further since $\cT$ is from a nice tree partition, $|\tau^{-1}(t)|\leq 4|$. Therefore $|Y_t|\leq 8k$ for each $t\in V(\cT)$.

For $t\neq r_{\cT}$, $t\in V(\cT)$. Since $(T,\beta)$ is a tree decomposition and $E_t$ are the only set of edges in $\cT$ between vertices in $\tau^{-1}(t)$ and $\tau^{-1}(\parent(t))$. Further since $(\cT,\beta_1)$ is a tree decomposition with $\beta_1(t)=\bigcup_{x\in \tau^{-1}(t)} \beta(x)$, $\beta_1(\parent(t))\cap \beta_1(t)\subseteq Y_t$.

Recall that $(T, \beta)$ is a tree decomposition. Therefore, for any two nodes $x,y \in V(T)$, consider a vertex $v \in \beta(x) \cap \beta(y)$. Note that $v$ must belong to every bag of the node appearing on the unique $x$ to $y$ path in $T$. Let $z \in V(T)$ be the least common ancestor of $x$ and $y$ -- note that $z$ may be equal to $x$ or $y$ or neither. Suppose $z \not\in \LR{x, y}$. Then, $v$ must appear in $\beta(x) \cap \beta(\parent(x)) = \sigma(x, \parent(x))$, as well as in $\sigma(y, \parent(y))$. Otherwise, if $z = x$ (w.l.o.g.), then $v \in \sigma(y, \parent(y))$. Since $Y_t \supseteq \sigma(x, \parent(x)) \cup \sigma(y, \parent(y))$, we get the third property.

Let $t_c\in V(\cT)$, $t_c\neq r_{\cT}$. Further let $t=\parent(t_c)$ in $\cT$. We now show that for all but at most one vertex $x\in \tau^{-1}(t)$, $\beta_1(t_c)\cap \beta(x)\subseteq Y_t$. If for all $x\in \tau^{-1}$, $\beta_1(t_c)\cap \beta(x)\subseteq Y_t$ then we assign $\gamma(t_c)=y$ for some $y\in \tau^{-1}$. In this case, property $4$ directly holds. Otherwise there is one vertex $x\in \tau^{-1}(t)$ such that $\beta_1(t_c)\cap \beta(x)$ is not a subset of $Y_t$, then we assign $\gamma(t_c)=x$. Here too, property $4$ holds.

To complete the proof, we show that for all but at most one vertex $x\in \tau^{-1}(t)$, $\beta_1(t_c)\cap \beta(x)\subseteq Y_t$. Since $\cT$ is a nice tree partition,  $T[V_{t_c}]$ is a subtree of $T$, where  $V_{t_c} = \bigcup_{x\in V(\cT_{t_c})}\tau^{-1}(x)$. Next $V_{t_c}$ does not contain $r_T$ because $t_c\neq r_\cT$ and $\tau(r_t)=r_\cT$. Further $V_{t_c}$ is tree in the forest $T\setminus \tau^{-1}(t)$. Every vertex in $\tau^{-1}(t)$ has at most one neighbor in $V_{t_c}$ otherwise it will form a cycle. Since $T$ is a rooted tree there is at most one node $x\in \tau^{-1}(t)$ whose neighbor in $V_{t_c}$ is not an ancester(or parent) of $x$. Thus for all others $\beta(x)\cap \beta_1(t_c)\subseteq \sigma(x,\parent(x))\subseteq Y_T$. 
\end{proof}

Let $\gamma:V(\cT)\rightarrow V(T)$ and $Y_t\subseteq \beta_1(t)$ for each node $t\in \cT$ be function and sets given by Lemma~\ref{lemma:tree_partition_decomp_prop}. We now define a pair $(T^*,\beta^*)$ based on $T,\cT,\gamma$ and $Y_t$ that we will prove to be a tree decomposition of $G$ having all our desired properties including unbreakable bags. 
Let $T^*$ be a graph with $V(T^*)=V(T)\cup V(\cT)$ and $E(T^*)=\{(\tau(x),x):x\in T\}\cup\{(\gamma(t),t):t\in \cT\setminus \{r_{\cT}\}\}$. Also let $\beta^*:V(T^*)\rightarrow 2^{V(G)}$ be a function with $\beta^*(t)=Y_t$, for $t\in \cT$ and $\beta^*(x)=Y_{\tau (x)}\cup \beta(x)$ for $x\in T$. We now show that $(T^*,\beta^*)$ is a tree decomposition of $G$ (see \Cref{fig:final_td}).

\begin{lemma}\label{lemma:final_td}
$(T^*,\beta^*)$ is a rooted tree decomposition of $G$. Further $(T^*,\beta^*)$ satisfies the following properties:
\begin{enumerate}
    \item every adhesion of $(T^*,\beta^*)$ is of size at most $8k$
    \item every bag of $(T^*,\beta^*)$ is $(q+8k,k)$-unbreakable in $G$.
    \item $T^*$ has depth at most $2\lceil\log_2|V(G)|\rceil$
\end{enumerate}
\end{lemma}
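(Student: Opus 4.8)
The plan is to verify the three tree-decomposition axioms for $(T^*,\beta^*)$ first, and then establish the three quantitative properties. For the axioms, I would argue as follows. Axiom (T1) is immediate: every vertex $v \in V(G)$ lies in some bag $\beta(x)$ of $(T,\beta)$, hence in $\beta^*(x) = Y_{\tau(x)} \cup \beta(x)$. For (T2), every edge $uv \in E(G)$ is covered by some bag $\beta(x)$ of $(T,\beta)$, hence by $\beta^*(x)$. The real work is (T3), the connectivity condition. Fix $v \in V(G)$. The nodes of $T^*$ containing $v$ are the nodes $x \in V(T)$ with $v \in \beta(x)$ (call this set $B_T$, which is connected in $T$ since $(T,\beta)$ is a tree decomposition), together with the nodes $t \in V(\cT)$ with $v \in Y_t$ (call this $B_\cT$). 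I would show that the union $B_T \cup B_\cT$ induces a connected subgraph of $T^*$. The key observations are: (a) each node $x \in V(T)$ is attached in $T^*$ to $\tau(x) \in V(\cT)$, and if $v \in \beta(x)$ but $v \notin Y_{\tau(x)}$, then $x$ must be "internal" to $\tau^{-1}(\tau(x))$ in a sense that its $T$-neighbours carrying $v$ are also mapped to $\tau(x)$ — this is where property 3 of Lemma~\ref{lemma:tree_partition_decomp_prop} is used, since if $v$ appeared in two bags $\beta(x), \beta(y)$ with $x,y \in \tau^{-1}(t)$ distinct then $v \in Y_t$; (b) each node $t \in V(\cT)$ is attached in $T^*$ to $\gamma(t) \in V(T)$, and by property 4 of Lemma~\ref{lemma:tree_partition_decomp_prop}, $\beta_1(t) \cap \beta_1(\parent(t)) \subseteq Y_t$ and the "missing" part of $\beta_1(t_c) \cap \beta_1(t)$ is contained in $\beta(\gamma(t_c))$. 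I would combine these to show that whenever $v$ is in some $Y_t$, there is a path in $T^*$ through $\gamma(t)$ connecting $t$ to the part of $B_T$ living under $t$, and whenever $v$ "jumps" between two $\cT$-nodes in $(\cT,\beta_1)$, the corresponding adhesion is captured by the $Y$'s so connectivity is preserved. A clean way to organize this is to use the known connectivity of $v$ in $(\cT,\beta_1)$ (which holds by Lemma~\ref{lemma:treedecomp_from_treepartition}) and the connectivity of $v$ in $(T,\beta)$, and show that $T^*$ "refines" both in a way that glues them consistently.

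For the quantitative properties: Property 1 (adhesion size $\le 8k$). The edges of $T^*$ are of two types. For an edge $(\tau(x), x)$ with $x \in V(T)$, the adhesion is $\beta^*(\tau(x)) \cap \beta^*(x) = Y_{\tau(x)} \cap (Y_{\tau(x)} \cup \beta(x)) = Y_{\tau(x)}$, which has size at most $8k$ by property 1 of Lemma~\ref{lemma:tree_partition_decomp_prop}. For an edge $(\gamma(t), t)$ with $t \in V(\cT) \setminus \{r_\cT\}$, the adhesion is $\beta^*(\gamma(t)) \cap \beta^*(t) = (Y_{\tau(\gamma(t))} \cup \beta(\gamma(t))) \cap Y_t$; using property 4 of Lemma~\ref{lemma:tree_partition_decomp_prop}, $\tau(\gamma(t)) = \parent(t)$, and I would bound this intersection by $Y_t$ (size $\le 8k$) after checking the relevant containments. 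Actually the adhesion here is a subset of $Y_t$, so it is again at most $8k$.

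Property 2 (unbreakability). Every bag of $(T^*,\beta^*)$ is either $Y_t \subseteq \beta_1(t)$ for some $t$, or $Y_{\tau(x)} \cup \beta(x)$ for some $x \in V(T)$. For bags of the first type: $|Y_t| \le 8k$, and any set of size at most $8k$ is trivially $(8k,k)$-unbreakable (one side of any cut has at most $8k$ vertices of it), hence $(q+8k,k)$-unbreakable. For bags of the second type: $\beta(x)$ is $(q,k)$-unbreakable by hypothesis on $(T,\beta)$, and $Y_{\tau(x)}$ adds at most $8k$ extra vertices; for any edge cut $(A,B)$ of order at most $k$, the side on which $\beta(x)$ has at most $q$ vertices receives at most $q + 8k$ vertices of $Y_{\tau(x)} \cup \beta(x)$. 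This gives $(q+8k,k)$-unbreakability. Property 3 (depth $\le 2\lceil \log_2 |V(G)| \rceil$): in $T^*$, every step along a root-to-leaf path alternates between a $\cT$-node and a $T$-node (a $\cT$-node $t$ has as $T^*$-children the nodes $x \in \tau^{-1}(t)$ plus the nodes $t_c \in V(\cT)$ with $\gamma(t_c) \in \tau^{-1}(t)$... here one must be careful about who the parent is), and since $\cT$ has depth at most $\lceil \log_2 |V(T)| \rceil \le \lceil \log_2 |V(G)| \rceil$ by Lemma~\ref{lemma:tp_tree}, and each $\cT$-node contributes at most one intermediate $T$-layer, the depth at most doubles.

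The main obstacle will be the proof of axiom (T3): one has to carefully track how the sets $Y_t$ and the redirection of edges to $\gamma(t)$ interact, and in particular to verify that a vertex $v$ which is "carried upward" through several $Y_t$'s stays in a connected part of $T^*$ — this requires simultaneously using properties 2, 3, and 4 of Lemma~\ref{lemma:tree_partition_decomp_prop} and the fact that $T[V_t]$ is a subtree for each $t$. The rest is bookkeeping.
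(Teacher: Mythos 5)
Your proposal is correct and follows essentially the same route as the paper's proof: the quantitative properties (adhesion size via the $Y_t$'s, unbreakability by adding at most $8k$ vertices to a $(q,k)$-unbreakable bag, depth via the alternating $\cT$-node/$T$-node layers) are argued identically, and your plan for axiom (T3) --- fixing the highest node $t$ of $\cT$ with $v\in\beta_1(t)$, using property 3 of Lemma~\ref{lemma:tree_partition_decomp_prop} to get a unique carrier of $v$ in $\tau^{-1}(t)$, and tracing $v$ along the alternating path $t_1,\gamma(t_1),t_2,\gamma(t_2),\dots$ via properties 2 and 4 --- is exactly the argument the paper carries out. The only point you do not address is that $T^*$, defined as an abstract graph with redirected edges, is in fact a tree; the paper checks this by noting $|E(T^*)|=|V(T^*)|-1$ and that every node reaches $r_{\cT}$, using that $\tau(\gamma(t_c))=\parent(t_c)$ (property 4) makes $t_c\to\gamma(t_c)\to\parent(t_c)$ a valid two-step ascent.
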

\begin{proof}
 We first prove {\em $T^*$ is a tree}. By definition, $T^*$ has at most $|V(T^*)|-1$ edges. To show $T^*$ is connected, we prove that each vertex in $V(T^*)$ can reach $r_{\cT}$ in $T^*$. If $v\in \cT\setminus \{r_{\cT}\}$, then there is a path $(t_1=v),t_2,\cdots,(t_l=r_{\cT})$ from $v$ to $r_{\cT}$ in $\cT$. Let $i\in [l-1]$. Observe that $t_i$ is a child of $t_{i+1}$ in $\cT$ and $t_{i+1}=\tau(\gamma(t_i))$ by property $4$ in Lemma~\ref{lemma:tree_partition_decomp_prop}. Thus by construction of $T^*$, $(t_{i},\gamma(t_i))$ and $(\gamma(t_i),t_{i+1})\in E(T^*)$, where $t_{i+1}=\tau(\gamma(t_i))$. Further $\gamma(t_i)\neq \gamma(t_j)$ for each distinct pair $i,j\in [l-1]$ since $\tau(\gamma(t_i))=t_{i+1}\neq t_{j+1}=\tau(\gamma(t_j))$.   
 Thus $(t_1=v),\gamma(t_1),t_2,\gamma(t_2),\cdots,(t_l=r_\cT)$ is a path from $v$ to $r_{\cT}$ in $T^*$. Further notice that this path does not contain any vertex from $\tau^{-1}(v)$ since $\gamma(t_i)\notin \tau^{-1}(v)$ for $i\in [l-1]$. 
 For each $x\in T$, either $\tau(x)=r_{\cT}$ or there is a path from $\tau(x)$ to $r_{\cT}$ without using $x$ in $\cT$. Also by construction $(x,\tau(x))$ is an edge in $T^*$ and thus there is a path from $x$ to $r_{\cT}$.  

 Next we show that $(T^*,\beta^*)$ is a tree decomposition of $G$. For each vertex $x\in V(T)$, it holds that $x\in V(T^*)$ and $\beta(x)\subseteq \beta^*(x)$. Thus since $(T,\beta)$ is a tree decomposition, each vertex $v\in V(G)$ and edge $e\in E(G)$ is contained in some bag $\beta(x)$ in $(T,\beta)$ and thus in the bag $\beta^*(x)$ in $(T^*,\beta^*)$. Lastly we need to show that for each node $v\in V(G)$, the bags containing $v$ in $(T^*,\beta^*)$ form a connected component in $T^*$.
 %
 %
 We first make an observation relating $(\cT,\beta_1)$ and $(\cT^*,\beta^*)$. For $t\in \cT$, let $X_t=\{t\}\cup \tau^{-1}(t)$. Observe that for each $t\in \cT$, contracting the bags corresponding to vertices in $X_t$ in $(T^*,\beta^*)$ into a single bag yields us the tree decomposition $(\cT,\beta_1)$. 
 
 Let $v\in V(G)$ and let $t\in \cT$ be the node closest to the root in $\cT$ whose bag contains $v$, i.e $v\in \beta_1(t)$. Such a node exists since $(\cT,\beta_1)$ is a rooted tree decomposition. 
 Further let $\beta^*(l)$ and $\beta^*(l')$ be two bags containing $v$ in $(T^*,\beta^*)$, $l,l'\in X_t$. We show that each bag in the path from $l'$ to $l$ in $(T^*,\beta^*)$ contains $v$. 
 If $l=t,l'\in \tau^{-1}(t)$, then $(l,l')\in E(T^*)$. Otherwise $l,l'\in \tau^{-1}(t)$. Here $\beta(l)\cap \beta(l')\subseteq Y_t$ (Lemma~\ref{lemma:tree_partition_decomp_prop}, property 3) and thus $v$ belongs to $\beta^*(t)=Y_t$. Further $(l,t), (t,l')\in E(T^*)$ and thus $(l,t,l')$ is a path in $T^*$ whose each bag contains $v$. 

 From now, we fix $l$ to be a node defined as follows. If $v \in \beta^*(t)$, then $l=  t$; otherwise let $l \in \tau^{-1}(t)$ be the unique node in $X_t$ such that $v\in \beta^*(x)$. Let $l'$ be a node in $\cT\setminus X_t$ such that $v\in \beta^*(l')$. We show that each bag in the path from $l'$ to $l$ contains $v$. This will show that the bags containing $v$ in $(T^*,\beta^*)$ form a connected component in $T^*$. This is because all nodes can reach $l$ using a path whose each bag contains $v$.

Observe that $l'$ is an ancestor of $l$ and that $l'\in X_{t'}$, $l\in X_{t}$, where $t'\neq t$. Further let $P$ be the path from $l'$ to $l$ in $\cT$. We first show that all bags $\beta^*(y)$ where $y\in P\setminus X_{t}$ contain $v$. Then we divide further into subcases to show that the same is true for each $y\in P\cap X_t$ as well. 
Since $v\in \beta_1(t')$ and $v\in \beta_1(t)$, there is a path $(t_1=t'),t_2,\cdots,(t_q=t)$ from $t'$ to $t$ in $(\cT,\beta_1)$ in which each bag $\beta_1(t_i)$, $i\in [q-1]$ contains $v$. Further for each $i\in [q-1]$, $v\in Y_{t_i}$ since the adhesion $\beta_1(t_i)\cap \beta_1(t_{i+1})\subseteq Y_{t_i}$ by property 2 of Lemma~\ref{lemma:tree_partition_decomp_prop}.

The path $P\setminus X_t$ is either $(l',t_1,\gamma(t_1),t_2,\gamma(t_2),\cdots,t_{q-1})$ or $((t_1=l'),\gamma(t_1),t_2,\gamma(t_2),\cdots,t_{q-1})$. By definition of $\beta^*$, $\beta^*(t_j)=Y_j, j\in[q-1]$ and $\beta^*(\gamma(t_i))=Y_{\tau(\gamma(t_i))}\cup \beta(t_i)$, $i\in [q-2]$. By property 4 of Lemma~\ref{lemma:tree_partition_decomp_prop}, $\tau(\gamma(t_i))=t_{i+1}$ and so $Y_{t_{i+1}}\subseteq \beta^*(\gamma(t_i))$. 
Thus for each $i\in [q-1]$ and for each $j\in[q-2]$, $v\in Y_{t_i}$, $\beta^*(t_i)=Y_{t_i}$, and $Y_{t_{j+1}}\subseteq \beta^*(\gamma(t_j))$. This shows that for each $y\in P\setminus X_{t}$, $v\in \beta^*(y)$. 

Now we have cases for proving $v\in \beta^*(y)$ for each $y\in P\cap X_t$. By our assumption, $v\in \beta^*(l)$. (a) If $P\cap X_t=\{l\}$, then we are done. (b) If $|P\cap X_t|=2$ then $P\cap X_t$ has to be $(\gamma(t_{q-1}),t_q=l)$. Here by an argument similar to we did for $\gamma(t_j)$, $j\in [q-2]$, $Y_{t_{q}}\subseteq \beta^*(\gamma(t_{q-1}))$ and thus since $\beta^*(l)=Y_{t_{q}}$, $v\in \beta^*(\gamma(t_{q-1}))$. 
(c) Finally the only other possibility for $P\cap X_t$ is $(\gamma(t_{q-1}),t_q,l)$. Here $l\neq \gamma(t_{q-1})$. By property $4$ of Lemma~\ref{lemma:tree_partition_decomp_prop}, $\beta_1(t_{q-1})\cap \beta_1(t_q)\subseteq Y_{t_q}\cup \beta(\gamma(t_{q-1}))$. We know that $v\in \beta_1(t_{q-1})\cap \beta_1(t_q)$. Thus $v\in Y_{t_q}\cup \beta(\gamma(t_{q-1}))$. Since $\beta^*(\gamma(t_q))=Y_{t_q}\cup \beta(\gamma(t_{q-1}))$, $v\in \beta^*(\gamma(t_q))$. But here since $l\neq t$, $l$ should have been the unique node in $X_t$ whose bag contains $v$. So this case cannot occur.  This completes the proof of the connectivity property and show that $(T^*,\beta^*)$ is a tree decomposition.

We now {\em prove properties $1-3$}. $(1)$ By Lemma~\ref{lemma:tree_partition_decomp_prop}, for each $y\in \cT$, $|Y_t|\leq 8k$. Further for each $t\in \cT$, $\beta^*(t)=Y_t$ and for each $x\in \cT$, $\beta^*(x)\cap \beta^*(\parent(x))=\beta^*(x)\cap \beta^*(\tau(x))=(\beta(x)\cup Y_{\tau(x)}) \cap Y_{\tau(x)} = Y_{\tau(x)}$. Thus each adhesion in $(\cT^*,\beta^*)$ has size at most $8k$.
$(2)$ For each $t\in \cT$, $\beta^*(t)=Y_t$ and $|Y_t|\leq 8k$. For each $x\in T$, $\beta(x)$ is $(q,k)$-unbreakable and $\beta^*(x)=\beta(x)\cup Y_{\tau(x)}$. Thus each bag in $(T^*,\beta^*)$ is $(q+8k,k)$-unbreakable.
$(3)$ For each $x\in T$, $\parent(x)=\tau(X)$. Further for each $t\in \cT$, contracting the bag corresponding to vertices in $X_t$ in $(T^*,\beta^*)$ into a single bag yields $(\cT,\beta_1)$. Since the depth of $\cT$ is at most $\lceil\log_2|V(G)|\rceil$, the depth of $T^*$ is at most $2\lceil\log_2|V(G)|\rceil$.
\end{proof}
We are now ready to prove our main theorem.
\begin{theorem}\label{theorem:low_depth_ub_td_poly}
There exists a polynomial-time algorithm that takes input an $n$-vertex graph $G$ and positive integers $k$ and $q$, and a rooted tree decomposition $(T,\beta)$ of $G$ satisfying the following properties:
\begin{enumerate}
    \item every adhesion of $(T,\beta)$ is of size at most $k$
    \item every bag of $(T,\beta)$ is $(q,k)$-unbreakable in $G$
\end{enumerate}
and finds a compact tree decomposition $(T',\beta')$ of $G$ satisfying the following properties:
\begin{enumerate}
    \item every adhesion of $(T',\beta')$ is of size at most $8k$
    \item every bag of $(T',\beta')$ is $(q+8k,k)$-unbreakable in $G$.
    \item $T'$ has depth at most $2\lceil\log_2n\rceil$.
\end{enumerate}
\end{theorem}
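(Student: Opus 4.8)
The plan is to chain the four lemmas proved above and then finish with a compactification step. Starting from the input decomposition $(T,\beta)$, apply Lemma~\ref{lemma:tp_tree} to the underlying tree $T$ to obtain, in polynomial time, a rooted nice tree partition $(\cT,\tau)$ of $T$, and set $\beta_1(t):=\bigcup_{x\in\tau^{-1}(t)}\beta(x)$ for each $t\in V(\cT)$. By Lemma~\ref{lemma:treedecomp_from_treepartition}, $(\cT,\beta_1)$ is a tree decomposition of $G$, and by the defining properties of a nice tree partition it has depth at most $\lceil\log_2 n\rceil$, with each bag a union of at most four bags of $(T,\beta)$. Invoking Lemma~\ref{lemma:tree_partition_decomp_prop} on $(T,\beta),(\cT,\tau),\beta_1$ gives, still in polynomial time, the function $\gamma\colon V(\cT)\to V(T)$ and the sets $Y_t\subseteq\beta_1(t)$ with $|Y_t|\le 8k$, and then one assembles the pair $(T^*,\beta^*)$ exactly as defined just before Lemma~\ref{lemma:final_td}, namely $V(T^*)=V(T)\cup V(\cT)$, $E(T^*)=\{(\tau(x),x):x\in V(T)\}\cup\{(\gamma(t),t):t\in V(\cT)\setminus\{r_{\cT}\}\}$, $\beta^*(t)=Y_t$ for $t\in V(\cT)$ and $\beta^*(x)=Y_{\tau(x)}\cup\beta(x)$ for $x\in V(T)$. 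Lemma~\ref{lemma:final_td} then certifies that $(T^*,\beta^*)$ is a rooted tree decomposition of $G$ whose adhesions have size at most $8k$, whose bags are $(q+8k,k)$-unbreakable, and whose depth is at most $2\lceil\log_2 n\rceil$; that is, all three numbered conclusions of the theorem already hold for $(T^*,\beta^*)$, and the only missing property is compactness.

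To obtain compactness I would exhaustively apply two local operations to $(T^*,\beta^*)$, each of which only shrinks bags or replaces a node by an antichain of nodes at the same depth. First, whenever a non-root node $t$ has a vertex $v\in\sigma(t)$ with no neighbour in $\alpha(t)$, delete $v$ from $\beta^*(s)$ for every node $s$ in the subtree rooted at $t$; since then every neighbour of $v$ inside $\gamma(t)$ lies in $\sigma(t)\subseteq\beta^*(\parent(t))$ while $v$ itself stays in $\beta^*(\parent(t))$, this preserves the tree-decomposition axioms, and on termination it guarantees $N_G(\alpha(t))=\sigma(t)$ for every node $t$. Second, whenever $G[\alpha(t)]$ splits into connected components $C_1,\dots,C_m$ with $m\ge 2$, replace $t$ by $m$ siblings $t_1,\dots,t_m$ with $\beta^*(t_i):=\sigma(t)\cup(\beta^*(t)\cap C_i)\subseteq\beta^*(t)$, each attached to $\parent(t)$, and reattach every child of $t$ below the node $t_i$ whose component its downward part meets; this makes each $G[\alpha(t_i)]$ connected. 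Both operations keep all new adhesions as subsets of old adhesions (so the bound $8k$ survives), keep all bags as subsets of old bags (so $(q+8k,k)$-unbreakability survives, since any subset of a $(q+8k,k)$-unbreakable set is itself $(q+8k,k)$-unbreakable), and never create a node deeper than the one it replaces (so the depth bound $2\lceil\log_2 n\rceil$ survives); they also terminate after polynomially many steps, since bags only shrink and the number of new nodes stays polynomial, yielding in polynomial time a compact decomposition $(T',\beta')$ with the three required properties.

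I expect the compactification to be the one part that needs real care, for two reasons. First, the depth bound $2\lceil\log_2 n\rceil$ is exactly what Lemma~\ref{lemma:final_td} delivers and leaves no slack, so the ``obvious'' compactification moves must be avoided: one cannot contract a node into its parent (that would also union two bags and ruin the $(q+8k,k)$-unbreakability guarantee), and the disconnected-$\alpha(t)$ case must be resolved by a same-level split rather than by inserting a new level. Second, one has to verify that the sibling-split above really produces a tree decomposition, in particular that each child of $t$ meets exactly one component of $G[\alpha(t)]$, so that the reattachment is well defined and no vertex occurrence gets disconnected; this is cleanest to argue if the two operations are applied in a bottom-up order, so that connectivity of $G[\alpha(\cdot)]$ is already established at the descendants of $t$ before $t$ is processed. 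Everything else in the proof is the bookkeeping of composing the four lemmas.
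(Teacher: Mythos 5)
Your proposal is correct and follows essentially the same route as the paper: chain Lemma~\ref{lemma:tp_tree}, Lemma~\ref{lemma:treedecomp_from_treepartition}, Lemma~\ref{lemma:tree_partition_decomp_prop} and Lemma~\ref{lemma:final_td} to build $(T^*,\beta^*)$ with all three quantitative properties, then compactify without enlarging bags or increasing depth. The only difference is that the paper delegates the final compactification to a known result of Bojańczyk and Pilipczuk, whereas you spell out the two local operations (shrinking $\sigma(t)$ to $N_G(\alpha(t))$ and splitting nodes with disconnected $\alpha(t)$ into same-level siblings) explicitly; your argument for why these preserve the adhesion, unbreakability and depth bounds is sound.
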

\begin{proof}
Let $(T,\beta)$ be the input tree decomposition of $G$. We first compute a nice tree partition $(\cT,\tau)$ of $T$ using Lemma~\ref{lemma:tp_tree}. Then we obtain the tree decomposition $(\cT,\beta_1)$ of $G$ where $\beta_1:V(G)\rightarrow V(\cT)$ and $\beta_1(t)=\bigcup_{x\in \tau^{-1}(t)} \beta(x)$ -- it is a tree decomposition by Lemma~\ref{lemma:treedecomp_from_treepartition}.


Let $\beta^*:V(T^*)\rightarrow 2^{V(G)}$ be a function with $\beta^*(t)=Y_t$, for $t\in \cT$ and $\beta^*(x)=Y_{\tau (x)}\cup \beta(x)$ for $x\in T$. We compute the tree decomposition $(T^*,\beta^*)$ with $V(T^*)=V(T)\cup V(\cT)$ and $E(T^*)=\{(\tau(x),x):x\in T\}\cup\{(\gamma(t),t):t\in \cT\setminus \{r_{\cT}\}\}$. By Lemma~\ref{lemma:final_td} it satisfies all our required properties except compactness. We can in polynomial time obtain a compact tree decomposition $(T',\beta')$ whose each bag is a subset of some bag of $(T^*,\beta^*)$ and whose height is the same as $\cT^*$~\cite{BojanczykP16}. Thus the tree decomposition $(T',\beta')$ will satisfy all our required properties.
\end{proof}

\begin{proposition}[\cite{CyganKLPPSW21}]\label{prop:unbreakable-treedecomp}
Given an $n$-vertex graph $G$ and an integer $k$, one can in time $2^{\cO(k\log k)}n^{\cO(1)}$ compute a rooted compact tree decomposition $(T,\beta)$ of $G$ such that:
\begin{enumerate}
    \item Every adhesion of $(T,\beta)$ is of size at most $k$.
    \item Every bag $B_t$ of $(T,\beta)$ is $(i,i)$-unbreakable in $G$ for every $i\in [k]$. 
\end{enumerate}
\end{proposition}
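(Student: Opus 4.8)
This statement is quoted from \cite{CyganKLPPSW21}, so formally the proof is a citation; what follows is a sketch of how one would reprove it, essentially reconstructing the ``recursive understanding''/randomized-contractions argument of Cygan et al.

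\textbf{Overall strategy.} Build the decomposition top-down, recursively splitting the graph along ``good edge separations'' until every remaining piece is unbreakable in the required sense. The recursion carries a \emph{border} $S$ (a vertex set of size at most $k$ that will become the adhesion of the current subtree to its parent); it is invoked initially with $S=\emptyset$ and $H=G$.

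\textbf{Step 1: the separation-finding primitive.} First I would design an algorithm that, given a graph $H$ and integer $k$, runs in time $2^{\cO(k\log k)}\polyn$ and either (a) outputs a pair $(i,(A,B))$ with $i\in[k]$, an edge cut $(A,B)$ of $H$ of order at most $i$, and $\min(|A|,|B|)\ge i+1$ (a \emph{witness of breakability}), or (b) certifies that $V(H)$ is $(i,i)$-unbreakable for every $i\in[k]$. The standard implementation is the randomized-contractions technique: for the guessed value of $i$, use color coding -- derandomized by the splitter families of Theorem~\ref{splitter2} and Lemma~\ref{splittersetlemma} -- to ``highlight'' the at most $i$ edges of an inclusion-minimal witnessing cut together with a constant-size hitting structure on its smaller side, contract everything that is not highlighted, and read the cut off the resulting small instance. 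Summing over the $k^{\cO(1)}\log n$ splitter functions and over the choices of $i$ yields the claimed running time.

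\textbf{Step 2: recursion with border control.} The routine $\mathrm{Decompose}(H,S)$, with $|S|\le k$, returns a rooted tree decomposition of $H$ whose root adhesion is contained in $S$, with all adhesions of size at most $k$ and all bags $(i,i)$-unbreakable for every $i\le k$. If Step 1 certifies that $V(H)$ is unbreakable enough (and a mild compatibility condition with $S$ holds), return the single-bag decomposition with bag $V(H)$. Otherwise take a witness $(i,(A,B))$, uncross it with $S$ and push it to a ``closest''/important separator using submodularity of the edge-boundary function, so that the resulting separator together with the portion of the old border that must stay attached still has total size at most $k$; this produces two strictly smaller sub-instances $(H_1,S_1)$ and $(H_2,S_2)$. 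Recurse on each and glue the returned decompositions as the two children of a fresh root bag equal to $S$ together with the new separator.

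\textbf{Step 3: termination, running time, and cleanup.} Every split is along a cut of order at most $k$ with at least $i+1\ge 2$ vertices on each side, so a potential argument bounds the number of recursive calls by $\polyn$, and each call costs $2^{\cO(k\log k)}\polyn$ (dominated by Step 1), giving total time $2^{\cO(k\log k)}\polyn$. Finally, apply the compaction procedure of \cite{BojanczykP16} (already invoked in the proof of Theorem~\ref{theorem:low_depth_ub_td_poly}) to obtain a compact decomposition, which does not increase any adhesion nor weaken unbreakability of the bags.

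\textbf{Main obstacle.} The crux is the border-control step: after a split, each new adhesion consists of a freshly found separator glued to leftovers of the old border $S$, and one must prove this union still has size at most $k$. This is precisely the uncrossing/duality argument underlying lean tree decompositions -- choosing the witnessing cut minimal and arguing via submodularity that merging it with $S$ does not exceed $k$ -- and it is simultaneously what guarantees that the leaf bags are $(i,i)$-unbreakable for \emph{all} $i\le k$ rather than only for $i=k$. A secondary technical point is fitting the derandomized separation-finder of Step~1 into the $2^{\cO(k\log k)}\polyn$ budget, which is exactly what the splitter constructions of Section~\ref{sec:prelims} are set up to deliver.
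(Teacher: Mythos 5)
The paper offers no proof of this proposition: it is imported as a black box from \cite{CyganKLPPSW21}, so your opening observation that the proof is formally a citation already matches the paper exactly. Your supplementary sketch is a reasonable high-level account of the recursive-understanding/lean-decomposition argument in that reference (derandomized randomized contractions to find a breakability witness, uncrossing with the inherited border to keep adhesions of size at most $k$, a potential function for termination, and compaction at the end), but since the paper never engages with any of these details there is nothing further to compare.
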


The following corollary directly follows from a known result (\cite{CyganKLPPSW21}) that outputs a tree decomposition of a graph satisfying the premise of \Cref{theorem:lowdepth-near-unbreakable-treedecomp} in time $2^{\cO(k \log k)} \cdot n^{\cO(1)}$
Proposition~\ref{prop:unbreakable-treedecomp} and Theorem~\ref{theorem:low_depth_ub_td_poly}.

\begin{corollary}\label{theorem:lowdepth-near-unbreakable-treedecomp}
Given an $n$-vertex  graph $G$ and  an  integer $k$,  one  can  in  time $2^{\cO(k\log k)}n^{\cO(1)}$ compute a rooted compact tree decomposition $(T,\beta)$ of $G$ such that:
\begin{enumerate}
    \item Every adhesion of $(T,\beta)$ is of size at most $8k$.
    \item Every bag of $(T,\beta)$ is $(9k,k)$-unbreakable in $G$.
    \item $T$ has depth at most $2\lceil\log_2 n\rceil$.
\end{enumerate}
\end{corollary}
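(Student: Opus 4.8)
The plan is to obtain the statement by simply chaining together Proposition~\ref{prop:unbreakable-treedecomp} with Theorem~\ref{theorem:low_depth_ub_td_poly}; all the substantive work has already been done in those two results, so the argument is essentially a matter of bookkeeping the parameters and checking that the hypotheses line up.

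First I would invoke Proposition~\ref{prop:unbreakable-treedecomp} on the pair $(G,k)$. This produces, in time $2^{\cO(k\log k)}n^{\cO(1)}$, a rooted compact tree decomposition $(T_0,\beta_0)$ of $G$ in which every adhesion has size at most $k$ and every bag is $(i,i)$-unbreakable for every $i\in[k]$. In particular, taking $i=k$, every bag of $(T_0,\beta_0)$ is $(k,k)$-unbreakable. Hence $(T_0,\beta_0)$ satisfies the two hypotheses of Theorem~\ref{theorem:low_depth_ub_td_poly} with the choice $q=k$: its adhesions have size at most $k$, and its bags are $(q,k)$-unbreakable.

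Next I would feed $G$, the integers $k$ and $q=k$, and the tree decomposition $(T_0,\beta_0)$ into the polynomial-time algorithm of Theorem~\ref{theorem:low_depth_ub_td_poly}. Its output is a compact tree decomposition $(T,\beta)$ of $G$ in which every adhesion has size at most $8k$, every bag is $(q+8k,k)$-unbreakable, and $T$ has depth at most $2\lceil\log_2 n\rceil$. Substituting $q=k$ yields exactly the claimed $(9k,k)$-unbreakability, and the adhesion-size and depth bounds are precisely those in the statement. Since this second step runs in polynomial time, the total running time is $2^{\cO(k\log k)}n^{\cO(1)} + n^{\cO(1)} = 2^{\cO(k\log k)}n^{\cO(1)}$, as required.

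The only points needing a moment's care — and thus the closest thing to an obstacle — are the parameter bookkeeping (verifying that $(i,i)$-unbreakability for all $i\in[k]$ does supply the $(k,k)$-unbreakability needed to take $q=k$) and confirming that compactness is actually delivered without a separate compactification step: Proposition~\ref{prop:unbreakable-treedecomp} outputs a compact decomposition, and Theorem~\ref{theorem:low_depth_ub_td_poly} is explicitly stated to output a compact decomposition, so compactness propagates through the composition. There is no genuine technical difficulty here; the corollary is a packaging of the two preceding results.
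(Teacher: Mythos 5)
Your proposal is correct and is exactly the paper's argument: the corollary is obtained by composing Proposition~\ref{prop:unbreakable-treedecomp} (which in particular yields $(k,k)$-unbreakable bags, so $q=k$) with the polynomial-time algorithm of Theorem~\ref{theorem:low_depth_ub_td_poly}. The parameter bookkeeping $(q+8k,k)=(9k,k)$ and the running-time accounting are as you describe.
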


\section{Exact and Approximation algorithms} \label{sec:algo}
Let $(G,c,k,\sr,\chi)$ be an instance of \fb and let $n = |V(G)|$. 
We start by invoking the algorithm of Theorem~\ref{theorem:lowdepth-near-unbreakable-treedecomp} with $G$ and $k$ to obtain a rooted compact tree decomposition $(T, \beta)$ of $G$, having $(9k, k)$-edge-unbreakable bags and adhesions of size at most $8k$. This takes time $2^{\cO(k \log k)}n^{\cO(1)}$. Recall that an edge cut $(A, B)$ is $(\epsilon, \sr)$-{\em fair} if $A$ contains no more than $r_i(1+\epsilon)$  vertices colored $i$ and $B$ contains no more than $(c_i-r_i)(1+\epsilon)$ vertices colored $i$.
\begin{theorem} \label{thm:mainapprox}
Given an instance $(G,c,k,\sr,\chi)$ of \fb and $\epsilon>0$ there exists an algorithm that in time $2^{\cO(k \log k)}\cdot \lr{\frac{c}{\epsilon}}^{\cO(c)} \cdot n^{\cO(1)}$ finds an $(\epsilon, \sr)$-{\em fair} edge cut of $G$ if one exists and else returns no.
\end{theorem}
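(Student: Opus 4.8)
The plan is to run a dynamic programming algorithm over the low-depth, small-adhesion, unbreakable tree decomposition $(T,\beta)$ produced by \Cref{theorem:lowdepth-near-unbreakable-treedecomp}, keeping for each node $t$ a table indexed by (i) the bipartition of the adhesion $\sigma(t)$ that the partial solution induces, and (ii) a vector in $[n]^c$ recording, for each color $i$, how many vertices of color $i$ in $\gamma(t)\setminus\sigma(t)$ have been placed on the $A$-side — and then sparsify the color-count coordinates via Lampis' geometric rounding. A table entry stores the minimum number of cut edges inside $G_t$ achievable by a partial solution consistent with the given adhesion-bipartition and (rounded) color vector. Since the bags are $(9k,k)$-unbreakable and the adhesions have size $\Oh(k)$, any edge cut $(A,B)$ of order at most $k$ restricted to a bag has one side of size at most $9k$; this is what lets us enumerate the ``interface'' of a partial solution with a bag in time $2^{\Oh(k\log k)}$ (guess which of the two sides is the small one, then guess that side, a set of size at most $9k$ out of $\beta(t)$; the splitter/colour-coding machinery of \Cref{splittersetlemma} is used to make the enumeration over subtrees behave well when combining children), exactly as in the algorithm of Cygan et al.\ for \mb.

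**The DP and the merge step.** First I would set $\delta = \tfrac{\epsilon}{c\log^2 n}$ (or a similar value so that $(1+\delta)^{\text{depth}}\le 1+\epsilon$, using that the depth is $2\lceil\log_2 n\rceil$), and let $S=\{\lfloor (1+\delta)^j\rfloor : j\ge 0\}\cup\{0\}\cap[n]$, so $|S| = \Oh(\log_{1+\delta} n) = \Oh\!\big(\tfrac{c\log^3 n}{\epsilon}\big)$; the table at each node has at most $2^{\Oh(k)}\cdot |S|^c$ entries. Processing a leaf is direct. For an internal node $t$ with children $t_1,\dots,t_m$, I would process the children one at a time, maintaining a running partial table; to merge in child $t_j$ I guess the bipartition of $\beta(t)\cup\sigma(t_j)$ compatible with both the running interface and the child's adhesion entry — here unbreakability plus adhesion size $\Oh(k)$ keeps the number of relevant bipartitions at $2^{\Oh(k\log k)}$ — add the cut edges newly exposed inside $\beta(t)$, and \emph{add the two colour vectors coordinatewise, then round each coordinate down to the nearest element of $S$}. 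Rounding is applied once per merge; since the decomposition has depth $\Oh(\log n)$ and each root-to-leaf path has $\Oh(\log n)$ nodes, each colour count accumulates at most $\Oh(\log n)$ rounding steps along any path (this requires a small argument that rounding errors do not compound across siblings, only along ancestor chains — which holds because the colour-count coordinates of sibling subtrees are disjoint sums), so the final multiplicative error per coordinate is $(1+\delta)^{\Oh(\log n)}\le 1+\epsilon$. At the root, accept iff some entry with colour vector $\le (1+\epsilon)\sr$ coordinatewise has cut value $\le k$ and the complementary counts are $\le (1+\epsilon)(c_i-r_i)$; reconstruct the cut by standard back-pointers. Correctness in the ``yes'' direction: given a genuine balanced \sr-fair cut $(\hat A,\hat B)$ of order $\le k$, project it onto every $G_t$; unbreakability guarantees each projection is a legal table interface, and one shows by induction up the tree that the rounded colour vector it maps to is dominated, coordinatewise, by the true count scaled by at most $(1+\delta)^{\hgt(t)}$, so it survives to the root within the $(1+\epsilon)$ slack.

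**Running time accounting and the exact algorithm.** The per-node work is (number of running-table entries) $\times$ (number of child entries) $\times$ (number of bipartitions of $\beta(t)\cup\sigma(t_j)$ consistent with unbreakability, via \Cref{splittersetlemma}), i.e.\ $2^{\Oh(k\log k)}\cdot |S|^{\Oh(c)}$ per child, summed over $\Oh(n)$ nodes; with $|S| = \Oh(c\log^3 n/\epsilon)$ and the standard trick that $(\log n)^{\Oh(c)} = 2^{\Oh(c\log\log n)} = (c/\epsilon)^{\Oh(c)}\cdot n^{\Oh(1)}$ (splitting into the case $c\log\log n \le c\log(c/\epsilon)$ versus $\log n$ large enough that $(\log n)^{c}\le 2^{c\log\log n}\le n$), the whole thing is $2^{\Oh(k\log k)}\cdot(c/\epsilon)^{\Oh(c)}\cdot n^{\Oh(1)}$ as claimed. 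Setting $\epsilon$ below $1/n$ (or simply not rounding, $S=[n]$) recovers the exact $n^{\Oh(c)}k^{\Oh(k)}$ algorithm, which justifies the earlier footnote. \textbf{Main obstacle.} The delicate point — and what I expect to occupy most of the real proof — is the combining/merge step: rigorously defining the table semantics so that (a) a legitimate global solution always projects to a valid sequence of table entries despite the $\Oh(k)$-vs-$9k$ asymmetry in which side of a bag is ``small'', (b) the splitter-based enumeration of \Cref{splittersetlemma} actually produces, for every relevant partition of an adhesion/bag into the small side and the rest, some set in the family that isolates it, so no solution is missed, and (c) the error analysis is done per root-to-leaf path rather than per node, so that the total blow-up is $(1+\delta)^{\Oh(\log n)}$ and not $(1+\delta)^{\Oh(n)}$. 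Everything else — leaf initialization, back-pointer reconstruction, and the arithmetic on $|S|$ — is routine.
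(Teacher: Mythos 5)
Your overall architecture (low-depth unbreakable decomposition from \Cref{theorem:lowdepth-near-unbreakable-treedecomp}, tables indexed by adhesion-bipartition and a $c$-vector of color counts, geometric rounding of the color coordinates, splitter-based enumeration of the bag interface) matches the paper. But there is a genuine gap at exactly the step you flag as the main obstacle: the merge of the children of an internal node. You propose to fold the children $t_1,\dots,t_m$ into a running table one at a time, rounding after each merge, and you assert that errors ``do not compound across siblings, only along ancestor chains'' because sibling color counts are disjoint sums. That justification does not work. The problem is not interaction between the siblings' true counts; it is that the \emph{running accumulated sum} is re-rounded after every merge, so the contribution of $t_1$ is rounded again when $t_2$ is merged, again when $t_3$ is merged, and so on. Formally, if the running table after $j$ merges guarantees $a^\circ \le x^\circ \le (1+\delta)^{e_j}a^\circ$ and the next child's table has exponent $d$, the coordinatewise sum satisfies the bound with exponent $\max(e_j,d)$, and the subsequent rounding-down forces $e_{j+1}=\max(e_j,d)+1$. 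Hence the exponent at a node grows \emph{linearly} in its number of children, which can be $\Theta(n)$, giving a blow-up of $(1+\delta)^{\Theta(n)}$ rather than $(1+\delta)^{\Oh(\log n)}$. You also cannot defer rounding to the end of the node's merges, since the unrounded intermediate sums would inflate the table back to $n^{\Theta(c)}$ entries.

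The paper's proof is built around precisely this difficulty: it organizes the per-node ``knapsack'' combine step as a balanced binary tree over the components of $H_t\setminus X$ (the tree $\mathcal{T}^p$ in the definitions of $g_\le$ and $N_\le$), with a second nested binary tree $\mathcal{T}^z$ over the adhesions inside each component (the tables $h$ and $H$). A given contribution is then re-rounded only $\Oh(\log n)$ times per level of nesting, so the error exponent at a node $t$ is $\Oh(\hgt(t)\log^2 n)$ and the total is $\Oh(\log^3 n)$; correspondingly the paper sets $\delta=\epsilon/(2z\log^3 n)$, not $\epsilon/\Theta(\log^2 n)$ as you do (your smaller denominator is a symptom of undercounting the rounding rounds, though it only affects constants in the exponent of the running time). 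To repair your proof you would need to replace the sequential child merge by this tournament-style merge (or an equivalent device that bounds, for every leaf contribution, the number of roundings it undergoes by a polylogarithm); with that change and the bookkeeping of ``good'' tables via the $M_t\rightarrow f_t$ and $f_t\rightarrow M_t$ properties, the rest of your outline goes through as in the paper.
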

Given a subset $S\subseteq V(G)$, we use $\chi^\circ(S)$ to denote the $c$ length tuple where the $i^{th}$ entry is the number of vertices $v$ in $S$ having color $i$, i.e. $\chi(v)=i$. We remark that throughout this section, we use $^\circ$ to denote tuples of integers of length $c$. Further we use operators such as $+, -$, scalar multiplication, and $\lceil \rceil$ on tuples, which perform the respective operations on each entry in the tuple(s). 

For a node $t\in V(T)$ recall that $\gamma(t) = \bigcup_{s:\text{ descendant of }t} \beta(s)$, $\alpha(t) = \gamma(t) \backslash \sigma(t), \text{    }G_t = G[\gamma(t)] - E(G[\sigma(t)]).$ We perform bottom-up dynamic programming on $(T, \beta)$. 
For each node $t\in V(T)$, we first define a Boolean function  $f_t:\{0,\cdots, k\}\times 2^{\sigma(t)} \times \{0,\cdots,n\}^{c}\times \{0,\cdots,n\}^{c}\rightarrow \{ \textsf{True, False} \}$. For each integer $w\in \{0,\cdots, k\}$, subset $A_{t} \subseteq \sigma(t)$, and $c$ length tuples $a^{\circ}$ and $b^{\circ}$ with $a^{\circ},b^{\circ}\in \{0,\cdots, n\}^{c}$ we define  
\begin{definition} $f_t(w, A_t,a^{\circ},b^{\circ})=\mathsf{True}$ if there exists an edge cut $(A, B)$ of $G_t$ that satisfies the following properties:  
\begin{enumerate}
    \item $(A,B)$ has order at most $w$.
    \item $A\cap \sigma(t) = A_{t}$
    \item $\chi^{\circ}(A\cap \alpha(t))=a^\circ$
    \item $\chi^{\circ}(B\cap \alpha(t))=b^\circ$
\end{enumerate}
If such a cut does not exist, $f_t(w, A_{t},a^\circ,b^\circ)=\mathsf{False}$.
Further if $f_t(w, A_{t},a^\circ,b^\circ)=\mathsf{True}$, we say an edge cut $(A,B)$ of $G_t$ that satisfies properties $1-4$ \textbf{realizes} $f_t(w, A_{t},a^\circ,b^\circ)$.
\end{definition}
From the definition of $f_t$ one can make the following observation. Let $r$ be the root of $T$.
\begin{observation}
$(G,c^\circ,k,r^\circ,\chi)$ is a yes-instance to \fb if and only if for $f_{r}(k,\emptyset, r^\circ,c^\circ-r^\circ)= \mathsf{True}$, where $r$ is the root of $T$.
\end{observation}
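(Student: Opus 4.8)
The plan is to prove this equivalence by directly unfolding the definition of $f_t$ at the root node $r$, where the adhesion degenerates to the empty set and $G_t$ becomes all of $G$; there is no recursion or combinatorics involved, only a matching of the four defining conditions of $f_r$ against the definition of a yes-instance of \fb.

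First I would record the relevant quantities at the root. Since $r$ has no parent, $\sigma(r) = \emptyset$ by definition; by (T1), $\gamma(r) = \bigcup_{s \in V(T)} \beta(s) = V(G)$; hence $\alpha(r) = \gamma(r) \setminus \sigma(r) = V(G)$ and $G_r = G[\gamma(r)] - E(G[\sigma(r)]) = G[V(G)] - E(G[\emptyset]) = G$. In particular, an edge cut of $G_r$ is exactly an edge cut of $G$, with the same order.

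Next I would instantiate the definition of $f_r$ at $(w, A_t, a^\circ, b^\circ) = (k, \emptyset, r^\circ, c^\circ - r^\circ)$ and check the four conditions on a realizing cut $(A,B)$. Condition~2, $A \cap \sigma(r) = A_t$, reads $A \cap \emptyset = \emptyset$ and holds for every edge cut. Condition~1 reads ``$(A,B)$ has order at most $k$'' (in $G$, since $G_r = G$). Condition~3 reads $\chi^\circ(A \cap V(G)) = \chi^\circ(A) = r^\circ$, i.e.\ $A$ contains exactly $r_i$ vertices of color $i$ for every $i \in [c]$. Condition~4 reads $\chi^\circ(B) = c^\circ - r^\circ$; since $(A,B)$ partitions $V(G)$ and $c^\circ = \chi^\circ(V(G))$, we have $\chi^\circ(B) = c^\circ - \chi^\circ(A)$, so Condition~4 is implied by Condition~3 and carries no extra information. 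Combining, $f_r(k,\emptyset,r^\circ,c^\circ-r^\circ) = \mathsf{True}$ holds precisely when $G$ admits an edge cut of order at most $k$ in which $A$ has exactly $r_i$ vertices of each color $i$, which is verbatim the definition of a yes-instance of \fb on $(G,c^\circ,k,r^\circ,\chi)$; both implications follow immediately.

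I do not anticipate any genuine obstacle. The only two points requiring a moment's care are: (i) the notion of edge cut fixed in \Cref{sec:prelims} permits an empty side, so Condition~2 is genuinely vacuous at the root and there is no hidden non-emptiness constraint to reconcile with the problem definition; and (ii) Condition~4 in the definition of $f_r$ is redundant once Condition~3 holds, because the per-color counts of the two sides of the cut sum to $c^\circ$.
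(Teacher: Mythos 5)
Your proposal is correct and matches the paper's intent exactly: the paper offers no proof for this observation, treating it as immediate from the definition of $f_t$, and your argument is precisely that unfolding — $\sigma(r)=\emptyset$, $G_r=G$, and the four conditions of $f_r$ collapse to the defining conditions of a yes-instance. The two side remarks (vacuity of Condition~2 at the root and redundancy of Condition~4 given Condition~3) are accurate and harmless.
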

 
In order to reduce the size of the domain of $f$ (and hence the running time), we work with the \emph{reduced} domain $D = \left\{(1+\delta)^i : i \ge 0\right\}$. This will approximate the number of vertices of each color at either side of the cut to the nearest power of $1+\delta$, where $\delta > 0$ is a parameter whose value will be fixed later.

Let $\CC_t$ be the set of all possible edge-cuts $(A,B)$ of $G_t$. To compute $f_t$ we have a {\em table} $M_t:\{0,\cdots, k\}\times 2^{\sigma(t)} \times D^c \times D^c\rightarrow \{ \CC_t\cup \bot\}$ that satisfies properties $M_t\rightarrow f_t$ and $f_t\rightarrow M_t$ (defined below in Definition \ref{defn:Mtof_error} and \ref{defn:ftoM_error}). $M_t$ will help us to approximately obtain $f_t$.
%
%
Let $z \ge 0$ be a sufficiently large constant; for example, $z=10$ suffices. We have Definition~\ref{defn:Mtof_error} and Definition~\ref{defn:ftoM_error} that will be crucial towards proving the correctness of the approximation algorithm.

\begin{definition}[Property $M_t\rightarrow f_t$]\label{defn:Mtof_error}
If $M_t(w,A_t,a^{\circ},b^{\circ})\neq \bot$ then $\exists$  $x^{\circ}\in \{0,\cdots,n\}$ and $y^{\circ}\in \{0,\cdots,n\}$ such that:
\begin{itemize}
    \item $f_t(w,A_t,x^{\circ},y^{\circ})=\true$ and $M_t(w,A_t,a^{\circ},b^{\circ})$ is an edge-cut that realizes $f_t(w,A_t,x^{\circ},y^{\circ})$
    \item $a^\circ\leq x^\circ \leq (1+\delta)^{z \cdot \hgt(t)\log^2n} \cdot a^{\circ}$
    \item $b^\circ\leq y^\circ \leq (1+\delta)^{z \cdot \hgt(t)\log^2n} \cdot b^{\circ}$
\end{itemize}
\end{definition}

\begin{definition}[Global-feasible edge cut]
 An edge cut $(A,B)$ is global-feasible if there exists an edge cut $(A',B')$ of $G$ having order at most $k$ which induces the cut $(A,B)$ on $A\cup B$. 
\end{definition}

\begin{definition}[Property $f_t\rightarrow M_t$]\label{defn:ftoM_error}
If $f_t(w,A_t,x^{\circ},y^{\circ})=\true$ and there is a global-feasible edge cut $(A,B)$ of $G_t$ that realizes it then $\exists$ $a^{\circ}\in D^c$ and $b^{\circ}\in D^c$ such that:
\begin{itemize}
    \item $M_t(w,A_t,a^{\circ},b^{\circ})\neq \bot$ 
     \item $a^\circ\leq x^\circ \leq (1+\delta)^{z \cdot \hgt(t)\log^2n} \cdot a^{\circ}$
    \item $b^\circ\leq y^\circ \leq (1+\delta)^{z \cdot \hgt(t)\log^2n} \cdot b^{\circ}$
\end{itemize}
\end{definition}

\begin{definition}[Good $M_t$]\label{defn:goodM}
For $t\in V(T)$, we say $M_t$ is {\em good} if it satisfies properties $M_t\rightarrow f_t$ and $f_t\rightarrow M_t$.
\end{definition}

\begin{lemma}\label{lemma:proof_root}
For each $\epsilon>0$ and $\delta=\frac{\epsilon}{2z\log^3 n}$, if $f_r(k,\phi,r^\circ,c^\circ-r^\circ)=\true$ and $M_r$ is good, then $\exists$ $a^\circ,b^\circ\in D^c$ such that $M_r(k,\phi,a^\circ,b^\circ)$ is a $(\epsilon,r^\circ)$-fair edge cut of $G$.  Here $r$ is the root of $T$.
\end{lemma}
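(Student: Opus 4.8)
The plan is to chase the definitions: the hypothesis says $f_r(k,\phi,r^\circ,c^\circ-r^\circ)=\true$, and by definition this is realized by some edge cut $(\hat A,\hat B)$ of $G_r=G$ (note $\sigma(r)=\emptyset$, so $G_r$ is just $G$ and $\alpha(r)=V(G)$) of order at most $k$ with $\chi^\circ(\hat A)=r^\circ$ and $\chi^\circ(\hat B)=c^\circ-r^\circ$. Such a cut is trivially global-feasible (take $(A',B')=(\hat A,\hat B)$ itself). First I would invoke property $f_t\to M_t$ (Definition~\ref{defn:ftoM_error}) at the root $t=r$, with $x^\circ=r^\circ$ and $y^\circ=c^\circ-r^\circ$: this yields $a^\circ,b^\circ\in D^c$ with $M_r(k,\phi,a^\circ,b^\circ)\neq\bot$ and
\[
a^\circ\le r^\circ\le (1+\delta)^{z\cdot\hgt(r)\log^2 n}\cdot a^\circ,\qquad
b^\circ\le c^\circ-r^\circ\le (1+\delta)^{z\cdot\hgt(r)\log^2 n}\cdot b^\circ.
\]

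Next I would apply property $M_t\to f_t$ (Definition~\ref{defn:Mtof_error}) at $t=r$ to the entry $M_r(k,\phi,a^\circ,b^\circ)\neq\bot$: this gives integer tuples $x'^\circ,y'^\circ$ with $f_r(k,\phi,x'^\circ,y'^\circ)=\true$, with $M_r(k,\phi,a^\circ,b^\circ)$ an actual edge cut $(A,B)$ of $G$ realizing it, and with
\[
a^\circ\le x'^\circ\le (1+\delta)^{z\cdot\hgt(r)\log^2 n}\cdot a^\circ,\qquad
b^\circ\le y'^\circ\le (1+\delta)^{z\cdot\hgt(r)\log^2 n}\cdot b^\circ.
\]
Since $(A,B)$ realizes $f_r(k,\phi,x'^\circ,y'^\circ)$, it has order at most $k$, $\chi^\circ(A)=x'^\circ$ and $\chi^\circ(B)=y'^\circ$. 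Combining the two chains of inequalities (via the common lower bounds $a^\circ\le r^\circ$, $a^\circ\le x'^\circ$, etc.), I get that $x'^\circ$ and $r^\circ$ lie within a factor of $(1+\delta)^{z\cdot\hgt(r)\log^2 n}$ of each other in each coordinate, so in particular $x'^\circ\le (1+\delta)^{z\cdot\hgt(r)\log^2 n}\cdot r^\circ$ coordinatewise, and likewise $y'^\circ\le (1+\delta)^{z\cdot\hgt(r)\log^2 n}\cdot(c^\circ-r^\circ)$.

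The remaining step, and the only real calculation, is to check that the accumulated multiplicative error is at most $(1+\epsilon)$, i.e. that $(1+\delta)^{z\cdot\hgt(r)\log^2 n}\le 1+\epsilon$ for $\delta=\frac{\epsilon}{2z\log^3 n}$. Here I would use that $\hgt(r)\le 2\lceil\log_2 n\rceil\le 3\log n$ (for $n$ large; the tree decomposition from Theorem~\ref{theorem:lowdepth-near-unbreakable-treedecomp} has depth at most $2\lceil\log_2 n\rceil$), so the exponent $z\cdot\hgt(r)\log^2 n$ is at most, say, $3z\log^3 n$, and then $(1+\delta)^{3z\log^3 n}\le \exp(3z\delta\log^3 n)=\exp(3\epsilon/2)$; adjusting the constant $z$ (the statement only asks $z\ge 10$ "suffices", and one is free to absorb constants, or alternatively bound $\hgt(r)\log^2 n$ more carefully and use $e^{x}\le 1+2x$ for small $x$) gives $(1+\delta)^{z\cdot\hgt(r)\log^2 n}\le 1+\epsilon$. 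With this, $\chi^\circ(A)=x'^\circ\le (1+\epsilon)r^\circ$ and $\chi^\circ(B)=y'^\circ\le (1+\epsilon)(c^\circ-r^\circ)$ coordinatewise, which is exactly the definition of $(A,B)$ being an $(\epsilon,r^\circ)$-fair edge cut; since it also has order at most $k$ and is a cut of $G$, we are done.

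I expect the only delicate point to be bookkeeping the direction of the inequalities when composing the two approximation properties — one must be careful that $a^\circ$ (resp. $b^\circ$) is a common lower bound for both $r^\circ$ and $x'^\circ$, so that the two ``within a factor of $(1+\delta)^{\cdots}$'' statements can be chained into a one-sided bound $x'^\circ\le(1+\delta)^{\cdots}r^\circ$ without picking up a square of the error. The arithmetic that $(1+\delta)^{z\cdot\hgt(r)\log^2 n}\le 1+\epsilon$ is routine given the choice $\delta=\epsilon/(2z\log^3 n)$ and the depth bound, and I would not belabor it. No appeal to unbreakability or to the structure of the DP recurrences is needed here — this lemma is purely about extracting the final answer from a good root table, and goodness of $M_r$ is assumed.
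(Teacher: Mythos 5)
Your proposal is correct and follows essentially the same route as the paper's proof: apply property $f_r\to M_r$ to get $a^\circ,b^\circ$ with $M_r(k,\phi,a^\circ,b^\circ)\neq\bot$, then property $M_r\to f_r$ to extract a realizing cut, chain the one-sided bounds through the common lower bounds $a^\circ,b^\circ$, and close with the arithmetic claim that $(1+\delta)^{z\cdot\hgt(r)\log^2 n}\le 1+\epsilon$. If anything you are slightly more careful than the paper, which silently assumes global-feasibility of the realizing cut at the root and writes $\hgt(T)=\log n$ rather than $2\lceil\log_2 n\rceil$; both points are handled correctly in your write-up.
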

\begin{proof} We first show a useful bound relating $\delta$ and $\epsilon$.
\begin{claim} \label{cl:deltaepsilon}
If $\delta \coloneqq \frac{\epsilon}{2z\log^3 n}$, then $(1+\delta)^{z \cdot\log^3n} \leq 1+\epsilon$. Furthermore, $\log_{1+\delta}n = \lr{\log n/\epsilon}^{\cO(1)}$.
\end{claim}
\begin{proof}
Follows from the fact that $\ln(1+\epsilon) \ge \frac{\epsilon}{1+\epsilon} \ge \frac{\epsilon}{2}$, since $\epsilon \in (0, 1)$, which implies that $(1+\delta)^{z \cdot \log^3 n} \le \exp\lr{\frac{\epsilon}{2z \log^3 n} \cdot z \log^3 n} \le \exp\lr{\frac{\ln(1+\epsilon)}{z \log^3 n} \cdot z \log^3 n} = 1+\epsilon$.
\end{proof}
Let $f_r(k,\phi,r^\circ,c^\circ-r^\circ)=\true$ and $M_r$ satisfy properties $M_r\rightarrow f_r$ and $f_r\rightarrow M_r$. Since $ht(T)=\log n$, by the previous claim $(1+\delta)^{z \cdot \hgt(t)\log^2n} \leq 1+\epsilon$. So by property $f_r\rightarrow M_r$, $\exists$ $a^\circ,b^\circ \in D^c$ such that:
\begin{itemize}
    \item $M_r(k,\phi,a^{\circ},b^{\circ})\neq \bot$ 
     \item $a^\circ\leq r^\circ \leq (1+\epsilon) \cdot a^{\circ}$
    \item $b^\circ\leq c^\circ-r^\circ \leq (1+\epsilon)\cdot b^{\circ}$
\end{itemize}
Further by property $M_r\rightarrow f_r$, since $M_r(k,\phi,a^{\circ},b^{\circ})\neq \bot$, $\exists$ $x^\circ,y^\circ\in \{0,\cdots,n\}$ such that:
\begin{itemize}
    \item $f_r(k,\phi,x^\circ,y^\circ)=\true$ and $M_r(k,\phi,a^{\circ},b^{\circ})$ is an edge-cut that realizes $f_r(k,\phi,x^\circ,y^\circ)$
     \item $a^\circ\leq x^\circ \leq (1+\epsilon) \cdot a^{\circ} \leq (1+\epsilon) \cdot r^{\circ}$
    \item $b^\circ\leq y^\circ \leq (1+\epsilon)\cdot b^{\circ} \leq (1+\epsilon)\cdot (r^{\circ}-c^\circ)$
\end{itemize}
Thus, $M_r(k,\phi,a^{\circ},b^{\circ})$ is a $(\epsilon,r^\circ)$-fair edge-cut of $G$. This completes our proof.
\end{proof}

Lemma~\ref{lemma:proof_root} shows us that computing a good table $M$ efficiently is sufficient for obtaining our final approximation. We now state as a theorem that we can compute a good $M_t$ assuming a good $M_{t'}$ has been computed for each $t'\in \child(t)$.
\begin{lemma}\label{lemma:alg_M_t}
There exists an algorithm that takes as input $t\in V(T)$, $\delta>0$, $(T,\beta)$, and a good $M_{t'}$ for each $t'\in \child(t)$ and computes a good $M_{t}$ in time $2^{\cO(k\log k)}(\log _{1+\delta}n)^{\cO(c)}n^{\cO(1)}$.  
\end{lemma}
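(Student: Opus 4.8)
\textbf{Proof plan for Lemma~\ref{lemma:alg_M_t}.}
The plan is to follow the standard ``randomized contraction'' dynamic programming used for \mb{} over unbreakable tree decompositions, but carried out on the reduced domain $D^c$ and with the geometric rounding applied once per level. First I would set up the recursion: an edge cut $(A,B)$ of $G_t$ realizing $f_t(w,A_t,a^\circ,b^\circ)$ decomposes into a cut on $G[\beta(t)]$ together with, for each child $t_c$ of $t$, a cut of $G_{t_c}$ that agrees on the adhesion $\sigma(t_c)$. Since every adhesion has size at most $8k$, the interface between a child cut and its parent is described by a subset $A_{t_c}\subseteq\sigma(t_c)$ (there are $2^{8k}$ choices) and two tuples in $D^c$ (accounting for the colors already counted inside $\alpha(t_c)$). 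The budget $w\le k$ is split additively among the children and the edges inside $\beta(t)$. The difficulty is that a node may have many children, so one cannot iterate over all combinations of child interfaces directly; the fix, exactly as in Cygan et al.~\cite{CyganLPPS19} and Chitnis et al.~\cite{ChitnisCHPP16}, is to process the children one at a time, maintaining a running partial table, and to use the $(9k,k)$-unbreakability of $\beta(t)$ together with the splitter/colour-coding family of Lemma~\ref{splittersetlemma} to guess, for the final global-feasible cut, which side of $\beta(t)$ is the ``small'' side (of size $\le 9k$) and to separate the contributions of the two sides.

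The key steps, in order, are: \textbf{(i)} For the base case where $t$ is a leaf, $G_t = G[\beta(t)]$ has at most $9k+8k$ relevant vertices after using unbreakability (more precisely, enumerate all $\le 2^{9k}\cdot n^{\cO(1)}$ candidate cuts whose small side has size $\le 9k$, or if no such cut is global-feasible we may ignore it); for each we compute the exact tuples $\chi^\circ(A\cap\alpha(t))$, round each coordinate down to the nearest element of $D$, and store the cut in the corresponding cell of $M_t$. \textbf{(ii)} For an internal node, initialize a partial table on the ``stub'' $G[\beta(t)]$ just as in the base case. \textbf{(iii)} Iterate over the children $t_c$; to incorporate $t_c$, for every cell $M_{t_c}(w_c,A_{t_c},a_c^\circ,b_c^\circ)\ne\bot$ and every compatible cell of the current partial table (compatible meaning the interface subsets agree on $\sigma(t_c)$ and the budgets sum to $\le k$), glue the two stored edge cuts, add the color tuples coordinatewise, and then \emph{re-round} each coordinate of the summed tuples down to $D$; keep one realizing cut per resulting cell. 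Because $|D|=\log_{1+\delta} n = (\log n/\epsilon)^{\cO(1)}$, each table has at most $k\cdot 2^{8k}\cdot|D|^{2c}$ cells, and each child is processed in time $2^{\cO(k)}\cdot|D|^{\cO(c)}\cdot n^{\cO(1)}$; summed over all children this is within the claimed bound. \textbf{(iv)} When all children are processed, discard the adhesion vertices' internal accounting appropriately (i.e.\ project onto the cell format of $M_t$, remembering $A\cap\sigma(t)$) to obtain $M_t$.

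For correctness I would prove the two halves of ``goodness'' separately. For property $M_t\to f_t$: by induction, every non-$\bot$ cell of $M_t$ is obtained by gluing edge cuts that realize genuine $f_{t_c}$ entries with the correct (if inductively inflated) tuples; gluing cuts that agree on adhesions yields an edge cut of $G_t$, orders add, and the rounding step only ever \emph{decreases} the stored tuple below the true count, while the accumulated inflation factor across the $\hgt(t)$ levels — one factor of $(1+\delta)$ per coordinate per gluing, times the number of children, times $\log^2 n$ slack for the within-level re-roundings — stays bounded by $(1+\delta)^{z\cdot\hgt(t)\log^2 n}$ for $z$ a large enough constant. For property $f_t\to M_t$: given a global-feasible cut $(A,B)$ of $G_t$ realizing $f_t(w,A_t,x^\circ,y^\circ)$, extend it to a global cut of order $\le k$; its restriction to $\beta(t)$ has a side of size $\le 9k$ by unbreakability, so the splitter family of Lemma~\ref{splittersetlemma} contains a set isolating that small side, meaning our enumeration in step (i) considered a compatible stub cut; inductively each child restriction is itself global-feasible and so was stored (up to rounding) in $M_{t_c}$; tracing the gluing order shows the corresponding cell of $M_t$ is non-$\bot$ with a tuple within the required factor of $x^\circ$, $y^\circ$.

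\textbf{Main obstacle.} The delicate point — and the part I would write most carefully — is controlling the multiplicative error. Unlike the clean bounded-treewidth setting of Lampis~\cite{Lampis14}, here a single node of $T^*$ may glue many child tables, and a naive analysis would multiply the error by the number of children rather than by a constant per level. Resolving this requires arguing that the error incurred at a node depends only on $\hgt(t)$ and not on its degree: one processes children sequentially, and although each gluing step rounds, the tuple being rounded at step $j$ already dominates (coordinatewise, up to $(1+\delta)$) the true partial count after $j$ children, so the total distortion at the node telescopes to a single $(1+\delta)^{\cO(\log^2 n)}$ factor rather than accumulating per child — this is exactly where the generous $\log^2 n$ in the exponent of Definitions~\ref{defn:Mtof_error}–\ref{defn:ftoM_error} is spent. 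The second subtlety, meriting care but not conceptually hard, is that the color-coding/splitter guess of the small side of $\beta(t)$ only needs to succeed for cuts that are \emph{global-feasible}, which is why Definition~\ref{defn:ftoM_error} is stated with that hypothesis and why the same hypothesis must be threaded through the inductive claim for the children.
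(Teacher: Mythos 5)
Your overall architecture (splitter family to isolate the small side of the unbreakable bag, knapsack-style combination of child tables over the reduced domain $D^c$, separate treatment of the two directions of ``goodness'' with global-feasibility threaded through the induction) matches the paper's. But there is a genuine gap exactly at the point you flag as the delicate one, and the resolution you assert is false. You process the children (equivalently, the components of $H_t\setminus X$) \emph{sequentially}, re-rounding the running color tuple down to $D$ after each gluing. Each such round-down costs a fresh multiplicative factor of $(1+\delta)$ relative to the previous stored value: if $A_j\le x_j\le(1+\delta)^{e_j}A_j$ is the invariant after $j$ children and the $(j{+}1)$-st child contributes $c_{j+1}\le(1+\delta)^{e'}a'_{j+1}$, then after adding and rounding down one only gets $x_{j+1}\le(1+\delta)^{\max(e_j,e')+1}A_{j+1}$ --- the exponent grows by one \emph{per child}, and nothing telescopes. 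A node of the unbreakable decomposition can have $\Theta(n)$ children, so the per-node error is $(1+\delta)^{\Theta(n)}$, which with $\delta=\epsilon/(2z\log^3 n)$ is unbounded. The only way to make the sequential pass lossless is to keep exact tuples in $\{0,\dots,n\}^c$ during the pass, which restores the $n^{\cO(c)}$ table size you are trying to avoid.

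The paper's proof avoids this by organizing the combination step itself as a \emph{balanced binary tree}: the components $P_1,\dots,P_p$ of $H_t\setminus X$ are placed at the nodes of a binary tree $\cT^p$ of depth $\cO(\log n)$ (and, one level down, the adhesions attached to a single component are combined over a second binary tree $\cT^z$), and the tables $N_{\le}$ and $H$ are computed by merging the two halves at each internal node. This makes the rounding depth within one bag $\cO(\log^2 n)$ rather than $\Theta(\mathrm{degree})$, so the total rounding depth over the $\cO(\log n)$ levels of $T$ is $\cO(\log^3 n)$ and the accumulated error is $(1+\delta)^{\cO(\log^3 n)}=1+\cO(\epsilon)$; this is precisely what the exponents $z\cdot\hgt(t)\log^2 n$ in Definitions~\ref{defn:Mtof_error} and~\ref{defn:ftoM_error} are budgeting for. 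Your write-up needs this (or an equivalent depth-reduction of the per-node combination) as an explicit ingredient; without it the claimed invariant in your correctness argument for $M_t\to f_t$ does not hold. A secondary, smaller imprecision: enumerating ``all $\le 2^{9k}\cdot n^{\cO(1)}$ candidate cuts whose small side has size $\le 9k$'' is not a valid count --- there are $n^{\Theta(k)}$ such subsets --- and the enumeration must instead go through the $k^{\cO(k)}\log n$ sets $X$ of the splitter family, with each component of $H_t\setminus X$ forced wholly to one side.
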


We will prove Lemma~\ref{lemma:alg_M_t} later. Before that, we now prove our main result, Theorem~\ref{thm:mainapprox}, assuming Lemma~\ref{lemma:alg_M_t}.
\begin{proof}[Proof of Theorem~\ref{thm:mainapprox}]
Let $\delta \coloneqq \frac{\epsilon}{2z\log^3 n}$. In our algorithm we compute $M$ by computing good $M_t$ using Lemma~\ref{lemma:alg_M_t} for each $t\in V(T)$, bottom up, starting from leaves of $T$ to root of $T$. We finally go over each $a^\circ,b^\circ\in D^c$ and output a cut $M_r(k,\phi,a^\circ,b^\circ)$ that is a $(\epsilon,r^\circ)$-fair edge cut of $G$ if one exists.

The correctness follows directly from the definition of $f$ and Lemma~\ref{lemma:proof_root}.
The time taken by our algorithm is equal to the size of domain of $M$ times the time taken to compute each entry in $M$. 
The size of the domain of $M$ is at most $2^{\cO(k)}(\log _{1+\delta}n)^{\cO(c)}n^{\cO(1)}$ because $|D|\leq \log _{1+\delta}n$ and all adhesions in $(T,\beta)$ have size at most $8k$.
The time taken to compute each entry in $M$ is $2^{\cO(k\log k)}(\log _{1+\delta}n)^{\cO(c)}n^{\cO(1)} $ by Lemma~\ref{lemma:alg_M_t}.
Thus, the total time taken is $2^{\cO(k\log k)}(\log _{1+\delta}n)^{\cO(c)}n^{\cO(1)}$, which is $2^{\cO(k \log k)} \lr{\frac{c}{\epsilon}}^{\cO(c)} \cdot n^{\cO(1)}$ by \Cref{cl:deltaepsilon} and a standard case analysis on whether $c \le \frac{\log n}{\log\log n}$. 
\end{proof}

\subsection{Computing $M_t$: Proof of Lemma~\ref{lemma:alg_M_t}} In this subsection we prove Lemma~\ref{lemma:alg_M_t}. 
In particular we design an algorithm that takes as input a graph $G$, the tree decomposition $(T,\beta)$ and a node $t$ of $T$, together with dynamic programming tables $M_{t'}$ for every child $t'$ of $t$, and outputs the appropriate dynamic programming table $M_{t}$ (which is good) for $t$.
This algorithm is an adaptation of a similar step performed by Cygan et al.~\cite{CyganLPPS19} in their algorithm for the {\sc Minimum Bisection} problem. 
The algorithm of Cygan et al.~\cite{CyganLPPS19} proceeds by a random coloring step, followed by a ``knapsack''-like dynamic programming algorithm. 
Our algorithm proceeds in a similar manner, but faces the following key difficulty:
in order to keep time and space bounded by $f(k,c,\epsilon)n^{O(1)}$ we can only store approximate values in the knapsack dynamic programming table (the table satisfies soundness and completeness properties similar to Definition~\ref{defn:goodM}).
Therefore, after computing each entry of the table (from previous entries) we need to perform a rounding step that introduces a $(1 + (\frac{\epsilon}{\log n})^{\cO(1)})$ multiplicative factor in the error bound. 
The standard way of solving {\sc Knapsack} involves considering each item in the input one by one, however this would lead to the rounding error possibly accumulating and getting out of hand. 
We overcome this by organizing the dynamic program in a complete binary tree. That is, split the items in two equal sized groups, compute dynamic programming tables for the two groups recursively, and combine the dynamic programming tables to the two halves to a dynamic programming for all the items. 
This ensures that the total error is upper bounded by a multiplicative factor of $(1 + (\frac{\epsilon}{\log n})^{\cO(1)})^{\cO(\log^3 n)} = 1+\cO(\epsilon)$.  

We now begin the formal exposition.
%
The way our algorithm will work is by defining a chain of ``true'' functions which we will compute approximately and use to compute the approximate value of the function higher up in the chain.
First we define a useful object. 
\begin{definition}[$H_t$, $H_t\setminus X$]
Let $H_t$ be the graph obtained from $G[\beta(t)]$ after making each adhesion $\sigma(t')$, $t'\in \textsf{child}(t)\cup \{t\}$ a clique. 
Further for $X\subseteq \beta(t)$, let $H_t\setminus X$ be the graph obtained by removing the vertices of $X$ from $H_t$. We say $H_t\setminus X$ is a \textbf{refinement} of an edge cut $(A,B)$ of $G_t$ if $X$ and each component in $H_t\setminus X$ is either a subset of $A$ or a subset of $B$. 
\end{definition}
We have the following lemma regarding $\{P_1,\cdots,P_p\}$, the set of connected components in $H_t\setminus X$, $X\subseteq \beta(t)$. The Lemma follows directly from the definition of $H_t$ and $H_t\setminus X$. We remark that here we view a connected component $P_\ell,$ $\ell\in [p]$ as a set of vertices. See Figure~\ref{fig:adhesions} for an example of how the objects in Lemma~\ref{lemma:splitter_partition_adhesions} look.  

\begin{figure}
    \centering
    \includegraphics[scale=0.5,page=3]{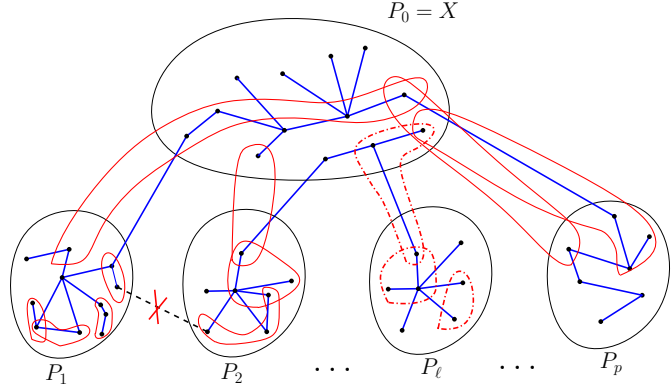}
    \caption{Exposition for \Cref{lemma:splitter_partition_adhesions}. The adhesions are shown in red color. The adhesions in $\mathcal{A}_{l}$ are shown in dashed red.}
    \label{fig:adhesions}
\end{figure}

\begin{lemma} \label{lemma:splitter_partition_adhesions}
 For $X\subseteq \beta(t)$, let $\{P_1,\cdots,P_p\}$ be the set of connected components in $H_t\setminus X$ and let $P_0=X$. Further for $\ell\in \{1,\cdots,p\}$, let $\mathcal{A}(\ell)=\{\sigma(t'):t'\in\textsf{child}(t)\cup\{t\},$ $\sigma(t')\subseteq P_\ell\cup P_0,\sigma(t')\cap P_\ell \neq \emptyset\}$ be the set of adhesions having vertices only from $P_\ell\cup P_0$. Then we have the following properties:
 \begin{itemize}
     \item For each $\ell_1,\ell_2\in \{1,\cdots,p\}$, $\ell_1\neq \ell_2$, there are no edges between $P_{\ell_1}$ and $P_{\ell_2}$ in $G$. 
     \item For each $\ell_1,\ell_2\in \{1,\cdots,p\}$, $\ell_1\neq \ell_2$, $\mathcal{A}(\ell_1)\cap \mathcal{A}(\ell_2)=\emptyset$. 
 \end{itemize}
\end{lemma}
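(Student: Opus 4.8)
The plan is to derive both bullets directly from two structural facts about $H_t$: that $G[\beta(t)]$ is a subgraph of $H_t$ (turning adhesions into cliques only \emph{adds} edges), and that every adhesion $\sigma(t')$ with $t' \in \textsf{child}(t)\cup\{t\}$ is a clique in $H_t$ by construction. Combined with the elementary observation that two distinct connected components of $H_t \setminus X$ are vertex-disjoint and have no edge of $H_t \setminus X$ between them — and that each $P_\ell$ with $\ell \ge 1$ is disjoint from $P_0 = X$ — both statements become immediate. I would record these preliminary facts first and then dispatch the two bullets in turn.

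For the first bullet, fix $\ell_1 \neq \ell_2$ and suppose toward a contradiction that some edge $uv \in E(G)$ has $u \in P_{\ell_1}$ and $v \in P_{\ell_2}$. Since $P_{\ell_1}, P_{\ell_2} \subseteq V(H_t) = \beta(t)$, the edge $uv$ belongs to $G[\beta(t)]$, hence to $H_t$. Moreover $u,v \notin X$, as $P_{\ell_1}$ and $P_{\ell_2}$ are components of $H_t \setminus X$ and therefore disjoint from $P_0 = X$. Thus $uv$ is an edge of $H_t \setminus X$ joining $P_{\ell_1}$ to $P_{\ell_2}$, contradicting that these are distinct connected components.

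For the second bullet, suppose $\sigma(t') \in \mathcal{A}(\ell_1) \cap \mathcal{A}(\ell_2)$ for some $\ell_1 \neq \ell_2$. By the definition of $\mathcal{A}(\cdot)$ there exist $u \in \sigma(t') \cap P_{\ell_1}$ and $v \in \sigma(t') \cap P_{\ell_2}$, and $u \neq v$ since $P_{\ell_1} \cap P_{\ell_2} = \emptyset$. Because $\sigma(t')$ induces a clique in $H_t$, we get $uv \in E(H_t)$, and again $u,v \notin X$, so $uv \in E(H_t \setminus X)$ — once more contradicting that $P_{\ell_1}$ and $P_{\ell_2}$ are different components. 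Hence $\mathcal{A}(\ell_1) \cap \mathcal{A}(\ell_2) = \emptyset$.

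I do not expect a genuine obstacle here; the lemma really just unpacks the definitions of $H_t$ and $H_t \setminus X$, exactly as the surrounding text asserts. The only thing that needs care is the bookkeeping: confirming $V(H_t) = \beta(t)$, that $G[\beta(t)] \subseteq H_t$, that each relevant adhesion is a clique in $H_t$, and that the components $P_1,\dots,P_p$ are pairwise disjoint and disjoint from $P_0 = X$. With those in place both claims follow in a line each.
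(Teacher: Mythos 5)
Your proof is correct and matches the paper's (implicit) argument: the paper offers no written proof, stating only that the lemma ``follows directly from the definition of $H_t$ and $H_t\setminus X$,'' and your write-up is precisely that unpacking. As a minor remark, the second bullet follows even without the clique property: a vertex $u\in\sigma(t')\cap P_{\ell_1}$ already violates the containment $\sigma(t')\subseteq P_{\ell_2}\cup P_0$ required for membership in $\mathcal{A}(\ell_2)$, since $P_{\ell_1}$ is disjoint from both $P_{\ell_2}$ and $P_0$.
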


For a subset $X\subseteq \beta(t)$, we define function $f^X_t$ and table $M^X_t$ to help us compute $M_t$.
For $X\subseteq \beta(t)$, we first define the Boolean function  $f^X_t:\{0,\cdots, k\}\times 2^{\sigma(t)} \times \{0,\cdots,n\}^{c}\times \{0,\cdots,n\}^{c}\rightarrow \{ \textsf{True, False} \}$. For each integer $w\in \{0,\cdots, k\}$, subset $A_{t} \subseteq \sigma(t)$, and $c$ length tuples $a^{\circ}$ and $b^{\circ}$ with $a^{\circ},b^{\circ}\in \{0,\cdots, n\}^{c}$ we define  
\begin{definition} $f^X_t(w, A_t,a^{\circ},b^{\circ})=\mathsf{True}$ if there exists an edge cut $(A, B)$ of $G_t$ that satisfies the following properties:  
\begin{enumerate}
    \item $(A,B)$ realizes $f_t(w,A_t,a^\circ,b^\circ)$ 
    \item $H_t\setminus X$ is a refinement of $(A,B)$
\end{enumerate}
If such a cut does not exist, $f^X_t(w, A_{t},a^\circ,b^\circ)=\mathsf{False}$.
Further if $f^X_t(w, A_{t},a^\circ,b^\circ)=\mathsf{True}$, we say an edge cut $(A,B)$ of $G_t$ that satisfies properties $1$ and $2$ \textbf{realizes} $f^X_t(w, A_{t},a^\circ,b^\circ)$.
\end{definition}
Recall that $\CC_t$ is the set of all possible edge-cuts $(A,B)$ of $G_t$. To compute $f^X_t$ we have a {\em table} $M^X_t:\{0,\cdots, k\}\times 2^{\sigma(t)} \times D^c \times D^c\rightarrow \{ \CC_t\cup \bot\}$ that satisfies properties $M^X_t\rightarrow f^X_t$ and $f^X_t\rightarrow M_t$ (defined below in Definition \ref{defn:XMtof_error} and \ref{defn:XftoM_error}).

\begin{definition}[Property $M^X_t\rightarrow f^X_t$]\label{defn:XMtof_error}
If $M^X_t(w,A_t,a^{\circ},b^{\circ})\neq \bot$ then $\exists$  $x^{\circ}\in \{0,\cdots,n\}$ and $y^{\circ}\in \{0,\cdots,n\}$ such that:
\begin{itemize}
    \item $f^X_t(w,A_t,x^{\circ},y^{\circ})=\true$ and $M^X_t(w,A_t,a^{\circ},b^{\circ})$ is an edge-cut that realizes $f^X_t(w,A_t,x^{\circ},y^{\circ})$
    \item $a^\circ\leq x^\circ \leq (1+\delta)^{z \cdot \hgt(t)\log^2n} \cdot a^{\circ}$
    \item $b^\circ\leq y^\circ \leq (1+\delta)^{z \cdot \hgt(t)\log^2n} \cdot b^{\circ}$
\end{itemize}
\end{definition}

\begin{definition}[Property $f^X_t\rightarrow M^X_t$]\label{defn:XftoM_error}
If $f^X_t(w,A_t,x^{\circ},y^{\circ})=\true$ and there is a global-feasible edge cut $(A,B)$ of $G_t$ that realizes it then $\exists$ $a^{\circ}\in D^c$ and $b^{\circ}\in D^c$ such that:
\begin{itemize}
    \item $M^X_t(w,A_t,a^{\circ},b^{\circ})\neq \bot$ 
     \item $a^\circ\leq x^\circ \leq (1+\delta)^{z \cdot \hgt(t)\log^2n} \cdot a^{\circ}$
    \item $b^\circ\leq y^\circ \leq (1+\delta)^{z \cdot \hgt(t)\log^2n} \cdot b^{\circ}$
\end{itemize}
\end{definition}
\begin{definition}[Good $M^X_t$]
For $X\subseteq \beta(t)$, we say $M^X_t$ is good if it satisfies properties $M^X_t\rightarrow f^X_t$ and $f^X_t\rightarrow M^X_t$.    
\end{definition}

As a first step of our algorithm we use Lemma~\ref{splittersetlemma} with $\beta(t)$, $s_1=9k$ and $s_2=8k^2+k$ to obtain a family $\mathcal{B}_t$ of subsets of $\beta(t)$. 
We have the following observation to capture the properties of $\mathcal{B}_t$ that directly follow from Lemma~\ref{splittersetlemma}.

\begin{observation}\label{obs:splitter_set_prop}
$\mathcal{B}_t$ has size at most $k^{\cO(k)}\cO(\log n)$ and each set $X\in \mathcal{B}_t$ is a subset of $\beta(t)$; $X\subseteq \beta(t)$. Further for any two disjoint subsets $X_1$ and $X_2$ of $\beta(t)$ of size at most $9k$ and $8k^2+k$, $\mathcal{B}_t$ contains a subset $X$ that satisfies $X_1\subseteq X$ and $X_2\cap X=\emptyset$.
\end{observation}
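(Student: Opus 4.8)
\textbf{Proof proposal for \Cref{obs:splitter_set_prop}.}

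The statement is an immediate instantiation of \Cref{splittersetlemma}, so the plan is simply to unwind the parameter substitution and check that the claimed size bound follows. First I would recall that \Cref{splittersetlemma} is invoked here with the ground set $S = \beta(t)$, with $s_1 = 9k$, and with $s_2 = 8k^2 + k$ (as specified in the sentence immediately preceding the observation). The lemma outputs a family $\mathcal{S}$, which we have named $\mathcal{B}_t$, of subsets of $S = \beta(t)$; hence every $X \in \mathcal{B}_t$ satisfies $X \subseteq \beta(t)$, giving the first part of the observation.

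Next I would verify the guarantee on $\mathcal{B}_t$. \Cref{splittersetlemma} asserts that for any two disjoint subsets $X_1, X_2 \subseteq S$ with $|X_1| \le s_1$ and $|X_2| \le s_2$, the family $\mathcal{S}$ contains a set $X$ with $X_1 \subseteq X$ and $X_2 \cap X = \emptyset$. Substituting $S = \beta(t)$, $s_1 = 9k$, $s_2 = 8k^2 + k$ yields exactly the stated covering property: for any two disjoint $X_1, X_2 \subseteq \beta(t)$ of sizes at most $9k$ and $8k^2 + k$ respectively, $\mathcal{B}_t$ contains some $X$ with $X_1 \subseteq X$ and $X_2 \cap X = \emptyset$.

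Finally, for the size bound I would plug the chosen parameters into the size expression $\cO\lr{(s_1 + s_2)^{\cO(s_1)} \log |S|}$ from \Cref{splittersetlemma}. Here $s_1 + s_2 = 9k + 8k^2 + k = 8k^2 + 10k = k^{\cO(1)}$ and $s_1 = 9k$, so $(s_1+s_2)^{\cO(s_1)} = (k^{\cO(1)})^{\cO(k)} = k^{\cO(k)}$, and $\log|S| = \log|\beta(t)| = \cO(\log n)$ since $|\beta(t)| \le n$. Thus $|\mathcal{B}_t| \le k^{\cO(k)} \cO(\log n)$, as claimed. There is no real obstacle here — the only thing to be careful about is confirming that $s_2 = 8k^2 + k$ contributes only polynomially in $k$ to the base $s_1 + s_2$ (so it does not worsen the exponential factor), which it does since it appears only inside the base and not in the exponent $\cO(s_1) = \cO(k)$.
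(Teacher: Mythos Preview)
Your proposal is correct and matches the paper's approach: the paper simply states that the observation ``directly follows from Lemma~\ref{splittersetlemma}'' without spelling out the substitution, and your unpacking of the parameters $S=\beta(t)$, $s_1=9k$, $s_2=8k^2+k$ together with the size calculation $(s_1+s_2)^{\cO(s_1)}=k^{\cO(k)}$ is exactly the intended justification.
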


We have the following 
\begin{lemma}\label{lemma:f^X_t}
If $f_t(w,A_t,x^\circ,y^\circ)=\true$ and there is a global feasible edge cut $(A,B)$ of $G_t$ that realizes it, then $\exists X\in\BB(t)$ such that $(A,B)$ realizes $f^X_t(w,A_t,x^\circ,y^\circ)$.
\end{lemma}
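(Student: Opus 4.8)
The plan is to exhibit explicit sets $X_1, X_2 \subseteq \beta(t)$ that "certify" the refinement property, show they have the right sizes so that Observation~\ref{obs:splitter_set_prop} applies, and then verify that any $X \in \BB(t)$ with $X_1 \subseteq X$ and $X_2 \cap X = \emptyset$ does the job. The natural candidate for $X_1$ is the trace of the cut on $\beta(t)$ that lies on the "minority" side: since $(A,B)$ is global-feasible, it is induced by an edge cut $(A',B')$ of $G$ of order at most $k$, and $\beta(t)$ is $(9k,k)$-unbreakable, so one of $|A \cap \beta(t)|$, $|B \cap \beta(t)|$ is at most $9k$; take $X_1$ to be that side's intersection with $\beta(t)$ (say $X_1 = A \cap \beta(t)$ after possibly swapping the roles of $A$ and $B$). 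For $X_2$ I would take the set of vertices of $\beta(t)$ that lie on the majority side \emph{and} are endpoints of a cut edge of $(A',B')$ together with all vertices of the adhesions $\sigma(t')$, $t' \in \child(t)\cup\{t\}$, that are "split" by $(A,B)$ --- more precisely, $X_2$ should be chosen so that once we delete $X_1 \cup X_2$ from $H_t$, every remaining connected component lies entirely on one side of $(A,B)$.

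First I would bound $|X_2|$. Each adhesion has size at most $8k$ and there are at most... well, the number of children $t'$ whose adhesion $\sigma(t')$ is split by the cut $(A,B)$ is at most $k$, because $H_t$ makes each $\sigma(t')$ a clique, so a split adhesion contributes at least one edge to the cut of $H_t$, and these edges correspond (via the clique structure and condition that the cut has order $\le k$ in $G_t$, noting $G_t$ contains $G[\beta(t)]$ minus $E(G[\sigma(t)])$) to a bounded number of cut edges; hence at most $k$ adhesions are split, giving at most $8k \cdot k = 8k^2$ vertices, plus the at most $k$ endpoints in $\beta(t)$ of genuine cut edges, for a total of at most $8k^2 + k$. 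This is exactly the bound $s_2$ in the invocation of Lemma~\ref{splittersetlemma}, and $|X_1| \le 9k = s_1$; also $X_1$ and $X_2$ can be taken disjoint (vertices put in $X_1$ need not be repeated in $X_2$). So by Observation~\ref{obs:splitter_set_prop} there is $X \in \BB(t)$ with $X_1 \subseteq X \subseteq \beta(t) \setminus X_2$.

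Next I would verify that $H_t \setminus X$ is a refinement of $(A,B)$, i.e.\ that $X$ and every connected component of $H_t \setminus X$ is contained wholly in $A$ or wholly in $B$. Since $X_1 = A \cap \beta(t) \subseteq X$ (in the swapped convention), every vertex of $X \setminus X_1$ lies in $B \cap \beta(t)$, so $X \subseteq B$ except possibly... actually $X$ may contain vertices of both $A$ and $B$, but that's fine: the refinement definition as stated requires $X$ itself to be monochromatic, so I should instead argue $X \subseteq B$. This holds because $X \cap X_2 = \emptyset$ and I will have included in $X_2$ all the cut-edge endpoints and split-adhesion vertices on the majority side; hmm --- a cleaner route: put into $X_2$ \emph{every} vertex of $\beta(t)$ on the majority side that is "bad", and additionally note that $X$ may contain some majority-side vertices but these form an independent transversal argument. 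I would therefore refine the choice: set $X_1 = A \cap \beta(t)$ and arrange $X_2$ to contain enough of $B \cap \beta(t)$ that $X = X_1 \cup (\text{some subset of } B\cap\beta(t))$ is forced, and then since the only cut edges of $H_t$ incident to $H_t \setminus X$ would have to cross between two components or into $X$, and all crossing edges (including clique edges from split adhesions) have at least one endpoint in $X_1 \cup X_2$, no component of $H_t \setminus X$ can contain vertices from both sides. The main obstacle will be getting the counting in the bound $|X_2| \le 8k^2 + k$ exactly right --- in particular correctly arguing that at most $k$ adhesions are split, which requires carefully relating cut edges of $H_t$ (after making adhesions cliques) to the at most $k$ cut edges of the global solution, using that the edge cut $(A,B)$ of $G_t$ realizing $f_t(w,\cdot)$ has order $\le w \le k$ and that $H_t$'s extra clique edges within an adhesion $\sigma(t')$ are "paid for" by the at least one edge of the global cut $(A',B')$ inside the subtree at $t'$ or by the adhesion being split there.
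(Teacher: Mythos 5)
You have assembled the right ingredients --- unbreakability of $\beta(t)$ gives one side of the cut meeting $\beta(t)$ in at most $9k$ vertices, and the ``bad'' vertices on the other side (endpoints of cut edges plus vertices of broken adhesions) number at most $k+8k\cdot k$ --- but you apply the splitter with the two roles reversed, and this is a genuine gap rather than a notational one. You ask for $X\supseteq A\cap\beta(t)$ (the minority trace) and $X\cap X_2=\emptyset$. Such an $X$ typically contains vertices of both $A$ and $B$: it must contain all of $A\cap\beta(t)$, and nothing prevents it from also containing harmless vertices of $B\cap\beta(t)$. Hence the clause in the definition of refinement requiring that $X$ \emph{itself} lie wholly in $A$ or wholly in $B$ fails, and this clause is not cosmetic, since the downstream dynamic program (the flag $x_A$ in $g$ and $h$) places all of $X$ on a single side. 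You notice the problem, but your proposed repair --- making $X_2$ swallow enough of $B\cap\beta(t)$ to force $X\subseteq B$ --- cannot work: $X\supseteq A\cap\beta(t)$ makes $X\subseteq B$ impossible, and in any case $|B\cap\beta(t)|$ is unbounded while $s_2=8k^2+k$. The correct orientation is the opposite one: put \emph{into} $X$ the at most $8k^2+k$ bad vertices of the majority side $B\cap\beta(t)$, and \emph{exclude} from $X$ the entire minority trace $A\cap\beta(t)$ (size at most $9k$). Then $X\subseteq B$, so $X$ is monochromatic, and any edge of $H_t$ joining $A$ to $B$ and avoiding $X$ would have its $B$-endpoint either incident to a cut edge or inside a broken adhesion, hence in $X$ --- a contradiction --- so every component of $H_t\setminus X$ is monochromatic as well.

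A second, smaller issue: your justification that at most $k$ adhesions are broken leans on clique edges of $H_t$ ``contributing to the cut,'' but those clique edges are not edges of $G$ or of $G_t$ and do not count toward the order bound. The argument your last sentence gropes toward is the right one and goes through compactness of the tree decomposition: for a broken adhesion $\sigma(t')$ one has $G[\alpha(t')]$ connected with $N_G(\alpha(t'))=\sigma(t')$, so each broken adhesion can be charged to a distinct edge of the global cut of order at most $k$, giving at most $k$ broken adhesions and hence at most $8k^2$ vertices contributed by them. With the orientation fixed and this charging argument made explicit, your plan matches the paper's proof.
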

\begin{proof}
$(A,B)$ is a global feasible edge cut of $G$ having weight at most $k$.  
Let $\mathcal{A}(t):= \{\sigma(s)\}\cup\{\sigma(t'):t'\text{ is a child of }t\}$ be the set of adhesions of children of $t$ and $\sigma(t)$. We say an adhesion $\sigma(t')$ in $\mathcal{A}(t)$ is broken if $\sigma(t')\cap A\neq \emptyset$ and $\sigma(t')\cap B\neq \emptyset$. 

Since $(T,\beta)$ is $(9k,k)$ unbreakable, either $|A\cap \beta(t)|$ or $|B\cap \beta(t)|$ is at most $9k$. Assume w.l.o.g that $|A\cap\beta(t)|\leq 9k$. Let $A^*=A\cap \beta(t)$.

Next let $B^*$ be the set of vertices in $B\cap \beta(t)$ either (i) incident to a cut edge in $(A,B)$ or (ii) part of a broken adhesion in $\mathcal{A}(t)$. Let $B^*=B'\cup B''$.
There are at most $k$ vertices in $B\cap \beta(t)$ adjacent to a cut edge in $(A,B)$ because $(A,B)$ has at most $k$ cut edges. Next there are at most $k$ broken adhesions because $T$ is compact; each broken adhesion can be associated with a unique cut edge. So there can be at most $8k^2$ vertices in a broken adhesion from $B\cap \beta(t)$ -- recall that each adhesion in $(T,\beta)$ has size at most $8k$. Thus $|B^*|\leq k+8k^2$.

By Observation~\ref{obs:splitter_set_prop}, $\exists X\in \BB_t$ such that $B^*\subseteq X$ and $A^*\cap X=\emptyset$. 
Lastly, it is easy to observe that $H_t\setminus X$ is a refinement of $(A,B)$ because $B^*\subseteq X$ and $A^*\cap X=\emptyset$. Thus $(A,B)$ also realizes $f^X_t(w,A_t,x^\circ,y^\circ)$, completing the proof.
\end{proof}
\begin{lemma}\label{lemma:alg_M^X_t}
There exists an algorithm that takes as input $t\in V(T)$, $X\subseteq \beta(t)$, $\delta>0$, $(T,\beta)$, and a good $M_{t'}$ for each $t'\in \child(t)$ and computes a good $M^X_{t}$ in time $2^{\cO(k)}(\log _{1+\delta}n)^{\cO(c)}n^{\cO(1)}$.  

\end{lemma}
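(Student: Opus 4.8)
The plan is to compute $M^X_t$ by exploiting the structure supplied by Lemma~\ref{lemma:splitter_partition_adhesions}: once $X$ is fixed, $H_t\setminus X$ decomposes into components $P_1,\dots,P_p$ that are pairwise non-adjacent in $G$ and whose associated adhesion families $\mathcal A(1),\dots,\mathcal A(p)$ are pairwise disjoint. Every edge cut realizing $f^X_t$ puts $X$ and each $P_\ell$ wholly on one side, so I first \emph{guess the side of $X$} (two options); since $\sigma(t)$ and every child adhesion $\sigma(t')$ is a clique in $H_t$, its non-$X$ part lies in a single component, and hence $A_{t'}=\sigma(t')\cap A$ and $A_t=\sigma(t)\cap A$ are determined by the side of $X$ together with the side of that one component. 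This makes the side choices for the $P_\ell$ essentially independent: the cut edges contributed by $P_\ell$ are the $G$-edges of $G_t$ between $P_\ell$ and $X$ (only if their sides differ) together with the cut edges inside the subtrees $G_{t'}$ for children $t'$ with $\sigma(t')\in\mathcal A(\ell)$, and the color vectors it contributes are $\chi^\circ(P_\ell\cap\alpha(t))$ plus whatever the tables $M_{t'}$ record; by Lemma~\ref{lemma:splitter_partition_adhesions} none of these quantities interact across different components.

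Accordingly I would build, for each component $P_\ell$ and each side $s\in\{A,B\}$, an approximate table $N_{\ell,s}$ recording, for every budget $w_\ell\in\{0,\dots,k\}$, the achievable pairs of rounded color vectors, obtained by combining the constant-size local contribution of $P_\ell\cap\alpha(t)$ with the tables $M_{t'}$ of the relevant children (each restricted to the forced adhesion pattern $A_{t'}$). Every combination step is the following pairwise operation on two approximate tables: iterate over all pairs of entries, add the two budget values and the two color vectors, round each coordinate of the resulting vectors \emph{down} to the nearest element of $D=\{(1+\delta)^i\}$, and, as the witness, store the union of the two witness cuts -- which is a genuine edge cut of the relevant subgraph because the underlying vertex sets and edge sets are disjoint. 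Having built the $N_{\ell,\cdot}$, I aggregate them over all $\ell$ together with one extra slot accounting for the children $t'$ with $\sigma(t')\subseteq X$, using the same pairwise operation and discarding any combination whose induced $A_t$ disagrees with the index being filled; the result is $M^X_t$. Both the within-component combination (over up to $n$ children) and the cross-component aggregation (over up to $n$ components) are organized as \emph{balanced binary trees}, so the derivation of any single entry of $M^X_t$ involves only $\cO(\log n)$ rounding steps on top of the tables $M_{t'}$.

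For soundness ($M^X_t\to f^X_t$) the union-of-witnesses recipe directly produces, for every non-$\bot$ entry, a genuine cut of $G_t$ realizing $f^X_t$ at some exact tuple within the window of Definition~\ref{defn:XMtof_error}, since every rounding only decreases coordinates and the children's tables already satisfy $M_{t'}\to f_{t'}$. For completeness ($f^X_t\to M^X_t$) I would take a global-feasible witness $(A,B)$ of $f^X_t(w,A_t,x^\circ,y^\circ)$, observe that its restriction to each $G_{t'}$ is again global-feasible so that the entries promised by $f_{t'}\to M_{t'}$ are present, and trace the (fixed) side assignment of $X$ and the $P_\ell$ through the two binary trees: at every level the exact color vector of the handled prefix is coordinatewise dominated by the stored rounded vector, which is at most a factor $1+\delta$ larger, and multiplying the $\cO(\log n)$ fresh factors by the inductively guaranteed factor $(1+\delta)^{z\cdot\hgt(t')\log^2 n}$ of the children (with $\hgt(t')\le\hgt(t)-1$, so the increment $z\log^2 n$ easily swallows the $\cO(\log n)$ fresh roundings) lands exactly in the window of Definition~\ref{defn:XftoM_error}. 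The running time is the number of entries of $M^X_t$, namely $2^{\cO(k)}(\log_{1+\delta}n)^{\cO(c)}$, times the quadratic cost of a pairwise combination and the $n^{\cO(1)}$ many combination steps, with only a factor $2$ lost to guessing the side of $X$; crucially no $k^{\cO(k)}$ term appears, since the splitter family $\mathcal B_t$ of Observation~\ref{obs:splitter_set_prop} is used only in the proof of Lemma~\ref{lemma:alg_M_t}, not here.

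I expect the main obstacle to be the error bookkeeping through the nested binary trees: one has to set up the pairwise-combine-and-round operation carefully enough that (i) the stored representative really is a single valid edge cut of the right subgraph with exactly the claimed order and color counts -- which hinges on the vertex- and edge-disjointness of the pieces attached to distinct components and distinct children, itself a consequence of Lemma~\ref{lemma:splitter_partition_adhesions} and the tree-decomposition axioms -- and (ii) no global-feasible realizable tuple is lost along the way, with the accumulated multiplicative slack kept strictly below $(1+\delta)^{z\cdot\hgt(t)\log^2 n}$. The combinatorial ingredients for both points are already in place, so what remains is a disciplined induction rather than a new idea.
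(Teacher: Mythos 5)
Your proposal follows essentially the same route as the paper: fix the side of $X$, exploit the independence of the components $P_1,\dots,P_p$ of $H_t\setminus X$ and their disjoint adhesion families (Lemma~\ref{lemma:splitter_partition_adhesions}), note that each adhesion's placement is forced by the sides of $X$ and of the single component it meets, and combine the per-component contributions with the children's tables via knapsack-style merges organized in balanced binary trees (the paper's $\mathcal{T}^z$ over adhesions and $\mathcal{T}^p$ over components, i.e.\ the tables $H$, $N$ and $N_{\leq}$), so that only $\cO(\log n)$ fresh $(1+\delta)$-roundings accrue and are absorbed by the increment from $\hgt(t')\le\hgt(t)-1$ to $\hgt(t)$ in Definitions~\ref{defn:XMtof_error} and~\ref{defn:XftoM_error}. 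The soundness/completeness argument via union-of-witness-cuts and the running-time accounting also match the paper's, so no gap.
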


 We now prove Lemma~\ref{lemma:alg_M_t} assuming Lemma~\ref{lemma:alg_M^X_t} which we prove in the next subsection.

 \begin{proof}[Proof of Lemma~\ref{lemma:alg_M_t}]
 We use Lemma~\ref{splittersetlemma} to compute $\BB_t$. Then for each $X\in \BB_t$ we use Lemma~\ref{lemma:alg_M^X_t} to compute $M^X_t$. Finally for each $(w,A_t,a^\circ,b^\circ)$ in the domain of $M_t$, we compute $M_t(w,A_t,a^\circ,b^\circ)$. We set $M_t(w,A_t,a^\circ,b^\circ)=\bot$ if $M^X_t(w,A_t,a^\circ,b^\circ)=\bot$ for all $X\in \BB_t$. Otherwise we set $M_t(w,A_t,a^\circ,b^\circ)$ to an arbitrary $M^X_t(w,A_t,a^\circ,b^\circ)\neq \bot$, $X\in \BB_t$. 

For correctness we need to show that $M_t$ is good. We first prove property $f_t\rightarrow M_t$.
Suppose $f_t(w,A_t,x^\circ,y^\circ)=\true$ and there is a global feasible edge cut $(A,B)$ of $G_t$ that realizes it. Then by Lemma~\ref{lemma:f^X_t}, $\exists X\in \BB_t$ such that $f^X_t(w,A_t,x^\circ,y^\circ)=\true$ and $(A,B)$ realizes it. By Lemma~\ref{lemma:alg_M^X_t}, $M^X_t$ is good and so $\exists a^\circ,b^\circ \in D^c$ such that 
\begin{itemize}
    \item $M^X_t(w,A_t,a^{\circ},b^{\circ})\neq \bot$ 
     \item $a^\circ\leq x^\circ \leq (1+\delta)^{z \cdot \hgt(t)\log^2n} \cdot a^{\circ}$
    \item $b^\circ\leq y^\circ \leq (1+\delta)^{z \cdot \hgt(t)\log^2n} \cdot b^{\circ}$
\end{itemize}
So by the working of our algorithm $M_t(w,A_t,a^\circ,b^\circ)\neq \bot$. This proves property $f_t\rightarrow M_t$.

Next we prove property $M_t\rightarrow f_t$. Suppose $M_t(w,A_t,a^\circ,b^\circ)\neq \bot$, then by the working of our algorithm $M_t(w,A_t,a^\circ,b^\circ)=M^X_t(w,A_t,a^\circ,b^\circ)$ for some $X\in \BB_t$. Now again since $M^X_t$ is good, $\exists x^\circ,y^\circ$ such that:
\begin{itemize}
    \item $f^X_t(w,A_t,x^{\circ},y^{\circ})=\true$ and $M^X_t(w,A_t,a^{\circ},b^{\circ})$ is an edge-cut that realizes $f^X_t(w,A_t,x^{\circ},y^{\circ})$
    \item $a^\circ\leq x^\circ \leq (1+\delta)^{z \cdot \hgt(t)\log^2n} \cdot a^{\circ}$
    \item $b^\circ\leq y^\circ \leq (1+\delta)^{z \cdot \hgt(t)\log^2n} \cdot b^{\circ}$
\end{itemize}
By definition of $f^X_t$ and $M^X_t$, this implies that $f_t(w,A_t,x^{\circ},y^{\circ})=\true$ and $M_t(w,A_t,a^{\circ},b^{\circ})=M^X_t(w,A_t,a^{\circ},b^{\circ})$ is an edge-cut that also realizes $f_t(w,A_t,x^{\circ},y^{\circ})$. This proves property $M_t\rightarrow f_t$.

We now compute the running time of our algorithm. Lemma~\ref{splittersetlemma} takes time $2^{\cO(k\log k)}\cO(\log n)$. Then for each $X\in \BB_t$, Lemma~\ref{lemma:alg_M^X_t} takes time $2^{\cO(k)}(\log _{1+\delta}n)^{\cO(c)}n^{\cO(1)}$ to compute $M^X_t$. Then computing $M_t$ takes time $(|\DD(M_t)|\cdot |\BB_t|)^{\cO(1)}$\footnote{we use $\DD(\cdot)$ to denote the domain of a function or table}.
 Thus the algorithm in total takes time $2^{\cO(k\log k)}(\log _{1+\delta}n)^{\cO(c)}n^{\cO(1)}$.
 \end{proof}

\subsection{Computing $M^X_t$: Proof of Lemma~\ref{lemma:alg_M^X_t}}
In this section we show how to compute a good $M^X_{t}$, $X\in \BB_t$ 
assuming a good $M_{t'}$ has been computed for each $t'\in \textsf{child}(t)$. 
To compute  $M^X_{t}$,
we will do a``knapsack style'' dynamic programming to fill tables (that we define soon) over a tree of bounded depth so that the error accumulation is $O(\log_2n)$ and not too much. We first define this tree.
\begin{definition}
For an integer $i\geq 0$, we denote by $\mathcal{T}^i$ a binary tree of depth at most $\lceil\log_2(i)\rceil$ having $V(T)=\{0,\cdots,i\}$ with $i$ being the root and $0$ being a leaf. 
\end{definition}


We note that we fix $X\in \BB_t$ throughout this section. Let $\{P_0=X,P_1,\cdots,P_p\}$ be the connected components in $H_t\setminus X$. We now define some notations to help define functions $g$ and $g_{\leq}$ and corresponding approximate tables $N$ and $N_{\leq}$ that will help us compute $M^X_{t}$.

Let $P_{\leq \ell} = \bigcup_{j\in V(\cT^p_\ell)} P_j$. Let $\cA(t)=\{\sigma(t'):t'\in \child(t)\}$. Given a non negative integer $\ell$, that is less than or equal to $p$, we define the set $\cA(\ell)$ to be
the set of all adhesions in the set $\cA(t)$ that only have vertices from $P_\ell \cup X$ and have a non empty intersection with $P_\ell$. 
$$\cA(\ell)=\{\sigma(t'):\sigma(t')\in \cA(t), \sigma(t')\subseteq P_\ell\cup X, \sigma(t')\cap P_l\neq \emptyset\}$$
Also, we denote by $\cA_{\leq}(\ell) = \bigcup_{\ell'\in V(\cT^p_\ell)} \cA(\ell')$, the union of all sets $\cA(\ell')$, where $\ell'$ is in the subtree $\cT^p_\ell$ (subtree of $\cT^p$ rooted at $\ell$). Further we define the graph $G(\ell) = G[X \cup P_\ell]\cup \bigcup_{\sigma(t')\in \cA(\ell)} G_{t'}$, be the subgraph induced by all vertices in $X \cup P_\ell\cup \bigcup_{\sigma(t')\in \cA(l)} V(G_{t'})$. Further we define the graph $G_{\leq}(\ell) = G[X \cup P_{\leq \ell}]\cup \bigcup_{\sigma(t')\in \cA_{\leq}(\ell)} G_{t'}$, be the subgraph induced by all vertices in $X \cup P_{\leq \ell}\cup \bigcup_{\sigma(t')\in \cA_{\leq}(\ell)} V(G_{t'})$.

We now define a function $g:\{0,\cdots,p\}\times\{0,\cdots,k\}\times 2^{\sigma(t)}\times \{0,\cdots,n\}^c\times \{0,\cdots,n\}^c \times \{\textsf{T},\textsf{F}\} \rightarrow \{\true,\false\}$.
For each integer $\ell\in \{0,\cdots,p\}$, $w\in \{0,\cdots,k\}$, $x_A\in \{\textsf{T},\textsf{F}\}$, $A_t\subseteq \sigma(t)$ and tuples $x^\circ,y^\circ \in \{0,\cdots,n\}^c$ we define:
\begin{definition} $g(\ell,w,A_t,x^\circ,y^\circ,x_A)=\true$ if there exists an edge cut $(A,B)$ of $G(\ell)-E(G[\sigma(t)])$ that satisfies the following properties:
\begin{enumerate}
    \item $(A,B)$ has order at most $w$.
    \item $A\cap \sigma(t) = A_t\cap V(G(\ell))$
    \item $P_\ell\subseteq A$ or $P_\ell\subseteq B$
    \item $\chi^\circ(A\setminus \sigma(t))=x^\circ$
    \item $\chi^\circ(B\setminus \sigma(t))=y^\circ$
    \item If $x_A=\textsf{T}$, then $X\subseteq A$ else $X\subseteq B$.
\end{enumerate}
If such a cut does not exist, $g(\ell,w,A_t,x^\circ,y^\circ,x_A)=\mathsf{False}$.
Further if $g(\ell,w,A_t,x^\circ,y^\circ,x_A)=\mathsf{True}$, we say an edge cut $(A,B)$ of $G(\ell)-E(G[\sigma(t)])$ that satisfies all the properties \textbf{realizes} $g(\ell,w,A_t,x^\circ,y^\circ,x_A)$.
\end{definition}


We define another function $g_{\leq}:\{0,\cdots,p\}\times\{0,\cdots,k\}\times 2^{\sigma(t)}\times \{0,\cdots,n\}^c\times \{0,\cdots,n\}^c \times \{\textsf{T},\textsf{F}\}\rightarrow \{\true,\false\}$.
For each integer $w\in \{0,\cdots,k\}$, $\ell\in \{0,\cdots,p\}$, $A_t\subseteq \sigma(t)$, $x_A\in \{\textsf{T},\textsf{F}\}$ and tuples $a^\circ,b^\circ \in \{0,\cdots,n\}^c$ we define: 
\begin{definition}
$g_{\leq}(\ell,w,A_t,x^\circ,y^\circ,x_A)=\true$ if there exists an edge cut $(A,B)$ of $G_{\leq}(\ell)-E(G[\sigma(t)])$ that satisfies the following properties:
\begin{itemize}
    \item $(A,B)$ has order at most $w$
    \item $A\cap \sigma(t) = A_t\cap V(G_{\leq}(\ell))$
    \item $\forall \ell'\in V(\cT^p_\ell)$, $P_{\ell'}\subseteq A$ or $P_{\ell'}\subseteq B$.
    \item $\chi^\circ(A\setminus \sigma(t))=x^\circ$
    \item $\chi^\circ(B\setminus \sigma(t))=y^\circ$
    \item If $x_A=\textsf{T}$, then $X\subseteq A$ and $X\subseteq B$ otherwise.
    \end{itemize}
    If such a cut does not exist, $g_{\leq}(\ell,w,A_t,x^\circ,y^\circ,x_A)=\mathsf{False}$.
Further if $g_{\leq}(\ell,w,A_t,x^\circ,y^\circ,x_A)=\mathsf{True}$, we say an edge cut $(A,B)$ of $G_{\leq}(\ell)-E(G[\sigma(t)])$ that satisfies all the properties \textbf{realizes} $g_{\leq}(\ell,w,A_t,x^\circ,y^\circ,x_A)$.
\end{definition}

In order to compute $g$ and $g_{\leq}$ approximately, we have {\em tables} $N$ and $N_{\leq}$. 
Let $C_{\ell}$ be the set of all possible edge cuts of $G(\ell)-E(G[\sigma(t)])$ and let $C_{\leq\ell}$ be the set of all possible edge cuts of $G_{\leq}(\ell)-E(G[\sigma(t)])$.
Further let $N:\{0,\cdots,p\}\times \{0,\cdots,k\} \times 2^{\sigma(t)}\times D^c \times D^c \times \{\textsf{T},\textsf{F}\} \rightarrow \{C_{\ell},\bot\}$ and $N_{\leq}:\{0,\cdots,p\}\times \{0,\cdots,k\} \times 2^{\sigma(t)} \times D^c \times D^c \times \{\textsf{T},\textsf{F}\} \rightarrow \{C_{\leq \ell},\bot\}$. We will compute $N_{\leq}$ such that it satisfies properties $N^\ell_{\leq}\rightarrow g^\ell_{\leq}$ and $g^\ell_{\leq}\rightarrow N^\ell_{\leq}$ (defined below) and use it to set $M^X_t$.
%

\begin{definition}[Property $N^\ell_{\leq}\rightarrow g^\ell_{\leq}$]\label{lemma:N_leqtog_leq}
If $N_{\leq}(\ell,w,A_t,a^{\circ},b^{\circ},x_A)\neq \bot$ then $\exists$ $x^{\circ},y^{\circ}\in \{0,\cdots,n\}^c$ such that:
\begin{itemize}
    \item $g_{\leq}(\ell,w,A_t,x^{\circ},y^{\circ},x_A)=\true$ and $N_{\leq}(\ell,w,A_t,a^\circ,b^\circ,x_A)$ realizes $g_{\leq}(\ell,w,A_t,x^{\circ},y^{\circ},x_A)$.
    \item $a^\circ\leq x^\circ \leq (1+\delta)^{z \cdot ((\hgt(t)-1)\log^2n+\hgt(\ell)\log n)} \cdot a^{\circ}$
    \item $b^\circ\leq y^\circ \leq (1+\delta)^{z \cdot ((\hgt(t)-1)\log^2n+\hgt(\ell)\log n)} \cdot b^{\circ}$
\end{itemize}
Here $\hgt(\ell)$ is the height of node $\ell$ in tree $\mathcal{T}^p$.
\end{definition}

\begin{definition}[Property $g^\ell_{\leq}\rightarrow N^\ell_{\leq}$]\label{lemma:g_leqtoN_leq}
If $g_{\leq}(\ell,w,A_t,x^{\circ},y^{\circ},x_A)=\true$ and there is a global-feasible edge cut $(A,B)$ of $G_{\leq}(\ell)-E(G[\sigma(t)])$ that realizes it then $\exists$ $a^{\circ},b^\circ\in D^c$ such that:
\begin{itemize}
    \item $N_{\leq}(\ell,w,A_t,a^{\circ},b^{\circ},x_A)\neq \bot$
     \item $a^\circ\leq x^\circ \leq (1+\delta)^{z \cdot ((\hgt(t)-1)\log^2n+\hgt(\ell)\log n)} \cdot a^{\circ}$
    \item $b^\circ\leq y^\circ \leq (1+\delta)^{z \cdot ((\hgt(t)-1)\log^2n+\hgt(\ell)\log n)} \cdot b^{\circ}$
\end{itemize}
\end{definition}

\begin{definition}[Good $N^\ell_{\leq}$]
We say $N^\ell_{\leq}$ is good if it satisfies properties $N^\ell_{\leq}\rightarrow g^\ell_{\leq}$ and $g^\ell_{\leq}\rightarrow N^\ell_{\leq}$.    
\end{definition}

\begin{definition}[Property $N\rightarrow g$]\label{lemma:Ntog}
If $N(\ell,w,A_t,a^{\circ},b^{\circ},x_A)\neq \bot$ then $\exists$ $x^{\circ},y^{\circ}\in \{0,\cdots,n\}^c$ such that:
\begin{itemize}
    \item $g(\ell,w,A_t,x^{\circ},y^{\circ},x_A)=\true$ and $N(\ell,w,A_t,a^\circ,b^\circ,x_A)$ realizes $g(\ell,w,A_t,x^{\circ},y^{\circ},x_A)$.
    \item $a^\circ\leq x^\circ \leq (1+\delta)^{z \cdot (\hgt(t)-1)\log^2n+2\log n} \cdot a^{\circ}$
    \item $b^\circ\leq y^\circ \leq (1+\delta)^{z \cdot (\hgt(t)-1)\log^2n+2\log n} \cdot b^{\circ}$
\end{itemize}
Here $\hgt(\ell)$ is the height of node $\ell$ in tree $\mathcal{T}^p$ and ht
\end{definition}

\begin{definition}[Property $g\rightarrow N$]\label{lemma:gtoN}
If $g(\ell,w,A_t,x^{\circ},y^{\circ},x_A)=\true$ and there is a global-feasible edge cut $(A,B)$ of $G(\ell)-E(G[\sigma(t)])$ that realizes it then $\exists$ $a^{\circ},b^\circ\in D^c$ such that:
\begin{itemize}
    \item $N(\ell,w,A_t,a^{\circ},b^{\circ},x_A)\neq \bot$
     \item $a^\circ\leq x^\circ \leq (1+\delta)^{z \cdot (\hgt(t)-1)\log^2n+2\log n} \cdot a^{\circ}$
    \item $b^\circ\leq y^\circ \leq (1+\delta)^{z \cdot (\hgt(t)-1)\log^2n+2\log n} \cdot b^{\circ}$
\end{itemize}
\end{definition}

\begin{definition}[Good $N$]
We say $N$ is good if it satisfies properties $N\rightarrow g$ and $g\rightarrow N$.    
\end{definition}

We now state that a good $N$ and $N_{\leq}$ can be computed efficiently. We will prove Lemma~\ref{lemma:N_leq} in the next subsection.
\begin{lemma}\label{lemma:N}
    There exists an algorithm that computes a good $N$ in time $2^{\cO(k)}(\log _{1+\delta}n)^{\cO(c)}n^{\cO(1)}$.
\end{lemma}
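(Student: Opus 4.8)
The plan is to compute $N(\ell,\cdot)$ separately for each component index $\ell\in\{0,\dots,p\}$ by a ``knapsack over a balanced binary tree'' that glues together the good tables $M_{t'}$ of those children $t'$ of $t$ whose adhesion lies in $\cA(\ell)$. Fix $\ell$, $x_A\in\{\textsf{T},\textsf{F}\}$ and $A_t\subseteq\sigma(t)$. Properties~3 and~6 in the definition of $g$ force each of $P_\ell$ and $X$ to lie entirely on one side of the cut, so once $x_A$ is fixed there are at most two candidate partitions $(A_0,B_0)$ of $X\cup P_\ell$ (one per side of $P_\ell$), and we discard any that is inconsistent with $A_t$ on $\sigma(t)\cap(X\cup P_\ell)$; since $\alpha(t')\cap\beta(t)=\emptyset$ for every child $t'$, this set equals $\sigma(t)\cap V(G(\ell))$, so the discarded partitions are exactly those violating property~2. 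Once $(A_0,B_0)$ is fixed, for every $\sigma(t')\in\cA(\ell)$ we have $\sigma(t')\subseteq X\cup P_\ell$, hence $A_{t'}\coloneqq\sigma(t')\cap A_0$ is determined, and the only remaining freedom is to pick, for each such $t'$, an edge cut of $G_{t'}$ that restricts to $A_{t'}$ on $\sigma(t')$. By \Cref{lemma:splitter_partition_adhesions} the sets $\cA(\ell)$ are pairwise disjoint; moreover the private parts $\alpha(t')$ of distinct children of $t$ are pairwise disjoint and disjoint from $\beta(t)\supseteq X\cup P_\ell$, and by compactness $G_{t'}$ has no edge leaving $\gamma(t')$ except inside $\sigma(t')$. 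Consequently, assembling these partial cuts together with $(A_0,B_0)$ double-counts no vertex and no edge, and an edge cut of $G(\ell)-E(G[\sigma(t)])$ realizing $g$ is precisely one such assembly.

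To keep the rounding error controlled I would arrange the adhesions $\sigma(t'_1),\dots,\sigma(t'_m)$ of $\cA(\ell)$ as the leaves of a balanced binary tree $\BB$ of depth $\cO(\log n)$. For a node $u$ of $\BB$ let $G_u$ be the union of the $G_{t'_j}$ over the leaves $j$ below $u$, and maintain a table $R_u$ whose entries, indexed by $(w,a^\circ,b^\circ)\in\{0,\dots,k\}\times D^c\times D^c$, are each either $\bot$ or a genuine edge cut $(A_u,B_u)$ of $G_u$ that respects $A_{t'_j}$ on each $\sigma(t'_j)$ below $u$, has order at most $w$, satisfies $a^\circ\le\sum_j\chi^\circ(A_u\cap\alpha(t'_j))$ and $b^\circ\le\sum_j\chi^\circ(B_u\cap\alpha(t'_j))$, and for which those two true colour vectors are at most $(1+\delta)^{z(\hgt(t)-1)\log^2 n+d_u}$ times the stored index, $d_u$ being the depth of $u$ in $\BB$. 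At a leaf $u=\{j\}$ I would read $R_u$ off the good table $M_{t'_j}$, keeping exactly the non-$\bot$ entries whose restriction to $\sigma(t'_j)$ is $A_{t'_j}$; since $M_{t'_j}$ is good and $\hgt(t'_j)\le\hgt(t)-1$, the slack inherited from $M_{t'_j}$ is at most $(1+\delta)^{z(\hgt(t)-1)\log^2 n}$. At an internal node with children $u_1,u_2$ I would enumerate all pairs of non-$\bot$ entries with $w_1+w_2\le k$, take the (vertex- and edge-disjoint outside $\beta(t)$) union of the two cuts, compute its exact order and colour vectors, round each colour coordinate down into $D$, and store the result under the rounded index --- one extra $(1+\delta)$ factor per level. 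At the root $\widehat u$ of $\BB$ I would fold in $(A_0,B_0)$, adding the number of $\big(G[X\cup P_\ell]-E(G[\sigma(t)])\big)$-edges crossing $(A_0,B_0)$ to the weight and $\chi^\circ(A_0\setminus\sigma(t))$, $\chi^\circ(B_0\setminus\sigma(t))$ to the two colour vectors, round once more, and write the resulting cut into $N(\ell,w,A_t,a^\circ,b^\circ,x_A)$, breaking ties arbitrarily.

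I would then verify the two halves of ``$N$ is good''. Property $N\to g$ (\Cref{lemma:Ntog}) is immediate from the construction: a non-$\bot$ entry is by design an actual edge cut of $G(\ell)-E(G[\sigma(t)])$ satisfying properties~1--6 at some true $x^\circ,y^\circ$, and the $R_u$-invariant plus the single root rounding for $(A_0,B_0)$ plus the slack inherited from the $M_{t'}$ together give $a^\circ\le x^\circ\le(1+\delta)^{z(\hgt(t)-1)\log^2 n+2\log n}\cdot a^\circ$ and likewise for $y^\circ$ (using that the depth of $\BB$ plus $\cO(1)$ is at most $2\log n$ for large $n$, small $n$ being handled by brute force). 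For property $g\to N$ (\Cref{lemma:gtoN}), a global-feasible edge cut realizing $g$ restricts on each $G_{t'}$ to a global-feasible cut realizing $f_{t'}$ with the appropriate parameters, so by $f_{t'}\to M_{t'}$ it is witnessed in $M_{t'}$ up to a $(1+\delta)^{z(\hgt(t)-1)\log^2 n}$ factor; feeding these into $\BB$ and combining bottom-up, an induction on the nodes of $\BB$ (a rounded sum of valid lower bounds is still a valid lower bound, and the error exponent grows by at most one per level) produces the required non-$\bot$ entry of $R_{\widehat u}$, hence of $N$.

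Finally, for the running time there are at most $n$ choices of $\ell$, $2^{\cO(k)}$ choices of $A_t$ (adhesions have size $\cO(k)$), $\cO(1)$ choices of $x_A$ and of $(A_0,B_0)$, and $\cO(n)$ nodes of $\BB$; each $R_u$ has $(k+1)|D|^{2c}$ entries and a combination step costs $\big((k+1)|D|^{2c}\big)^2\cdot n^{\cO(1)}$, while only the $\cO(\log_{1+\delta}n)$ smallest elements of $D$ are ever relevant --- giving $2^{\cO(k)}(\log_{1+\delta}n)^{\cO(c)}n^{\cO(1)}$ overall. I expect the main obstacle to be the error bookkeeping: keeping the multiplicative slack inherited from the $M_{t'}$ (which does not compound under summation --- only the per-level rounding does) cleanly separated from the $\cO(\log n)$ roundings along a root-to-leaf path of $\BB$, so that the final exponent is exactly $z(\hgt(t)-1)\log^2 n+2\log n$; a close second is making watertight the claim that unioning cuts of the $G_{t'}$'s with the base cut double-counts neither a vertex nor an edge, which is precisely where \Cref{lemma:splitter_partition_adhesions} and compactness of $(T,\beta)$ are needed.
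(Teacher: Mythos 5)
Your proposal is correct and follows essentially the same route as the paper: fixing $\ell$, $A_t$, $x_A$ and the side of $P_\ell$ (the paper's flag $x_\ell$), arranging the adhesions of $\cA(\ell)$ in a balanced binary tree (the paper's $\mathcal{T}^z$), and doing a knapsack-style merge of the children's good tables with one geometric rounding per level so the accumulated error stays within the $(1+\delta)^{z(\hgt(t)-1)\log^2 n + 2\log n}$ budget of properties $N\rightarrow g$ and $g\rightarrow N$. The only cosmetic difference is that you place adhesions at the leaves of your tree while the paper assigns one adhesion to every node and packages the merge via ``feasible $w$-configurations,'' which changes nothing in the depth, the error analysis, or the running time.
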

\begin{lemma}\label{lemma:N_leq}
   There exists an algorithm that takes as input a good $N_{\leq}(q,.)$ for each $q\in \child(l)$ and a good $N(\ell,.)$ and returns a good $N_{\leq}(\ell,.)$ in time $2^{\cO(k)}(\log _{1+\delta}n)^{\cO(c)}n^{\cO(1)}$.
\end{lemma}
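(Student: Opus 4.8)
The plan is to compute $N_{\leq}(\ell,\cdot)$ by a ``knapsack-style'' combine of the tables supplied as input. Write $\{P_0=X,P_1,\ldots,P_p\}$ for the components of $H_t\setminus X$, let $q_1,q_2$ be the at most two children of $\ell$ in $\cT^p$ (it is a binary tree), and recall $V(\cT^p_\ell)=\{\ell\}\cup V(\cT^p_{q_1})\cup V(\cT^p_{q_2})$. Correspondingly $G_{\leq}(\ell)-E(G[\sigma(t)])$ is obtained by gluing $G(\ell)-E(G[\sigma(t)])$, $G_{\leq}(q_1)-E(G[\sigma(t)])$ and $G_{\leq}(q_2)-E(G[\sigma(t)])$ along $X$: by Lemma~\ref{lemma:splitter_partition_adhesions} the components $P_{\ell'}$ lying in different subtrees are pairwise non-adjacent in $G$ and the corresponding sets $\cA(\cdot)$ are pairwise disjoint, so (using also that $\gamma(t')\cap\beta(t)=\sigma(t')$ for children $t'$ of $t$ in $T$) these three graphs pairwise intersect in exactly $X$ and have no edges between their private vertex sets. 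In the base case where $\ell$ is a leaf of $\cT^p$ there are no children, $G_{\leq}(\ell)=G(\ell)$ and $g_{\leq}(\ell,\cdot)=g(\ell,\cdot)$, so a good $N_{\leq}(\ell,\cdot)$ is read off the good $N(\ell,\cdot)$ directly.

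For the inductive case, for each index $(\ell,w,A_t,a^\circ,b^\circ,x_A)$ in the domain of $N_{\leq}(\ell,\cdot)$ I would iterate over all ways to split the order budget $w=w_0+w_1+w_2$ and over all triples $a_0^\circ,a_1^\circ,a_2^\circ\in D^c$ and $b_0^\circ,b_1^\circ,b_2^\circ\in D^c$ (keeping $A_t$ and $x_A$ fixed, since they are part of the index), and test whether $N(\ell,w_0,A_t,a_0^\circ,b_0^\circ,x_A)\neq\bot$ and $N_{\leq}(q_j,w_j,A_t,a_j^\circ,b_j^\circ,x_A)\neq\bot$ for $j=1,2$. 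Because $X$ sits entirely on one side (the side recorded by $x_A$), and the three pieces are otherwise vertex- and edge-disjoint, the union of the three stored edge cuts is an edge cut of $G_{\leq}(\ell)-E(G[\sigma(t)])$ whose order is $w_0+w_1+w_2\le w$, whose restriction to $\sigma(t)$ is $A_t\cap V(G_{\leq}(\ell))$, and each $P_{\ell'}$, $\ell'\in V(\cT^p_\ell)$, is monochromatic. Its color counts outside $\sigma(t)$ are obtained from $a_0^\circ+a_1^\circ+a_2^\circ$ (resp.\ the $b$'s) after correcting for the two surplus copies of the \emph{fixed} quantity $\chi^\circ(X\setminus\sigma(t))$; the point is that this $X$-contribution is a concrete number determined by $x_A$, which must be handled as an exact additive correction rather than by subtracting an approximate table value, so that the multiplicative slack is not amplified. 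I store the union in $N_{\leq}(\ell,w,A_t,a^\circ,b^\circ,x_A)$ where $a^\circ,b^\circ$ are the corrected sums rounded \emph{down} to $D^c$; rounding down keeps them lower bounds on the true counts and costs one extra factor of $(1+\delta)$.

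Correctness splits into the two properties. For $N_{\leq}^\ell\to g_{\leq}^\ell$ (Definition~\ref{lemma:N_leqtog_leq}): any non-$\bot$ entry arose as above, and unfolding the goodness of $N(\ell,\cdot)$ and the $N_{\leq}(q_j,\cdot)$ yields realizing cuts on the three pieces whose union realizes $g_{\leq}(\ell,\cdot)$ at the un-rounded tuple, with slack exponent at most the \emph{maximum} of the three input exponents plus one for the rounding; since $\hgt(q_j)\le\hgt(\ell)-1$ in $\cT^p$ and $z$ is a large enough constant to absorb the $\cO(1)$ rounding steps, this is within $z\cdot((\hgt(t)-1)\log^2n+\hgt(\ell)\log n)$. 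For $g_{\leq}^\ell\to N_{\leq}^\ell$ (Definition~\ref{lemma:g_leqtoN_leq}): given a global-feasible cut $(A,B)$ realizing $g_{\leq}(\ell,w,A_t,x^\circ,y^\circ,x_A)$, restrict it to the three pieces; each restriction is again global-feasible, keeps its components monochromatic, and realizes the corresponding $g$ / $g_{\leq}$ value for sub-budgets with $w_0+w_1+w_2\le w$ (no cross-piece edges) and color tuples summing — after the $X$-correction — to $(x^\circ,y^\circ)$. Goodness of the input tables produces non-$\bot$ entries matching these sub-tuples up to slack, so the combine step examines exactly this combination and writes a non-$\bot$ entry with the required slack. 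The running time is the domain size $2^{\cO(k)}(\log_{1+\delta}n)^{\cO(c)}$ times the number of combinations per entry, $(k+1)^2\cdot|D|^{\cO(c)}=2^{\cO(k)}(\log_{1+\delta}n)^{\cO(c)}$, with $n^{\cO(1)}$ overhead for assembling the stored cuts, giving the claimed bound.

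The main obstacle I expect is the error accounting: one must verify that gluing the three tables and rounding lands \emph{exactly} inside the slack exponent demanded by Definitions~\ref{lemma:N_leqtog_leq}--\ref{lemma:g_leqtoN_leq}, which relies on the fact that combining (as opposed to chaining) tables takes a maximum of exponents, that the height in $\cT^p$ decreases by at least one at every child, and that $z$ is large enough to swallow the constantly many rounding steps. The second delicate point, already flagged above, is the shared set $X$: since $X$ may be an arbitrarily large fraction of $\beta(t)$, the combine cannot subtract an approximate count of $X\setminus\sigma(t)$, and the argument must route its color contribution through $x_A$ as an exact correction so that soundness and completeness both survive the multiplicative rounding.
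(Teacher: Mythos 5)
Your proposal is correct and follows essentially the same route as the paper: the paper's ``$(\ell,w,A_t,a^\circ,b^\circ,x_A)$-feasible $w$-configuration'' is precisely your enumeration of budget splits and $D^c$-tuples over $\ell$ and its (at most two) children in $\cT^p$, the combine step is the same union of stored cuts justified by the pieces sharing only $X$ with no cross edges, and the paper likewise subtracts the exact quantity $|\child_{\cT}(\ell)|\cdot\chi^\circ(X\setminus\sigma(t))$ to undo the multiple counting of $X$. Your error accounting (max of exponents under disjoint combination, height decrease in $\cT^p$, constant $z$ absorbing the rounding) spells out what the paper leaves to ``careful observation.''
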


We are now ready to prove Lemma~\ref{lemma:alg_M^X_t} that gives an algorithm for computing $M^X_t$ assuming Lemma~\ref{lemma:N_leq}.
\begin{proof}[Proof of Lemma~\ref{lemma:alg_M^X_t}]
We first compute a good $N_{\leq}$ using Lemma~\ref{lemma:N_leq}.  Let $(w,A_t,a^\circ,b^\circ)\in \DD(M^X_t)$. We now set $M^X_t(w,A_t,a^\circ,b^\circ)$ using $N_{\leq}$. Let $M_T = N_{\leq}(p,w,A_t,a^\circ,b^\circ,\textsf{T})$ and $M_F = N_{\leq}(p,w,A_t,a^\circ,b^\circ,\textsf{F})$.

\begin{align*}
M^X_t(w,A_t,a^\circ,b^\circ) &= \begin{cases}
    \bot & \text{, if } M_T=M_F=\bot \\
    M_F & \text{, if } M_F\neq \bot \\
    M_T & \text{, otherwise}
\end{cases}
\end{align*}


We now show that the $M^X_t$ we computed is good. We first show property $M^X_t\rightarrow f^X_t$. 
Let $M^X_t(w,A_t,a^\circ,b^\circ)\neq \bot $. Then it is set to $N_{\leq}(p,w,A_t, a^\circ,b^\circ,x_A=\textsf{T})$ or $N_{\leq}(p,w,A_t, a^\circ,b^\circ,x_A=\textsf{F})$. Since $N_{\leq}$ is good, $\exists$ $x^{\circ},y^{\circ}\in \{0,\cdots,n\}^c$ such that:
\begin{itemize}
    \item $g_{\leq}(p,w,A_t,x^{\circ},y^{\circ},x_A)=\true$ and $N_{\leq}(p,w,A_t,a^\circ,b^\circ,x_A)$ realizes $g_{\leq}(p,w,A_t,x^{\circ},y^{\circ},x_A)$.
    \item $a^\circ\leq x^\circ \leq (1+\delta)^{z \cdot ((\hgt(t)-1)\log^2n+\hgt(p)\log n)}\} \cdot a^{\circ}$
    \item $b^\circ\leq y^\circ \leq (1+\delta)^{z \cdot ((\hgt(t)-1)\log^2n+\hgt(p)\log n)}\} \cdot b^{\circ}$
\end{itemize}
By definition of $g_{\leq}$, $f^X_{t}(w,A_t,x^{\circ},y^{\circ})=\true$ and $M^X_t(w,A_t,a^\circ,b^\circ)$ realizes it.
Since $\hgt(p)\leq \log n$, $x^\circ\leq (1+\delta)^{z \cdot \hgt(t)\log^2n} \cdot a^{\circ}$ and $y^\circ\leq (1+\delta)^{z \cdot \hgt(t)\log^2n}\cdot b^{\circ}$.

We now show property $f^X_t\rightarrow M^X_t$. Suppose $f^X_t(w,A_t,x^{\circ},y^{\circ})=\true$ and there is a global-feasible cut $(A,B)$ that realizes it, then by definition of $f^X_t$ and $g_{\leq}$, $g_{\leq}(p,w,A_t,x^{\circ},y^{\circ},x_A)=\true$ for $x_A=\true$ or $x_A=\false$ and $(A,B)$ realizes it. Now since $N_{\leq}$ is good, then $\exists$ $a^{\circ},b^\circ\in D^c$ such that:
\begin{itemize}
    \item $N_{\leq}(p,w,A_t,a^{\circ},b^{\circ},x_A)\neq \bot$
     \item $a^\circ\leq x^\circ \leq (1+\delta)^{z \cdot ((\hgt(t)-1)\log^2n+\hgt(p)\log n))}\} \cdot a^{\circ} \leq (1+\delta)^{z \cdot \hgt(t)\log^2n} \cdot a^\circ$
    \item $b^\circ\leq y^\circ \leq (1+\delta)^{z \cdot ((\hgt(t)-1)\log^2n+\hgt(p)\log n))}\} \cdot b^{\circ} \leq (1+\delta)^{z \cdot \hgt(t)\log^2n} \cdot b^{\circ}$
\end{itemize}
By the working of our algorithm, $M^X_t(w,A_t,a^\circ,b^\circ)\neq \bot$. The running time follows from Lemma~\ref{lemma:N_leq} and the bound on size of domain of $M^X_t$. 
\end{proof}

We now show how to compute a good $N_{\leq}$ to prove Lemma~\ref{lemma:N_leq}. For this, we first define some necessary notations. 

\begin{definition}
For a tree $\cT$ and a node $x\in V(\cT)$, let $\faml_{\cT}(x) = \child_{\cT}(x) \cup \LR{x}$. A $w$-configuration is a triple $(\nu_w,\nu^\circ_{a},\nu^\circ_{b})$ where:
\begin{itemize}
    \item $\nu_w:\faml_{\cT}(l)\rightarrow [n]$ is a function such that $\displaystyle \sum_{q\in \faml_{\cT}(x)} \nu(q)\leq w$
    \item $\nu^\circ_{a}:\faml_{\cT}(l)\rightarrow D^c$   
    \item $\nu^\circ_{b}:\faml_{\cT}(l)\rightarrow D^c$ 
\end{itemize}
\end{definition}

\begin{definition}
A $w$-configuration $(\nu_w,\nu^\circ_{a},\nu^\circ_{b})$ is $(\ell,w,A_t,a^\circ,b^\circ,x_A)$-feasible, where $(\ell,w,A_t,a^\circ,b^\circ,x_A)\in \mathcal{D}(N_{\leq})$ if it satisfies:
\begin{itemize}
    \item If $x_A=\textsf{T}$:
    \begin{itemize}
        \item $\displaystyle \big\lceil \sum_{q\in \faml_{\cT}(\ell)} \nu^\circ_{a}(q) - |\child_{\cT}(\ell)|\cdot \chi^\circ(X\setminus \sigma(t)) \big\rceil\leq a^\circ$ 
        \item $\displaystyle \big\lceil \sum_{q\in \faml_{\cT}(\ell)} \nu^\circ_{b}(q)\big\rceil\leq b^\circ$ 
    \end{itemize}
    \item If $x_A=\textsf{F}$:
    \begin{itemize}
        \item $\displaystyle \big\lceil \sum_{q\in \faml_{\cT}(\ell)} \nu^\circ_{a}(q)\big\rceil\leq a^\circ$
        \item $\displaystyle \big\lceil \sum_{q\in \faml_{\cT}(\ell)} \nu^\circ_{b}(q) - |\child_{\cT}(\ell)|\cdot \chi^\circ(X\setminus \sigma(t)) \big\rceil\leq b^\circ$ 
    \end{itemize}
    \item $N(\ell,\nu_w(\ell),A_t,\nu^\circ_{a}(\ell),\nu^\circ_{b}(\ell),x_A)\neq \bot$
    \item For each $q\in \child(\ell)$, $N_{\leq}(q,\nu_w(q),A_t,\nu^\circ_{a}(q),\nu^\circ_{b}(q),x_A)\neq \bot$
\end{itemize}
\end{definition}
Next we compute $N_{\leq}$ using good $N$ and $N_{\leq}$ of its children to prove Lemma~\ref{lemma:N_leq}.
\begin{proof}[Proof of Lemma~\ref{lemma:N_leq}]
If there is no $(\ell,w,A_t,a^\circ,b^\circ,x_A)$-feasible $w$-configuration, then set $N_{\leq}(\ell,w,A_t,a^\circ,b^\circ,x_A) =\bot$. Otherwise let $(\nu_w,\nu^\circ_{a},\nu^\circ_{b})$ be an $(\ell,w,A_t,a^\circ,b^\circ,x_A)$-feasible $w$-configuration.

Let $\GG$ be the family of graphs $G(\ell)$ and $G_{\leq}(q)$, for each $q\in \child(\ell)$. Observe that for any two graph $G_1,G_2\in \GG$, the graphs $G_1\setminus X$ and $G_2\setminus X$ do not have any common vertices. Further there are no edges between $V(V(G_1)\setminus X)$ and $V(V(G_2)\setminus X)$. Thus we can obtain the cut $(A^*,B^*)$ of $G_\leq(\ell)$ by combining the $A$-side and $B$-side of the cuts $N_{\leq}(q,\nu_w(q),A_t,\nu^\circ_{a}(q),\nu^\circ_{b}(q),x_A)$ for each $q\in \child(\ell)$ and the cut $N(\ell,\nu_w(\ell),A_t,\nu^\circ_{a}(\ell),\nu^\circ_{b}(\ell),x_A)$. The common vertex set is $X$ and by definition of $N_{\leq}$ and $N$, $X\subseteq A^*$ if $x_A=\mathsf{T}$ or $X\subseteq B^*$ if $x_A=\mathsf{F}$.
In this case set $N_{\leq}(\ell,w,A_t,a^\circ,b^\circ,x_A)=(A^*,B^*)$.

On careful observation, one can see that $N_{\leq}$ is good for $\ell$ using the fact that $N$ is good and $N_{\leq}$ is good for each child $q$ of $\ell$ along with the notion of feasible $w$-configuration. 
\end{proof}




\subsection{Computing $N^\ell$: Proof of Lemma~\ref{lemma:N}}
We note that we fix $\ell\in p$ and the connected component $P_\ell$ throughout this section. We now define some notations to help define function $h_{\leq}$ and corresponding approximate table $H_{\leq}$ that will help us compute $N^\ell$.

Let the set of adhesions $\AA(\ell)$ associated with $P_\ell$ be $\mathcal{A}(\ell)=\{A_1,A_2,\cdots,A_z\}$, where $z=|\mathcal{A}(\ell)|$. We now perform a dynamic programming over the binary tree $\mathcal{T}^z$ for computing $N^\ell$. This will be very similar to what we did for computing $N_{\leq}$.

Let $\AA(l,m) = \bigcup_{j\in V(\cT^z_m)} A_j$. 
We define the graph $G(\ell,m) = G[X \cup P_{\ell}]\cup \bigcup_{\sigma(t')\in \cA(\ell,m)} G_{t'}$, to be the subgraph induced by all vertices in $X \cup P_{\ell}\cup \bigcup_{\sigma(t')\in \cA(\ell,m)} V(G_{t'})$.

We now define a function $h:\{0,\cdots,z\}\times\{0,\cdots,k\}\times 2^{\sigma(t)}\times \{0,\cdots,n\}^c\times \{0,\cdots,n\}^c \times \{\textsf{T},\textsf{F}\} \times \{\textsf{T},\textsf{F}\} \rightarrow \{\true,\false\}$.
For each integer $m\in \{0,\cdots,z\}$, $w\in \{0,\cdots,k\}$, $x_A\in \{\textsf{T},\textsf{F}\}$, $x_\ell\in \{\textsf{T},\textsf{F}\}$, $A_t\subseteq \sigma(t)$ and tuples $x^\circ,y^\circ \in \{0,\cdots,n\}^c$ we define:
\begin{definition} $h(m,w,A_t,x^\circ,y^\circ,x_A,x_\ell)=\true$ if there exists an edge cut $(A,B)$ of $G(\ell,m)-E(G[\sigma(t)])$ that satisfies the following properties:
\begin{enumerate}
    \item $(A,B)$ has order at most $w$.
    \item $A\cap \sigma(t) = A_t\cap V(G(\ell,m))$
    \item $\chi^\circ(A\setminus \sigma(t))=x^\circ$
    \item $\chi^\circ(B\setminus \sigma(t))=y^\circ$
    \item If $x_A=\textsf{T}$, then $X\subseteq A$ else $X\subseteq B$.
    \item If $x_\ell=\textsf{T}$, then $P_\ell\subseteq A$ else $P_\ell \subseteq B$
\end{enumerate}
If such a cut does not exist, $h(m,w,A_t,x^\circ,y^\circ,x_A,x_\ell)=\mathsf{False}$.
Further if $h(m,w,A_t,x^\circ,y^\circ,x_A,x_\ell)=\mathsf{True}$, we say an edge cut $(A,B)$ of $G(\ell,m)-E(G[\sigma(t)])$ that satisfies all the properties \textbf{realizes} $h(m,w,A_t,x^\circ,y^\circ,x_A,x_\ell)$.
\end{definition}

In order to compute $h$ approximately, we have table $H$. 
Let $C_{\ell,m}$ be the set of all possible edge cuts of $G(\ell,m)-E(G[\sigma(t)])$.
Further let $H:\{0,\cdots,z\}\times\{0,\cdots,k\}\times 2^{\sigma(t)}\times \{0,\cdots,n\}^c\times \{0,\cdots,n\}^c \times \{\textsf{T},\textsf{F}\} \times \{\textsf{T},\textsf{F}\} \rightarrow \{C_{\ell,m},\bot\}$. We will compute $H$ such that it satisfies properties $H\rightarrow h$ and $h\rightarrow H$ (defined below) and use it to set $N^\ell$.
%

\begin{definition}[Property $H\rightarrow h$]\label{lemma:Htoh}
If $H(m,w,A_t,a^{\circ},b^{\circ},x_A,x_\ell)\neq \bot$ then $\exists$ $x^{\circ},y^{\circ}\in \{0,\cdots,n\}^c$ such that:
\begin{itemize}
    \item $h(m,w,A_t,x^{\circ},y^{\circ},x_A,x_\ell)=\true$ and $H(m,w,A_t,a^\circ,b^\circ,x_A,x_{\ell})$ realizes $h(m,w,A_t,x^{\circ},y^{\circ},x_A,x_\ell)$.
    \item $a^\circ\leq x^\circ \leq (1+\delta)^{z \cdot (\hgt(t)-1)\log^2n+2\hgt(m)}\} \cdot a^{\circ}$
    \item $b^\circ\leq y^\circ \leq (1+\delta)^{z \cdot (\hgt(t)-1)\log^2n+2\hgt(m)}\} \cdot b^{\circ}$
\end{itemize}
Here $\hgt(m)$ is the height of node $m$ in tree $\mathcal{T}^z$.
\end{definition}

\begin{definition}[Property $h\rightarrow H$]\label{lemma:htoH}
If $h(m,w,A_t,x^{\circ},y^{\circ},x_A,x_{\ell})=\true$ and there is a global-feasible edge cut $(A,B)$ of $G(\ell,m)-E(G[\sigma(t)])$ that realizes it then $\exists$ $a^{\circ},b^\circ\in D^c$ such that:
\begin{itemize}
    \item $H(m,w,A_t,a^{\circ},b^{\circ},x_A,x_B)\neq \bot$
     \item $a^\circ\leq x^\circ \leq (1+\delta)^{z \cdot (\hgt(t)-1)\log^2n+2\hgt(m)}\} \cdot a^{\circ}$
    \item $b^\circ\leq y^\circ \leq (1+\delta)^{z \cdot (\hgt(t)-1)\log^2n+2\hgt(m)}\} \cdot b^{\circ}$
\end{itemize}
\end{definition}

\begin{definition}[Good $H$]
We say $H$ is good if it satisfies properties $H\rightarrow h$ and $h\rightarrow H$.    
\end{definition}

Given a good $H$, we can compute $N(\ell,.)$ as follows: $N(\ell,w,A_t,a^\circ,b^\circ,x_A)=\bot$ if $H(z,w,A_t,a^\circ,b^\circ,x_A,x_\ell)=\bot$ for both $x_\ell=\textsf{T}$ and $x_\ell=\textsf{F}$. Else we can set $N(\ell,w,A_t,a^\circ,b^\circ,x_A)$ to $H(z,w,A_t,a^\circ,b^\circ,x_A,x_\ell)$ such that $x_\ell\neq \bot$. We can easily verify $N$ is good if $H$ is good. This proves Lemma~\ref{lemma:N}.

A good $H$ can be computed by using $M^X_{t'}$, $t'\in \child(t)$. This can be done in a similar manner to $N_{\leq}$ by defining a feasible configuration and merging cuts.

\begin{definition}
A $w$-configuration $(\nu_w,\nu^\circ_{a},\nu^\circ_{b})$ is $(m,w,A_t,a^\circ,b^\circ,x_A,x_{\ell})$-feasible if it satisfies the following:
\begin{itemize}
    \item If $x_A=\textsf{T}$ and $x_\ell=\textsf{T}$:
    \begin{itemize}
        \item $\displaystyle \big\lceil \sum_{q\in \faml_{\cT}(\ell)} \nu^\circ_{a}(q) - |\child_{\cT}(\ell)|\cdot \chi^\circ(X\setminus \sigma(t)) - |\child_{\cT}(\ell)|\cdot \chi^\circ(P_{\ell}\setminus \sigma(t)) \big\rceil\leq a^\circ$ 
        \item $\displaystyle \big\lceil \sum_{q\in \faml_{\cT}(\ell)} \nu^\circ_{b}(q)\big\rceil\leq b^\circ$ 
    \end{itemize}
    \item If $x_A=\textsf{T}$ and $x_\ell=\textsf{F}$:
    \begin{itemize}
        \item $\displaystyle \big\lceil \sum_{q\in \faml_{\cT}(\ell)} \nu^\circ_{a}(q) - |\child_{\cT}(\ell)|\cdot \chi^\circ(X\setminus \sigma(t)) \big\rceil\leq a^\circ$ 
        \item $\displaystyle \big\lceil \sum_{q\in \faml_{\cT}(\ell)} \nu^\circ_{b}(q) - |\child_{\cT}(\ell)|\cdot \chi^\circ(P_{\ell}\setminus \sigma(t))\big\rceil\leq b^\circ$ 
    \end{itemize}
    \item If $x_A=\textsf{F}$ and $x_{\ell}=T$:
    \begin{itemize}
        \item $\displaystyle \big\lceil \sum_{q\in \faml_{\cT}(\ell)} \nu^\circ_{a}(q)- |\child_{\cT}(\ell)|\cdot \chi^\circ(P_{\ell}\setminus \sigma(t))\big\rceil\leq a^\circ$
        \item $\displaystyle \big\lceil \sum_{q\in \faml_{\cT}(\ell)} \nu^\circ_{b}(q) - |\child_{\cT}(\ell)|\cdot \chi^\circ(X\setminus \sigma(t)) \big\rceil\leq b^\circ$ 
    \end{itemize}
        \item If $x_A=\textsf{F}$ and $x_{\ell}=F$:
    \begin{itemize}
        \item $\displaystyle \big\lceil \sum_{q\in \faml_{\cT}(\ell)} \nu^\circ_{a}(q)\big\rceil\leq a^\circ$
        \item $\displaystyle \big\lceil \sum_{q\in \faml_{\cT}(\ell)} \nu^\circ_{b}(q) - |\child_{\cT}(\ell)|\cdot \chi^\circ(X\setminus \sigma(t)) - |\child_{\cT}(\ell)|\cdot \chi^\circ(P_{\ell}\setminus \sigma(t))\big\rceil\leq b^\circ$ 
    \end{itemize}
    \item Let $A_m=\sigma(t')$ and $A_{t'}\subseteq \sigma(t')$ such that:
    \begin{itemize}
        \item $\sigma(t')\cap X \subseteq A_{t'}$ if $x_A=T$ and $\sigma(t')\cap X = \emptyset$ otherwise
        \item $\sigma(t')\cap P_{\ell} \subseteq A_{t'}$ if $x_\ell=T$ and $\sigma(t')\cap P_{\ell} = \emptyset$ otherwise
        \item  $A_{t} \cap \sigma(t') = \sigma(t) \cap A_{t'}$ 
    \end{itemize} 
    then $M^X_{t'}(\nu_w(m),A_{t'},\nu^\circ_{a}(m),\nu^\circ_{b}(m))\neq \bot$,
    \item For each $q\in \child(\ell)$, $H(q,\nu_w(q),A_{t},\nu^\circ_{a}(q),\nu^\circ_{b}(q),x_A,x_\ell)\neq \bot$
\end{itemize}

\end{definition}
For $H(m,w,A_t,a^\circ,b^\circ,x_A,x_{\ell})$, if there is no feasible $(m,w,A_t,a^\circ,b^\circ,x_A,x_{\ell})$-configuration, we set it to $\bot$. If there is such a configuration, we build and set it to the following cut. 
We obtain a cut $(A^*,B^*)$ by combining the $A$-side and $B$-side of the cuts, $M^X_{t'}(\nu_w(m),A_{t'},\nu^\circ_{a}(m),\nu^\circ_{b}(m))$ and $H(q,\nu_w(q),A_{t},\nu^\circ_{a}(q),\nu^\circ_{b}(q),x_A,x_\ell)$, $\forall q\in \child(\ell)$. By definition of a feasible $(m,w,A_t,a^\circ,b^\circ,x_A,x_{\ell})$-configuration, all these cuts are $\neq \bot$. It is also easy to see that such a cut can indeed be formed. This is because the only common vertices are from $X$ and $P_{\ell}$ which are on fixed side $A$ or $B$ according to $x_A$ and $x_\ell$ and the cuts share no edges between them. We set $H(m,w,A_t,a^\circ,b^\circ,x_A,x_{\ell})=(A^*,B^*)$.

To complete the proof we need to show that the table $H$ constructed is good. 
Using induction, we can assume that for each $q\in \child(l)$, $H(q,.)$ is good and $M_t'$ is good. These properties along with the definition of $H$ and the notion of feasible configuration, can be used to verify that $H(m,.)$ is good.  One thing to note here is that a multiplicative rounding error of $(1+\delta)^2$ is introduced during this process which is reflected in properties $H\rightarrow h$ and $h\rightarrow H$. This completes our proof.


    
    

    
    

    
    




\section{W[1]-Hardness of \fb Parameterized by the Number of Colors} \label{sec:hardness}
In this section, we establish the \WO-hardness of \fb. To this end, we first consider the following problem.

\begin{tcolorbox}[colback=white!5!white,colframe=gray!75!black]
	\mdss
	\\\textbf{Input:} An instance $(\cV, T)$, where 
	\begin{itemize}
		\item $\cV = \{V_1, \ldots, V_n\}$, such that each $V_i \in \cV$ is a $d$-dimensional vector, i.e., $V_i \in \bbZp^d$.
		\item $T \in \bbZp^d$ is the $d$-dimensional \emph{target vector}.
	\end{itemize}
	\textbf{Question:} Does there exist a subset $U \subseteq \cV$ such that $\sum_{V_i \in \cU} V_i = T$?
\end{tcolorbox}
Although it is folklore that \mdss (MDSS) is \WOH parameterized by the dimension $d$, we are unable to find a reference for this result. Thus, we give a reduction from \csp to MDSS, where the former problem defined as follows. 

\begin{tcolorbox}[colback=white!5!white,colframe=gray!75!black]
	\csp
	\\\textbf{Input:} An instance $(X, \cC)$, where
	\begin{itemize}
		\item $X = \{x_1, x_2, \ldots, x_k\}$ is a set of $k$ variables.
		\item $\cC$ is a family of \emph{constraints}, where each constraint $C_{i, j} \in \cC$ corresponds to some $(x_i, x_j) \in \binom{X}{2}$, such that $C_{i, j} \in [n] \times [n]$ is the set of \emph{allowed pairs}.
	\end{itemize}
	\textbf{Question:} Does there exist an assignment $f: X \to [n]$, such that for each $C_{i, j} \in \cC$ corresponding to $(x_i, x_j)$, it holds that $(f(x_i), f(x_j)) \in C_{i, j}$?
\end{tcolorbox}
We have the following result from \cite{Marx10,LokshtanovR0Z20}. 

\begin{proposition}[\cite{Marx10,LokshtanovR0Z20}] \label{prop:csp-hard}
	\csp is \WO-hard parameterized by $k = |X|$, even in the special case where each variable $x_i \in X$ appears in $2$ or $3$ constraints. 
\end{proposition}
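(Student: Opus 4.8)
The plan is to give an \FPT{} reduction from \textsc{Clique}, which is \WO-hard parameterized by the size $\ell$ of the sought clique, in two stages: a ``vanilla'' reduction producing a \csp{} instance with exactly $\ell$ variables, followed by a gadget that forces the number of constraints per variable down into $\{2,3\}$. For the vanilla reduction, given a graph $G$ on vertex set $[n]$ and an integer $\ell$, I would create variables $x_1, \ldots, x_\ell$, each with domain $[n]$, and for every pair $1 \le i < j \le \ell$ add the constraint $C_{i,j} = \{(u,v) \in [n] \times [n] : uv \in E(G)\}$. Since $G$ has no self-loops, any satisfying assignment $f$ uses pairwise distinct values, so $\{f(x_1), \ldots, f(x_\ell)\}$ is an $\ell$-clique of $G$; conversely every $\ell$-clique yields a satisfying assignment. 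This already gives \WO-hardness parameterized by $|X|$, but each variable here lies in $\ell-1$ constraints.

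To bring the degree down, I would replace each variable $x_i$ by a cycle of copies. Set $m \coloneqq \max(\ell-1,3)$, and for each $i$ introduce copies $x_i^1, \ldots, x_i^m$ together with the equality constraints $x_i^a = x_i^{a+1}$ for $a \in [m]$ (indices modulo $m$; here ``$x = y$'' abbreviates the binary constraint $\{(a,a) : a \in [n]\}$). I would then distribute the at most $\ell-1$ original constraints incident to $x_i$ among its copies so that each copy hosts at most one of them, and likewise for $x_j$; a constraint $C_{i,j}$ is placed between the chosen copy of $x_i$ and the chosen copy of $x_j$. Now every copy participates in the two equality constraints of its cycle and in at most one original constraint, hence in exactly $2$ or $3$ constraints, and there is at most one constraint per pair of copies because $m \ge 3$ precludes a double edge within a cycle. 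The new number of variables is $\ell m = O(\ell^2)$, a function of $\ell$ only, so this is a legitimate \FPT{} reduction (and the tiny values of $\ell$ pose no problem).

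Correctness is then a short argument: the equality cycle forces all copies of a fixed $x_i$ to share a common value, so the satisfying assignments of the new instance are in bijection with those of the vanilla instance, which correspond to $\ell$-cliques of $G$. I expect the only real friction to be the bookkeeping that simultaneously guarantees that \emph{every} variable lands with degree in $\{2,3\}$ and that no pair of variables receives two constraints; padding all equality cycles to length at least $3$ and distributing the original constraints one-per-copy handles both at once. (If one is willing to cite more, the bounded-degree primal graph is essentially built into Marx's treewidth lower bounds~\cite{Marx10} and the \textsc{Grid Tiling} machinery~\cite{LokshtanovR0Z20}, so the reduction above is just the self-contained version of what is already in the literature.)
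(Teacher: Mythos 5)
The paper does not prove this proposition at all; it is imported as a black box from \cite{Marx10,LokshtanovR0Z20}, where it falls out of the \textsc{Grid Tiling} / partitioned subgraph isomorphism machinery. Your proposal is a correct, self-contained alternative. The vanilla \textsc{Clique} reduction (one variable per clique slot, edge-relation constraint on every pair) is the standard proof that \csp is \WOH in the number of variables, and your degree-reduction gadget works: with $m=\max(\ell-1,3)$ copies of each variable arranged in an equality cycle, each copy sits in exactly two equality constraints, the at most $\ell-1$ original constraints at $x_i$ can be spread one-per-copy, the condition $m\ge 3$ rules out a repeated pair inside a cycle, and the cycle of equalities propagates a single value to all copies, so satisfying assignments biject with those of the vanilla instance. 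The parameter grows only to $O(\ell^2)$, which is all that \WO-hardness requires. The one substantive difference from the cited route is quantitative: your quadratic blow-up in the number of variables would degrade an ETH-based $n^{o(k)}$ lower bound to $n^{o(\sqrt{k})}$, whereas the constructions in \cite{Marx10,LokshtanovR0Z20} preserve the parameter linearly; since the paper only invokes \WO-hardness (to rule out $f(c)n^{\cO(1)}$ algorithms under $\FPT\neq\WO$), your argument fully suffices for its purposes.
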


Our reduction from \csp (BCSP) to MDSS shows that the latter problem is \WO-hard even when the integer entries in each vector are bounded by a polynomial in $n$. Then, we reduce MDSS to \fb in two steps. As the first step, given an instance $(\cV, T)$ of MDSS, we reduce it to a special case, which we call \textsc{Multi-Dimensional Partition} (MDP), where the target vector $T$ is exactly half of the sum of entries along each dimension. Then, we reduce MDP to \fb, showing the \WO-hardness of the latter problem.

\subsection*{Reduction from BCSP to MDSS} Given an instance $(X, \cC)$ of \csp satisfying the property from \Cref{prop:csp-hard}, we reduce it to an instance of \mdss as follows. For each $x_i \in X$, let $d_i = |\cC(x_i)|$ be the number of constraints $x_i$ appears in. Note that $d_i \in \LR{2, 3}$. Let $d = \sum_{i = 1}^k d_i$, and note that $d \le 3k$. Let us define three large integers as follows: $N = 100n, A = 60n$ and $B = 40n$.

Now, we will define a set of $d$-dimensional vectors. To this end, we number the dimensions from $1$ to $d$. For each pair $(x_i, C_{i, j})$, where $C_{i, j} \in \cC(x_i)$, we associate a distinct dimension from $[d]$, which we identify by the pair $(x_i, C_{i, j})$. Note that this association is a bijection. Now, for each $x_i \in X$, and each $a \in [n]$, we add a vector $V(x_i, a)$ that has entries $A+a$ in the dimensions associated with pairs $(x_i, \cdot)$, and $0$ everywhere else. Note that the vector $V(x_i, a)$ has either two or three non-zero entries. 


Now, we add a few more vectors to $\cV$. For each $C_{i, j} \in \cC$, and each allowed pair $(a, b)$, such that the first coordinate corresponds to $x_i$ and second coordinate corresponds to $b$, we add a vector $V(C_{i, j}, a, b)$ to $\cV$, as follows. The vector $V(C_{i, j}, a, b)$ has entry $B-a$ in the dimension corresponding to $(x_i, C_{i, j})$, entry $B-b$ in the dimension corresponding to $(x_j, C_{i, j})$, and $0$ everywhere else. Note that the vectors in $\cV$ have non-negative entries. Finally, the target vector $T$ is defined as the $d$-dimensional vector with $N = A+B$ in each dimension.

\textbf{Forward direction.} Suppose $(X, \cC)$ is a yes-instance, then let $f: X \to \cC$ be an assignment that is compatible with every constraint. We show that there exists a subset $\cU \subseteq \cV$ such that $\sum_{V \in \cU} V = T$. For each $(x_i, C_{i, j})$, we add the vector $V(x_i, f(x_i))$ to $\cU$. Furthermore, for each $C_{i, j} \in \cC$, we add the vector $(C_{i, j}, f(x_i), f(x_j))$ to $\cU$. Now, we claim that the vectors in $\cU$ add up to $T$. Indeed, consider the dimension corresponding to $(x_i, C_{i, j})$, and note that only the vectors of the form $V(x_i, \cdot)$ and $V(C_{i, j}, \cdot, \cdot)$ can have non-zero entries in these dimensions. Now, $\cU$ contains only two such vectors: namely $V_1 = V(x_i, f(x_i))$, and $V_2 = V(C_{i, j}, f(x_i), f(x_j))$. Recall that $V_1$ and $V_2$ have entries $A+f(x_i)$ and $B-f(x_i)$ in this dimension, which implies that their sum is $A+B = N$. This concludes the forward direction.

\textbf{Reverse direction.} Let $\cU \subseteq \cV$ be a subset of vectors that add up to $T$. We first claim that, for every $x_i$, the set $\cU$ must contain exactly one vector of the form $V(x_i, \cdot)$. Indeed, consider the dimension corresponding to $(x_i, C_{i, j})$ for some $C_{i, j} \in \cC(x_i)$, and note that the entries in the vectors in $\cU$ add up to exactly $N$ along this dimension. We consider several cases.
\begin{enumerate}
	\item If $\cU$ contains at least $2$ vectors of the form $V(x_i, \cdot)$, then the sum of the entries in this coordinate is at least $2A > N$. Since all vectors have non-negative coordinates, this is a contradiction.
	\item If $\cU$ contains no vector of the form $V(x_i, \cdot)$, then we consider different cases. If $\cU$ contains at most two vectors of the form $V(C_{i, j}, \cdot, \cdot)$, then their sum in this coordinate is at most $2B < N$, which is a contradiction. Suppose $\cU$ contains $q \ge 3$ vectors of the form $V(C_{i, j}, \cdot, \cdot)$, then their sum in this coordinate is $qB - t$, where $t \le qn$. Therefore, $qB - t \ge = 39n \cdot q > N$, for any $q \ge 3$, which is also a contradiction.
\end{enumerate}
Thus, $\cU$ contains exactly one vector of the form $V(x_i, \cdot)$. Now, we show that for each pair $(x_i, C_{i, j})$, it also contains exactly one vector of the form $V(C_{i, j}, \cdot, \cdot)$. Again, consider the dimension corresponding to $(x_i, C_{i, j})$, and consider different cases.
\begin{enumerate}
	\item If $\cU$ contains zero vectors of the form $V(C_{i, j}, \cdot, \cdot)$, then the sum of the entries along this dimension is $A + a$, for some $0 \le a \le n$. However, $N - A = 60n \gg n \ge a$, which is a contradiction.
	\item If $\cU$ contains $q \ge 2$ vectors of the form $V(C_{i, j}, \cdot, \cdot)$, then the sum of the entries along this dimension is at least $A + a + qB - qn = A + qB - qn = A + q(39n) > N$ for any $q \ge 2$. This is also a contradiction. 
\end{enumerate}
Thus, $\cU$ contains exactly one vector of the form $V(x_i, \cdot)$, say $(x_i, a)$. We set $f(x_i) = a$.
This vector contains $A+a$ in the dimension corresponding to $(x_i, C_{i, j})$ for each $C_{i, j} \in \cC(x_i)$. For this $C_{i, j}$, $\cU$ also contains exactly one vector of the form $V(C_{i, j}, \cdot, \cdot)$. It follows that this vector must be $V(C_{i, j}, a, \cdot)$, since the entries along the dimension corresponding to $(x_i, C_{i, j})$ must add up to $A+B$. Therefore, $f(x_i) = a$ is compatible with $C_{i, j}$. It follows that the instance $(X, \cC)$ of \csp admits a satisfying assignment. 

We conclude with the following result.

\begin{proposition}\label{prop:subsetsum-hard}
	\mdss is \WO-hard parameterized by $d$, the number of dimensions, even with the following assumptions:
	\begin{enumerate}
		\item All the entries in the vectors are non-negative integers bounded by a polynomial in the number of vectors,
		\item Target vector is $c \mathbf{1}$ for some positive integer $c$, which is bounded by a polynomial in $n$, such that $(2c) \cdot \mathbf{1} \le \sum_{V \in \cV} V$.
	\end{enumerate}
\end{proposition}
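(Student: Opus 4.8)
The reduction from \csp to \mdss described above, combined with \Cref{prop:csp-hard}, is already a parameterized reduction witnessing that \mdss is \WOH parameterized by the number of dimensions: since each \csp variable appears in two or three constraints, the constructed instance has $d = \sum_{i=1}^{k} d_i$ dimensions with $2k \le d \le 3k$, so the parameter is bounded by a function of $k$, and every vector together with the target can be written down in time polynomial in $n$ and $k$. The plan for the proof is therefore simply to read off, from the instance that this reduction produces, the two structural restrictions claimed in the statement, and to argue that the \csp domain may be assumed to be of at least constant size.

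For restriction (1), I would observe that every coordinate of every constructed vector lies in $\LR{0} \cup \LR{B-n, \ldots, B-1} \cup \LR{A+1, \ldots, A+n}$, which (as $B \ge n$) consists of non-negative integers of value at most $A + n = 61n$; since the instance contains at least $kn \ge n$ vectors of the form $V(x_i, a)$, this bound is at most $61$ times the number of vectors, hence polynomial in it. For the first half of restriction (2), the target vector is $N\mathbf{1}$ with $N = A + B = 100n$, which is again polynomial in $n$ and hence in the number of vectors, so we take $c = N$.

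The only part needing a genuine (if short) argument is the inequality $(2c)\,\mathbf{1} \le \sum_{V \in \cV} V$. I would fix an arbitrary dimension, which by construction corresponds to a unique pair $(x_i, C_{i,j})$, and note that the $n$ vectors $V(x_i, 1), \ldots, V(x_i, n)$ each contribute at least $A + 1 = 60n + 1$ to this coordinate while all remaining vectors contribute non-negatively; hence the coordinate-sum of $\sum_{V \in \cV} V$ in this dimension is at least $n(60n+1) \ge 60n^2$, which is at least $2c = 200n$ whenever $n \ge 4$. To handle the cases $n \le 3$, I would use that \csp restricted to instances of bounded domain size is fixed-parameter tractable (brute force over all assignments), so \csp with the structural restriction of \Cref{prop:csp-hard} remains \WOH on instances with $n \ge 4$; equivalently, one can preprocess any \csp instance by adjoining a constant number of dummy domain values occurring in no allowed pair, which preserves both satisfiability and the property that each variable appears in two or three constraints, while forcing $n \ge 4$. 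Under this assumption the displayed inequality holds in every coordinate, completing the verification.

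I expect the only mild obstacle to be precisely this last point: one must ensure that $c$ is simultaneously polynomially bounded \emph{and} small relative to the column sums of $\sum_{V \in \cV} V$, which is what necessitates the small preprocessing step guaranteeing that the \csp domain is not of bounded size. Everything else is a direct inspection of the reduction already presented, together with the hardness input from \Cref{prop:csp-hard}.
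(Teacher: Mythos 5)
Your proposal is correct and follows essentially the same route as the paper: the proposition is obtained by inspecting the BCSP-to-MDSS reduction already given, noting that all entries lie in $\{0\}\cup\{B-n,\dots,B-1\}\cup\{A+1,\dots,A+n\}$ and that the target is $N\mathbf{1}$ with $N=100n$ polynomial in the (at least $kn$) vectors. Your extra care in verifying $(2c)\mathbf{1}\le\sum_{V\in\cV}V$ for small domain sizes (via the observation that bounded-domain \csp is trivially \FPT, so one may assume $n\ge 4$) is a detail the paper leaves implicit, and it is handled correctly.
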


\subsection*{Reduction from MDSS to \fb} Let $(\cV, T)$ be an instance satisfying the properties stated in \Cref{prop:subsetsum-hard}. Now, we reduce \mdss to \mdp, a special case of \mdss, where the target vector is exactly the half of the sum of each coordinate. To achieve this, first we add a vector $V'$ to $\cV$, whose entries are chosen from $\{c+1, c+2\}$ of appropriate parity, such that all the entries in the vector $S$ obtained by adding all the vectors in (the new) $\cV$, are even. Note the addition of $V'$ does not change a yes-instance to a no-instance or vice versa, and properties 1 and 2 from \Cref{prop:subsetsum-hard} continue to hold. Now, we add a vector $S - 2T \ge \mathbf{0}$ to $\cV$, and change the target vector to $S/2$. It is straightforward to see that the resulting instance is equivalent to the original instance of \mdss, which we refer to as an instance of \mdp.

Let $\cV$ be the set of vectors that are input to \mdp (note that the target vector is automatically defined). We give a parameter-preserving reduction to \fb, as follows. For a vector $V_i \in \cV$, let $V_i[j]$ denote the entry in the $j$th coordinate, where $1 \le j \le d$. Note that $V_i[j]$ is a non-negative integer. Let $s_i = \sum_{j = 1}^d V_i[j]$. For the vector $V_i$, we proceed as follows. For each $1 \le j \le d$, let $C_{i, j}$ be a set of $V_i[j]$ many distinct vertices of \emph{color} $j$. Let $C_i = \bigcup_{j = 1}^d C_{i, j}$. We select an arbitrary vertex $u \in C_i$, and connect all the remaining vertices in $C_i$ to $u$ such that the resulting graph $G_i$ is a star centered at $u$. For each $V_i \in \cV$, we create graphs $G_i$ in this manner. Finally, let $G$ denote the disjoint union of all graphs $G_i$ defined in this manner. It is easy to see that a fair bisection of cut-size $0$ exists iff $\cV$ is a yes-instance of \mdp. This establishes the following theorem.

\begin{restatable}{theorem}{hardnessthm}\label{thm:fair-bisection-hardness}
	\fb is \WOH parameterized by the number of colors $c$, even when $k$, the cut-size is zero. 
\end{restatable}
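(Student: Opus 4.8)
The plan is to prove the theorem by assembling a chain of polynomial-time, parameter-preserving reductions, starting from a problem already known to be \WOH. The natural starting point is \csp, which by Proposition~\ref{prop:csp-hard} is \WOH parameterized by the number of variables $k$ even when every variable occurs in two or three constraints. The chain I would build is
\[
\csp \;\longrightarrow\; \mdss \;\longrightarrow\; \mdp \;\longrightarrow\; \fb ,
\]
where \mdp is the special case of \mdss in which the target is exactly half of the coordinatewise sum; the point of routing through \mdp is that it forces $2r_i = c_i$ in the final instance, giving the stronger statement that \fb is hard even when each color must be split evenly. Throughout, the number of colors $c$ of the final \fb instance will be bounded by a function of $k$ and the cut-size will be forced to be $0$.

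For the first link, given a \csp instance $(X,\cC)$ I would introduce one dimension for every incident pair $(x_i, C_{i,j})$, so that the dimension $d = \sum_i d_i \le 3k$ becomes the new parameter. The choice ``$x_i \mapsto a$'' is encoded by a vector that is $A+a$ on the (two or three) dimensions of $x_i$ and $0$ elsewhere; an allowed pair $(a,b)$ of a constraint $C_{i,j}$ is encoded by a vector that is $B-a$ and $B-b$ on the two relevant dimensions and $0$ elsewhere; the target is $N\mathbf{1}$ with $N=A+B$. Choosing $A$, $B$, $N$ linear in $n$ and suitably separated (e.g.\ $A=60n$, $B=40n$, $N=100n$), a counting argument on a single coordinate forces any feasible selection $\cU$ to contain exactly one ``variable'' vector per variable and exactly one ``constraint'' vector per constraint, and matching the two coordinates of a constraint forces the chosen value of $x_i$ to be compatible with $C_{i,j}$; conversely a satisfying assignment yields such a $\cU$. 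This yields Proposition~\ref{prop:subsetsum-hard}: \WO-hardness of \mdss parameterized by $d$, with all entries bounded by a polynomial in $n$ and the target a scalar multiple of $\mathbf{1}$ that is at most half the coordinatewise sum.

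For the last two links I would first pad the \mdss instance into an equivalent \mdp instance: add one vector to make every coordinate sum even, then add the slack vector $S-2T\ge\mathbf{0}$ and reset the target to $S/2$ — this changes neither the parameter nor the yes/no answer and keeps everything polynomial. Then I would turn a \mdp instance into a \fb instance by building, for each vector $V_i$, a disjoint star $G_i$ containing exactly $V_i[j]$ vertices of color $j$ for each $j\in[d]$, and taking $G$ to be the disjoint union of the $G_i$. Since each $G_i$ is connected, any edge cut of order $0$ must place each $G_i$ entirely on one side, so order-$0$ cuts correspond bijectively to subsets $\cU$ of the vectors, and the number of color-$j$ vertices on side $A$ equals $\sum_{V_i\in\cU} V_i[j]$. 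Setting $\sr$ to the \mdp target makes an $\sr$-fair order-$0$ cut precisely a \mdp solution; moreover $2r_j = S[j] = c_j$, and the number of colors is $d\le 3k$, the cut-size is $0$, and the whole construction is polynomial.

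The technically delicate step is the \csp-to-\mdss reduction: everything hinges on choosing the three ``magic'' constants so that the one-coordinate case analysis (too many variable vectors $\Rightarrow$ the sum is $\ge 2A > N$; no variable vector $\Rightarrow$ either $\le 2$ constraint vectors give sum $\le 2B < N$ or $q\ge 3$ of them give sum $\ge q(B-n) > N$; and the analogous dichotomy once a single variable vector is present) leaves exactly the intended selections, while simultaneously keeping all entries polynomial in $n$ — the latter being essential so that the number of vertices in the final \fb instance, and the target values $r_i$, stay polynomial and the reduction runs in polynomial time. The restriction that each variable appears in only two or three constraints is used here both to bound $d\le 3k$ and to keep the ``$q\ge 3$'' bookkeeping clean.
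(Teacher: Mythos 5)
Your proposal is correct and follows essentially the same route as the paper: the chain \csp $\to$ \mdss $\to$ \mdp $\to$ \fb, with the identical choice of constants $A=60n$, $B=40n$, $N=100n$, the same single-coordinate counting argument, the same parity/slack padding to reach \mdp, and the same disjoint-star gadget exploiting that an order-$0$ cut cannot split a connected component. No substantive differences to report.
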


\section{Conclusion} \label{sec:conclusion}
In this paper, we initiated the study of \fb from the perspective of parameterized algorithms. We showed that the problem is \WOH parameterized by the number of colors $c$, even when $k = 0$; thus, we cannot hope to generalize the \FPT algorithm to \fb with a running time of the form $f(k, c) \cdot n^{\cO(1)}$. On the other hand, the known $2^{\cO(k \log k)} \cdot n^{\cO(1)}$ algorithm for \mb (\cite{CyganKLPPSW21,CyganLPPS19}) extends to \fb in a straightforward manner with running time $2^{\cO(k \log k)} \cdot n^{\cO(c)}$. Our main result is that \fb admits an \FPT{}-approximation algorithm 
that finds an $(\epsilon, {\sf r})$-fair bisection in time $2^{\cO(k \log k)} \cdot \lr{\frac{c}{\epsilon}}^{\cO(c)} \cdot n^{\cO(1)}$. In fact, by setting $\epsilon = 1/(2n)$, we can obtain the previously mentioned exact algorithm as a corollary.

We note that our approximation algorithm also works in the setting where a vertex can belong to multiple color classes. Furthermore, our technique can be extended to \textsc{Fair $q$-section} problem, where we want to partition the vertex set into $q$ parts such that (i) at most $k$ edges with endpoints in different parts, and (ii) each part has proportional representation from each color -- here, the algorithm will have an \textsf{XP} dependence on $q$. 

Our main conceptual contribution is the observation that it is possible to design parameterized approximation algorithms by applying the technique of Lampis~\cite{Lampis14} to dynamic programming algorithms over tree decompositions with unbreakable bags. Towards this goal we designed an algorithm that given a graph $G$ and integer $k$ computes a $(9k, k)$-unbreakable tree decomposition of $G$ with logaritmic depth and adhesions of size at most $8k$ in time $2^{\cO(k \log k)} n^{\cO(1)}$. We expect that this will be a useful tool for obtaining parameterized approximation algorithms for other problems by using Lampis~\cite{Lampis14}-style dynamic programming over tree decompositions with unbreakable bags.



\bibliography{refs}



\end{document}